\newcommand{\Tr}{\mathrm{Tr}}
\newcommand{\mc}[1]{\mathcal{#1}}
\newcommand{\mf}[1]{\mathfrak{#1}}
\newcommand{\mr}[1]{\mathrm{#1}}
\newcommand{\ud}{\,\mathrm{d}}
\newcommand{\norm}[1]{\lVert#1\rVert}
\newcommand{\RR}{\mathbb{R}}
\global\long\def\ve{\varepsilon}
\global\long\def\R{\mathbb{R}}
\global\long\def\Rn{{\mathbb{R}^{N}}}
\global\long\def\E{\mathbb{E}}
\global\long\def\ra{\rightarrow}
\global\long\def\smooth{C^{\infty}}
\global\long\def\symm{\mathcal{S}^N}
\global\long\def\psd{\mathcal{S}^N_{+}}
\global\long\def\pd{\mathcal{S}^N_{++}}
\global\long\def\dom{\mathrm{dom}\,}
\global\long\def\intdom{\mathrm{int}\,\mathrm{dom}\,}
\global\long\def\Tr{\mathrm{Tr}}
\numberwithin{equation}{section}
\numberwithin{figure}{section}
\newtheorem{thm}{\protect\theoremname}
\newtheorem{lem}[thm]{\protect\lemmaname}
\newtheorem{rem}[thm]{\protect\remarkname}
\newtheorem{prop}[thm]{\protect\propositionname}
\newtheorem{cor}[thm]{\protect\corollaryname}
\newtheorem{assumption}[thm]{Assumption}
\newtheorem{defn}[thm]{Definition}
\newtheorem{notation}[thm]{Notation}
\newtheorem{fact}[thm]{Fact}
\numberwithin{thm}{section}
\providecommand{\corollaryname}{Corollary}
\providecommand{\lemmaname}{Lemma}
\providecommand{\propositionname}{Proposition}
\providecommand{\remarkname}{Remark}
\providecommand{\theoremname}{Theorem}
\title{Bold Feynman diagrams and the Luttinger-Ward formalism via Gibbs
measures \vspace{1.5 mm} \\Part II:  Non-perturbative analysis}
\author{
Lin Lin\thanks{Department of Mathematics, University of California, Berkeley, Berkeley, CA 94720 and Computational Research Division, Lawrence Berkeley National Laboratory, Berkeley, CA 94720. Email: \texttt{linlin@math.berkeley.edu}}
\and Michael Lindsey\thanks{Department of Mathematics, University of California, Berkeley, Berkeley, CA 94720. Email: \texttt{lindsey@math.berkeley.edu}}
}
\begin{document}

\maketitle

\begin{abstract}
Many-body perturbation theory (MBPT) is widely used in quantum physics,
chemistry, and materials science. At the heart of MBPT is the Feynman
diagrammatic expansion, which is, simply speaking, an elegant way of
organizing the combinatorially growing number of terms of a
certain Taylor expansion. In particular, the construction of the `bold Feynman 
diagrammatic expansion' involves the partial resummation to infinite
order of possibly divergent series of diagrams. 
This procedure demands investigation from both the combinatorial
(perturbative) and the analytical (non-perturbative) viewpoints. In Part
II of this two-part series, we approach the analytical 
investigation of the bold diagrammatic expansion in the
simplified setting of Gibbs measures (known as the Euclidean lattice
field theory in the physics literature). 
Using non-perturbative methods, we rigorously construct the Luttinger-Ward formalism 
for the first time, and we prove that the bold diagrammatic 
series can be obtained directly via an asymptotic expansion
of the Luttinger-Ward functional, circumventing the partial
resummation technique. Moreover we prove that the Dyson equation can be derived
as the Euler-Lagrange equation associated with a variational problem
involving the Luttinger-Ward functional. We also establish a number of
key facts about the Luttinger-Ward functional, such as its
transformation rule, its form in the setting of the impurity problem,
and its continuous extension to the boundary of the domain of physical
Green's functions. 
\end{abstract}

\begin{keywords}
  Many-body perturbation theory, Feynman diagram, Bold Feynman diagram,
  Gibbs measure, Gaussian integral, 
  Luttinger-Ward formalism, Green's function, Self-energy, Free energy
\end{keywords}

\begin{AMS}
  81T18,81Q15,81T80,65Z05
\end{AMS} 

\pagestyle{myheadings}
\thispagestyle{plain}

%%%%%%%%%%%%%%%%%%%%%%%%%%%%%%%%%%%%%%%%%%%%%%%%%%%%%%%%%%%%%%%%%%%%%%%%%%%%%%%%
\section{Introduction}\label{sec:intro}

The bold Feynman diagrammatic expansion of 
many-body perturbation theory (MBPT), along with the
many practically used methods in quantum chemistry and condensed matter physics that derive from it, can be formally derived
from the Luttinger-Ward (LW)\footnote{The Luttinger-Ward formalism
is also known as the Kadanoff-Baym
formalism~\cite{BaymKadanoff1961} depending on the context. In this paper we always use
the former.} 
formalism~\cite{LuttingerWard1960}. Since its original proposal in 1960,
the LW formalism has found widespread applicability~\cite{DahlenVanVon2005,Ismail-Beigi2010,BenlagraKimPepin2011,RentropMedenJakobs2016}. 
However, the LW formalism and the LW functional are defined only formally, and this shortcoming poses serious questions
both in theory and in practice. Indeed, the very existence
of the LW functional in the setting of fermionic systems is under debate, with numerical evidence to the
contrary appearing in the past few
years~\cite{KozikFerreroGeorges2015,Elder2014,TarantinoRomanielloBergerEtAl2017,GunnarssonRohringerSchaeferEtAl2017}
in the physics community.

This paper is the second of a two-part series and expands on the work 
in~\cite{LinLindsey2018} that preceded this series.  In Part I, we provided a 
self-contained explanation of MBPT in the setting of the \emph{Gibbs
model} (alternatively known as the `Euclidean lattice field theory' in
the physics literature).  In this setting one is interested in the evaluation of the moments of 
certain Gibbs measures.  While the exact computation
of such possibly high-dimensional integrals is intractable 
in general,
important exceptions are the Gaussian integrals, i.e., integrals for the moments of 
a Gaussian measure, which can be evaluated exactly. 
Perturbing about a 
reference system given by a Gaussian measure, one can 
evaluate quantities of interest 
by a series expansion of Feynman diagrams, which correspond to certain moments of Gaussian measures.
For a specific form of quartic interaction that we refer to as the
\emph{generalized Coulomb interaction}, such a perturbation theory enjoys a
correspondence with the Feynman diagrammatic expansion for the quantum
many-body problem with a two-body
interaction~\cite{NegeleOrland1988,AmitMartin-Mayor2005,AltlandSimons2010}. 
The generalized Coulomb interaction is also of interest in its own right and includes, e.g., the
(lattice) $\varphi^{4}$
interaction~\cite{AmitMartin-Mayor2005,Zinn-Justin2002}, as a special case. 
The combinatorial study of its perturbation theory was the goal of 
Part I.
Nonetheless, the techniques of Part I, and MBPT more broadly, are more
generally applicable to various types of field theories and
interactions.  

The culmination of the developments of Part I is 
the bold diagrammatic expansion, which 
is obtained formally
via a partial resummation technique which sums possibly divergent series
of diagrams to infinite order.
Indeed, the main technical contribution of Part I was to place the 
combinatorial side of this procedure 
on firm footing. One motivation for Part II is to 
interpret the bold diagrams analytically, which we 
accomplish by first constructing the LW
formalism. In fact this construction is non-perturbative 
and valid for rather general forms of interaction. 
Below we focus on the contributions and organization of Part II only.

\subsection{Contributions}
The main contribution of Part II is to establish the LW formalism
rigorously for the first time, in the context of Gibbs measures.
In this setting, the role of the Green's function is
assumed by the two-point correlator.

The construction of the LW functional proceeds via concave duality,
in a spirit similar to that of the Levy-Lieb
construction in density functional theory~\cite{Levy1979,Lieb1983} at
zero temperature and the Mermin functional~\cite{Mermin1965} at finite
temperature, as well as the density matrix functional theory
developed
in~\cite{Baerends2001,SharmaDewhurstLathiotakisEtAl2008,BloechlPruschkePotthoff2013}. 
With careful interpretation, this duality gives 
rise to a one-to-one correspondence between non-interacting and
interacting Green's functions. 
  The LW formalism yields a variational interpretation of the
Dyson equation.
To wit, the free energy can be expressed variationally as a
minimum over all physical Green's functions, and the self-consistent
solution of the Dyson equation yields its unique global minimizer.
We also prove a number of useful properties of the LW functional, such as the
transformation rule, the projection rule, and the continuous extension of the LW functional 
to the boundary of its domain, which can be interpreted as the domain of physical Green's functions. In particular, this last property suggests a novel interpretation of the LW functional as the non-divergent part of the concave dual of the free energy. 
These results allow us to interpret the appropriate analogs of quantum
impurity problems in our simplified setting.  In particular, we prove
that the self-energy is always a sparse matrix for impurity problems,
with nonzero entries appearing only in the block corresponding to the
impurity sites.  Such a result is at the foundation of numerical
approaches such as the dynamical mean field theory (DMFT)~\cite{GeorgesKotliarKrauthEtAl1996,KotliarSavrasovHauleEtAl2006}.

We prove that the bold diagrams for the generalized 
Coulomb interaction can be obtained as asymptotic series
expansions of the LW and self-energy functionals, circumventing the
formal strategy of performing resummation to infinite order.  The proof
of this fact proceeds by proving the existence of such series
non-constructively and then employing the combinatorial results of Part
I to ensure that the terms of these series are in fact given by the bold
diagrams. 

Although the bold diagrammatic expansion (evaluated in terms of the
interacting Green's function, which is always defined) appears to be
applicable in cases where the non-interacting Green's function is
ill-defined, we demonstrate that caution should be exercised in practice
in such cases.  Using a one-dimensional example, we demonstrate that the
approximate Dyson equation obtained via a \emph{truncated} bold
diagrammatic expansion may yield solutions with large error in the
regime of vanishing interaction strength or fail to admit solutions at
all.

\subsection{Outline}
In section \ref{sec:prelim} we review preliminary 
material and definitions needed to understand the results of this paper.

Section \ref{sec:luttingerward} concerns 
the construction of the LW formalism, beginning with 
a discussion of the 
the variational formulation of the free energy and the relevant 
concave duality (section \ref{sec:variation}). This 
is followed by the introduction of the LW functional 
and the Dyson equation (section \ref{sec:lw}). 
Then we introduce several key properties of the LW functional: 
the transformation rule (section \ref{sec:transformation}); the projection rule, accompanied by a discussion of impurity problems (section \ref{sec:projection}); and the 
continuous extension property (section \ref{sec:continuousState}). The proof of the 
continuous extension property, which is the most 
technically demanding part of the paper, is 
postponed to section \ref{sec:boundary}, which 
has its own outline.

Section \ref{sec:bolddiagram} 
concerns the bold diagrammatic expansion.
In section \ref{sec:asymptotic} we prove the 
existence of asymptotic series for the LW 
functional and the self-energy, and in section 
\ref{sec:lwbold} we relate the coefficients of 
the former to the latter. Then for the rigorous 
development of the bold diagrammatic expansion, 
it only remains at this point to prove 
that the asymptotic series for the self-energy 
matches the bold diagrammatic expansion of Part I. 
This is the most involved task of section 
\ref{sec:bolddiagram}. 
In section \ref{sec:p1Review}, we review the 
results that we need from
Part I in a `diagram-free' way 
that should be understandable to the reader 
who has not read Part I, and in 
section \ref{sec:selfenergybold}, we 
establish the claimed correspondence. 
Finally, in section \ref{sec:failbold} 
we illustrate the aforementioned warning about the 
truncation of the bold diagrammatic series 
in cases where the non-interacting 
Green's function is ill-defined.

Relevant background material on convex analysis and the weak convergence of measures is collected in Appendices \ref{sec:convex} and \ref{sec:weakConv}, respectively. The 
proofs of many lemmas are provided in Appendix \ref{sec:appendix}, as noted in the text.

\section{Preliminaries}\label{sec:prelim}

In this section we discuss some preliminary definitions and notations.

\subsection{Notation and quantities of interest}
Throughout we shall let $\mathcal{S}^{N}$, $\mathcal{S}^{N}_+$, and $\mathcal{S}^{N}_{++}$
denote respectively the sets of symmetric, symmetric positive
semidefinite, and symmetric positive definite $N\times N$ real matrices. 
For simplicity we restrict our attention to real matrices, though analogous results can be obtained in the complex Hermitian case.

In this paper we will consider Gibbs measures defined by Hamiltonians $h:\R^N \ra \R \cup \{+\infty\}$ of the form 
\[
h(x) = \frac{1}{2} x^T A x + U(x),
\]
where $A \in \symm$.
The first term represents the quadratic or `non-interacting' part of the Hamiltonian, 
while the second term, $U$, represents the interaction. We define the partition function 
accordingly as  
\begin{equation}
  Z[A, U] = \int_{\RR^{N}} e^{-\frac12 x^{T} A x - U(x)}\ud x.
  \label{eqn:partition}
\end{equation}
For fixed interaction $U$, we may think of the partition function of $A$ alone, i.e., 
as $Z:\symm \ra \R$ sending $A\mapsto Z[A]$. In fact we adopt this perspective 
exclusively for the time being.

The free energy is then defined as a
mapping $\Omega:\mathcal{S}^{N} \rightarrow \R \cup \{-\infty\}$ via
\begin{equation}
\label{eq:OmegaDef}
\Omega[A]:= -\log Z[A] = -\log\int_{\Rn}e^{-\frac{1}{2}x^{T}Ax-U(x)}\,\ud x,
\end{equation}
We denote the \emph{domain} of $\Omega$ by
\[
\dom \Omega := \{A\in\symm\,:\,\Omega[A]>-\infty\},
\]
and the interior of the domain by $\mathrm{int}\,\dom \Omega$. As we will see, 
$\Omega$ is concave in $A$, and this notion of domain is the usual notion from 
convex analysis (see Appendix \ref{sec:convex}), and it is simply the set of 
$A$ such that the integral in Eq.~\eqref{eq:OmegaDef} is convergent.

For $A \in \intdom \Omega$, in fact the integrand in Eq.~\eqref{eq:OmegaDef} 
must decay exponentially, hence we can define
the two-point correlator 
(which we call the \emph{Green's function} by analogy with the 
quantum many-body literature) in terms of $A$ via 
\[
G_{ij}[A] := \frac{1}{Z[A]} \int_{\RR^{N}} x_i x_j \, e^{-\frac12 x^{T} A x
  - U(x)}\ud x,
\]
and the integral on the right-hand side is convergent. More compactly, 
we have a mapping $G : \intdom \Omega \ra \pd$ defined by 
\begin{equation}
  G[A] := \frac{1}{Z[A]} \int_{\RR^{N}} xx^T\, e^{-\frac12 x^{T} A x
  - U(x)}\ud x.
  \label{eqn:green}
\end{equation}
It is important to note that $G[A] \in \pd$ for all $A$.
As we shall see in section \ref{sec:luttingerward}, this constraint defines the domain of `physical' Green's 
functions, in a certain sense. In the discussion below, 
$G$ is also called the interacting Green's function.

In the case of the `non-interacting' Gibbs measure, where $U \equiv 0$,
 all quantities of interest can be computed exactly 
by straightforward multivariate integration. In particular, letting
 $G^{0}[A] := G[A;0]$, 
we have for $A\in \dom\Omega = \pd$ that
\begin{equation}
  G^0 [A] = A^{-1}.
  \label{eqn:gaussian}
\end{equation}
The neatness of this relation is that it motivates the factor of one half included in the 
quadratic part of the Hamiltonian. We refer to $G^0 [A]$ as the \emph{non-interacting 
Green's function}
 associated to $A$, whenever $A \in \pd$. Note that for a general 
interaction $U$, 
$\intdom\Omega$ may contain elements not in $\pd$. For such $A$ there is 
an associated (interacting) Green's function but not a non-interacting Green's function.

In general $G$ can be viewed as the \emph{gradient} of $\Omega$, 
for a suitably defined notion of gradient for functions of symmetric matrices, which 
we now define:
\begin{defn}
For $i,j=1,\ldots,N$, let $E^{(ij)} \in \symm$ be defined by $E^{(ij)}_{kl} = \delta_{ik}\delta_{jl} + \delta_{il}\delta_{jk}$.
For a differentiable function $f:\symm \ra \R$, define the gradient $\nabla f :\symm \ra \symm$ by
\[
\nabla_{ij} f = (\nabla f)_{ij} := \lim_{\delta\to 0}
\frac{f(A+\delta \cdot E^{(ij)})-f(A)}{\delta}.
\]
If $f$ is obtained by restriction from a function $f: \R^{N\times N} \ra \R$, then equivalently $\nabla_{ij}f = \frac{\partial f}{\partial X_{ij}} + \frac{\partial f}{\partial X_{ji}}$.
\label{def:grad1}
\end{defn}

Then on $\dom \Omega$ the gradient map $\nabla \Omega$ is given by 
\begin{equation}
 \nabla_{ij}\Omega[A]=\frac{1}{Z[A]}\int x_i x_j
 \,e^{-\frac{1}{2}x^{T}Ax-U(x)}\,\ud x,
  \label{eqn:gradomega}
\end{equation}
i.e., $G = \nabla \Omega$, as claimed. The notion of gradient of 
Definition \ref{def:grad1} is natural for our setting in that it yields this 
relation. However, it may seem a bit awkward when applied to specific 
computations. Indeed, consider
a function $X\mapsto f(X)$ on $\symm$ that is specified by a formula 
that can be applied to all $N\times N$ matrices and 
in which the roles of $X_{kl}$ and $X_{lk}$ are the same for all $l,k$. 
For instance, such a formula is given by $f(X) = \sum_{ij} X_{ij}^2$. Then 
the usual matrix derivative of $f$, considered as a function on 
$N\times N$ matrices, is given by $\frac{\partial f}{\partial X_{ij}}(X) = 2 X_{ij}$, 
whereas, viewing $f$ as a function on $\symm$ and with notation as specified 
in Definition \ref{def:grad1}, we have $\nabla_{ij} f (X) = 4 X_{ij}$. More generally 
in this situation we have $\nabla_{ij} = 2 \frac{\partial}{\partial X_{ij}}$. Since 
formulas like this arise from the bold diagrammatic expansion (as discussed in 
Part I of this paper), it is convenient then to further define:
\begin{defn}
For a differentiable function $f:\symm \ra \R$, define the matrix derivative $\frac{\partial f}{\partial X} :\symm \ra \symm$ by
\[
\frac{\partial f}{\partial X_{ij}} = \frac12 \nabla_{ij} f.
\]
\label{def:grad2}
\end{defn}
Moreover, this notion of derivative will yield the relation 
\[
\Sigma[G] = \frac{\partial \Phi}{\partial G},
\]
where $\Sigma$ is the self-energy and $\Phi$ is the LW functional, 
as was foreshadowed in Part I.

\subsection{Interaction growth conditions}
\label{sec:growthConditions}
Note that $\dom \Omega$ depends on the shape of $U(x)$.  For example, if $U(x)=0$,
then $\dom \Omega=\mathcal{S}^{N}_{++}$.  If $U(x)=\sum_{i=1}^{N}
x_{i}^4$, then $\dom \Omega=\mathcal{S}^{N}$. Our most basic condition on 
$U$ is the following:
\begin{defn}[Weak growth condition]
\label{def:Ugrowth}
A measurable function $U:\Rn\ra\R$ satisfies the weak growth condition,
if there exists a constant $C_U$ such that $U(x) + C_U( 1 + \Vert x\Vert^2)
\geq 0$ for all $x\in\Rn$, and $\dom \Omega$ is an open set.
\end{defn}

The weak growth condition of Definition~\ref{def:Ugrowth} specifies that
$U$ cannot decay to $-\infty$ faster than quadratically, which ensures in particular that $\dom \Omega$ is
non-empty. The assumption that $\dom \Omega$ is an open set (i.e., 
$\dom \Omega = \mathrm{int}\,\dom \Omega$) will be used later to
ensure that for fixed $U$ there is a one-to-one correspondence between $A$ and $G$ (hence also between non-interacting and interacting Green's functions) over suitable domains.

Note that the condition of 
Definition~\ref{def:Ugrowth} is weaker than the condition
\begin{equation}
  \frac12 x^{T} A x + U(x) \to +\infty, \quad \norm{x}\to +\infty
  \label{eqn:potrequire}
\end{equation}
For instance, if $N=2$ and $U(x)=x_{1}^4$,
then the weak growth condition is satisfied with
$C_{U}=0$, but Eq.~\eqref{eqn:potrequire} is not
satisfied for all $A\in \symm$. In fact, when $U(x)$ only depends on a subset of components of
$x\in \Rn$, we call the Gibbs model an \textit{impurity model} or
\textit{impurity problem}, in analogy with the impurity models of
quantum many-body physics \cite{MartinReiningCeperley2016}, and we call the subset of components
on which $U$ depends the \textit{fragment}. The flexibility of the weak
growth condition will allow us to rigorously establish the LW formalism
for the impurity model. In the setting of the impurity model, the
`projection rule' of Proposition \ref{prop:projection} then allows us to
understand the LW formalism of the impurity model in terms of the
lower-dimensional LW formalism of the fragment and to prove a special
sparsity pattern of the self-energy.

One of our main results (Theorem \ref{thm:contUpTo}) is that the LW functional, which is initially
defined on the set $\pd$ of physical Green's functions, can in fact be
extended continuously to the boundary of $\pd$, a fact which will not be
apparent from the definition of the LW functional. (In fact, this
extension shall be specified by an explicit formula involving
lower-dimensional LW functionals.) But in order for this result to hold,
we need to strengthen the weak growth condition to the following:
\begin{defn}[Strong growth condition]
\label{def:Ugrowth2}
A measurable function $U:\Rn\ra\R$ satisfies the strong growth
condition if, for any $\alpha \in \R$, there exists a constant $b\in \RR$ such that 
$U(x) + b \geq \alpha\Vert x\Vert^2$ for all $x\in \Rn$.
\end{defn}

Note that the strong growth condition ensures that $\dom \Omega =
\mathcal{S}^{N}$ and is hence an open set. If $U$ is a
polynomial function of $x$ and satisfies the strong growth condition,
then Eq.~\eqref{eqn:potrequire} will also be satisfied.

In Section \ref{sec:boundary} we will discuss the precise statement and
proof of the aforementioned continuous extension property. In addition,
a counterexample will be provided in the case where the weak growth
condition holds but the strong growth condition does not. In fact, the
continuous extension property is also valid for impurity models (which
do not satisfy the strong growth condition) via the projection rule
(Proposition \ref{prop:projection}), provided that the interaction
satisfies the strong growth condition when restricted to the fragment.

For the generalized Coulomb interaction considered in Part I, i.e., 
\begin{equation}
  U(x) = \frac{1}{8} \sum_{i,j=1}^{N} v_{ij} x_{i}^2 x_{j}^2,
  \label{eqn:Uterm}
\end{equation}
there is a natural condition on the matrix $v$ that ensures that 
$U$ satisfies the strong growth condition, namely that the matrix $v$ is positive definite. 
We will simply assume that this holds whenever we refer to the generalized Coulomb interaction.
To see that this assumption implies the strong growth condition, first note that $v \succ 0$ 
guarantees in particular that $U$
is a nonnegative polynomial, strictly positive away from $x=0$. Since $U$
is homogeneous quartic, it follows that $U\geq C^{-1}\vert x\vert^{4}$
for some constant C
sufficiently large, which evidently implies the strong growth condition.
Another sufficient assumption is that the entries of $v$ are nonnegative and moreover that the 
diagonal entries are strictly positive. 

Our interest in diagrammatic expansions leads us 
to adopt a further condition on the interaction. Too see why this is necessary, recall 
from Part I that the perturbation about a non-interacting theory ($U
\equiv 0$) involves integrals such as 
\[
\int U(x)\, e^{-\frac12 x^T A x}\, \ud x,
\]
which is clearly undefined if, e.g., $U(x) = e^{x^4}$. In most applications of interest, 
$U(x)$ is only of polynomial growth, but it is sufficient to assume 
growth that is at most exponential 
in the sense of Assumption \ref{assumption:atMostExp}, 
which is actually only needed in section \ref{sec:bolddiagram} for 
our consideration of the bold diagrammatic expansion.
\begin{assumption}[At-most-exponential growth]
\label{assumption:atMostExp}
In this section, we assume that there exist constants $B,C > 0$ such that 
$\vert U(x) \vert \leq B e^{C \Vert x\Vert}$ for all $ x\in \Rn$.
\end{assumption}
Further technical reasons for this assumption will become clear in
section~\ref{sec:bolddiagram}.

\subsection{Measures and entropy: notation and facts}
Let $\mathcal{M}$ be the space of probability measures on $\Rn$ (equipped with 
the Borel $\sigma$-algebra), let $\mathcal{M}_2 \subset \mathcal{M}$ be the subset of probability measures with moments up to second order, and let $\lambda$ denote the Lebesgue measure on $\Rn$. 
For notational convenience we define a mapping that takes the
second-order moments of a probability measure:
\begin{defn}
Define $\mathcal{G}
:\mathcal{M}_2 \ra \mathcal{S}_{+}^{N}$ by $\mathcal{G}(\mu)=\int xx^T\,\ud\mu$.
Writing $\mathcal{G}=(\mathcal{G}_{ij})$, we equivalently have
$\mathcal{G}_{ij}(\mu)=\int x_{i}x_{j}\,\ud\mu$.
\end{defn}

Therefore if $\mu$ is defined via a density
\[
\ud\mu = \rho(x) \ud x, \ \  \mathrm{where}\ \  \rho(x) = \frac{1}{Z[A]}
  e^{-\frac{1}{2}x^{T}Ax-U(x)},
\]
then $\mathcal{G}(\mu) = G[A]$.

We also denote by
\[
\mathrm{Cov}(\mu)=\int x x^T \ud\mu - \left(\int
x\,\ud\mu\right)\left(\int x\,\ud\mu\right)^T
\]
the covariance matrix of $\mu$.

For $\mu \in \mathcal{M}$, let $H$ denote the (differential) entropy
\begin{equation}
\label{eq:entropyDef}
H(\mu)=\begin{cases}
-\int\log\frac{\ud\mu}{\ud\lambda}\,\ud\mu, & \mu\ll \lambda\\
-\infty, & \mathrm{otherwise}
\end{cases}
\end{equation}
where $\frac{\ud\mu}{\ud\lambda}$ denotes the Radon-Nikodym derivative (i.e.,
the probability density function of $\mu$ with respect to the Lebesgue
measure $\lambda$) whenever $\mu\ll \lambda$ (i.e., whenever $\mu$ is absolutely
continuous with respect to the Lebesgue measure). We will often refer to the differential 
entropy as the entropy for convenience.

For $\mu,\nu \in \mathcal{M}$, define the relative entropy $H_\nu(\mu)$ via 
\begin{equation}
\label{eq:relativeEntropyDef}
H_\nu(\mu) = \begin{cases}
-\int\log\frac{\ud\mu}{\ud\nu}\,\ud\mu, & \mu\ll \nu\\
-\infty, & \mathrm{otherwise}.
\end{cases}
\end{equation}
Note carefully the sign convention.\footnote{Our relative entropy is then the negative 
of the Kullback-Leibler divergence, i.e., $H_\nu (\mu) = -D_{\mathrm{KL}} (\mu \Vert \nu)$.} 
The integral in
\eqref{eq:relativeEntropyDef} is well-defined with values in
$\R \cup \{-\infty\}$ for all $\mu,\nu \in \mathcal{M}$.

We now record some useful properties of the relative entropy.

\begin{fact}
\label{fact:relEnt}
For fixed $\nu \in \mathcal{M}$, $H_\nu$ is
non-positive and strictly concave on $\mathcal{M}$, and
$H_\nu(\mu) = 0$ if and only if $\mu = \nu$. Moreover $H_\nu$ is upper semi-continuous 
with respect to the topology of weak convergence; i.e., if the sequence $\mu_k \in \mathcal{M}$ 
converges weakly to $\mu \in \mathcal{M}$, then $\limsup_{k\ra\infty} H_{\nu}(\mu_k) \leq H_{\nu}(\mu)$.
\end{fact}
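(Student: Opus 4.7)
The plan is to dispatch the three claims—non-positivity together with the equality case, strict concavity on $\mathcal{M}$, and upper semi-continuity under weak convergence—by separate standard arguments tailored to the relative entropy. For the first two I will restrict to measures $\mu \ll \nu$, since when $\mu$ has a nonzero singular part with respect to $\nu$ we have $H_\nu(\mu) = -\infty$ by definition, and then both non-positivity and the upper bound in the concavity inequality hold trivially.

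For non-positivity and the equality condition, I would set $f = \ud\mu/\ud\nu$ (when $\mu \ll \nu$) and rewrite $-H_\nu(\mu) = \int f \log f\,\ud\nu$. Applying Jensen's inequality to the strictly convex function $\phi(t) = t \log t$ (with the convention $\phi(0)=0$) against the probability measure $\nu$ gives $\int \phi(f)\,\ud\nu \geq \phi\!\left(\int f \,\ud\nu\right) = \phi(1) = 0$, hence $H_\nu(\mu) \leq 0$. Equality in Jensen's inequality for a strictly convex function forces $f$ to be $\nu$-a.e.\ constant, and the normalization $\int f\,\ud\nu = 1$ then gives $f \equiv 1$, i.e., $\mu = \nu$.

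For strict concavity, I would fix $\mu_0 \neq \mu_1$ in $\mathcal{M}$, $t \in (0,1)$, and set $\mu_t = (1-t)\mu_0 + t\mu_1$. If either $\mu_0$ or $\mu_1$ is not absolutely continuous with respect to $\nu$, then by the Lebesgue decomposition $\mu_t$ also has a nonzero singular part, making $H_\nu(\mu_t) = -\infty$, and the (strict) concavity inequality is immediate from the right-hand side also being $-\infty$. Otherwise $\mu_0, \mu_1 \ll \nu$ with densities $f_0, f_1$, and the density of $\mu_t$ is $(1-t)f_0 + tf_1$. Strict concavity of $s \mapsto -s \log s$ on $[0,\infty)$ applied pointwise and integrated against $\nu$ then yields the strict inequality, with equality only when $f_0 = f_1$ $\nu$-a.e., which is excluded by $\mu_0 \neq \mu_1$.

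The most technically demanding piece is upper semi-continuity with respect to weak convergence, which I would obtain from the Donsker-Varadhan variational representation
\[
-H_\nu(\mu) \;=\; \sup_{\vp \in C_b(\Rn)} \left[\int \vp\,\ud\mu \,-\, \log \int e^{\vp}\,\ud\nu\right].
\]
For each fixed bounded continuous test function $\vp$, the functional on the right-hand side is continuous in $\mu$ under weak convergence, since $\mu \mapsto \int \vp\,\ud\mu$ is continuous and the second term is independent of $\mu$. Hence $-H_\nu$ is a pointwise supremum of weakly continuous functionals, making it lower semi-continuous, equivalently $H_\nu$ is upper semi-continuous. The variational formula itself is classical and can be cited (e.g., from Dupuis-Ellis) rather than reproved; the main nontrivial step in its derivation is recognizing that the supremum is attained formally at $\vp = \log(\ud\mu/\ud\nu)$, with an approximation argument needed when this function is not bounded continuous.
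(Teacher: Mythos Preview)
The paper does not actually prove this fact; it simply cites a reference (Rassoul-Agha and Sepp\"al\"ainen). Your proposal therefore goes well beyond what the paper does, supplying the standard arguments: Jensen's inequality for non-positivity and the equality case, pointwise strict concavity of $s\mapsto -s\log s$ for strict concavity, and the Donsker--Varadhan variational formula to obtain upper semi-continuity as an infimum of weakly continuous affine functionals. These are all correct and exactly the arguments one would find in the cited text.

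One minor caveat worth flagging: your treatment of strict concavity when one or both of $H_\nu(\mu_0),H_\nu(\mu_1)$ equals $-\infty$ is slightly glib. If both are $-\infty$ (for instance, both $\mu_i$ have nonzero singular part with respect to $\nu$), then $H_\nu(\mu_t)=-\infty$ as well, and one does not get a strict inequality $-\infty>-\infty$. Strict concavity in the extended-real sense is really only meaningful on the effective domain $\{\mu:H_\nu(\mu)>-\infty\}$, and your pointwise argument handles that case cleanly. This is a definitional wrinkle rather than a genuine gap, and the paper's statement glosses over it too.
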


\begin{proof}
For proofs see \cite{RassoulAgha-Sepp2015}.
\end{proof}

By contrast to the relative entropy, the differential entropy suffers from 
two analytical nuisances.

First, in the definition of the entropy in \eqref{eq:entropyDef}, the entropy may
actually fail to be defined for some measures (which simultaneously
concentrate too much in some area and fail to decay fast enough at
infinity, so the negative and positive parts of the integral are
$-\infty$ and $+\infty$, respectively, and the Lebesgue integral 
is ill-defined). However, Lemma~\ref{lem:entropyMoment} states
that when we restrict to $\mathcal{M}_2$, the integral cannot have an infinite positive
part and is well-defined.
\begin{lem}
\label{lem:entropyMoment}
For $\mu \in \mathcal{M}_2$, if $\mu \ll \lambda$, then the integral in
\eqref{eq:entropyDef} exists (in particular, the positive part of the
integrand has finite integral) and moreover 
\[
H(\mu) \leq \frac{1}{2} \log\left( (2\pi e)^N \det \mathrm{Cov}(\mu) \right) \leq 
\frac{1}{2} \log\left( (2\pi e)^N \det \mathcal{G}(\mu) \right), 
\]
with possibly $H(\mu) = -\infty$. The first inequality is satisfied with
equality if and only if $\mu$ is a Gaussian measure with a positive
definite covariance matrix. The second inequality is satisfied with equality if and only if $\mu$ has mean zero.
\end{lem}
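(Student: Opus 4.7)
The strategy is to compare $\mu$ with a Gaussian reference measure and reduce the inequality to the non-positivity of relative entropy from Fact~\ref{fact:relEnt} combined with a direct Gaussian calculation; this is the classical maximum-entropy argument. Throughout, write $\rho := d\mu/d\lambda$.

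For well-definedness of $H(\mu)$, I would first verify that $(-\log\rho)^{+}$ is $\mu$-integrable. Let $\nu_{0}$ denote the standard Gaussian on $\Rn$, with density $\rho_{\nu_{0}}(x)=(2\pi)^{-N/2}e^{-\|x\|^{2}/2}$. Then $-\log\rho = -\log(d\mu/d\nu_{0})-\log\rho_{\nu_{0}}$, and hence $(-\log\rho)^{+} \leq (-\log(d\mu/d\nu_{0}))^{+} + (-\log\rho_{\nu_{0}})^{+}$. Since $-\log\rho_{\nu_{0}}$ is at most quadratic and $\mu \in \mathcal{M}_{2}$, the second term has finite $\mu$-integral. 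By Fact~\ref{fact:relEnt}, $H_{\nu_{0}}(\mu) \in [-\infty,0]$, which forces the integrand defining $H_{\nu_{0}}$ to have $\mu$-integrable positive part, so the first term also has finite $\mu$-integral. Hence $H(\mu)$ is well-defined in $[-\infty,+\infty)$.

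Next I would observe that $\mathrm{Cov}(\mu) \succ 0$: otherwise there exists $v \neq 0$ with $v^{T}\mathrm{Cov}(\mu)v = 0$, so $v^{T}x$ is $\mu$-almost-surely constant and $\mu$ is supported on an affine hyperplane of zero Lebesgue measure, contradicting $\mu \ll \lambda$. Let $\nu^{*}$ be the Gaussian with mean $m := \int x\,\ud\mu$ and covariance $\Sigma := \mathrm{Cov}(\mu)$. Splitting $\log(d\mu/d\nu^{*}) = \log\rho - \log\rho_{\nu^{*}}$ gives
\[
H(\mu) = H_{\nu^{*}}(\mu) - \int \log\rho_{\nu^{*}}\,\ud\mu.
\]
A direct computation using $\int (x-m)^{T}\Sigma^{-1}(x-m)\,\ud\mu = \Tr(\Sigma^{-1}\Sigma) = N$ gives $-\int \log\rho_{\nu^{*}}\,\ud\mu = \tfrac{1}{2}\log\bigl((2\pi e)^{N}\det\mathrm{Cov}(\mu)\bigr)$. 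Fact~\ref{fact:relEnt} then yields $H_{\nu^{*}}(\mu) \leq 0$ with equality iff $\mu = \nu^{*}$, giving the first inequality and its equality condition.

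For the second inequality, $\mathcal{G}(\mu) = \mathrm{Cov}(\mu) + mm^{T}$, so the matrix determinant lemma gives
\[
\det\mathcal{G}(\mu) = \det\mathrm{Cov}(\mu)\cdot\bigl(1 + m^{T}\mathrm{Cov}(\mu)^{-1}m\bigr) \geq \det\mathrm{Cov}(\mu),
\]
with equality iff $m=0$; taking logarithms finishes the proof. The main technical subtlety lies in the well-definedness step, where one must carefully separate $-\log\rho$ into a relative-entropy piece handled by Fact~\ref{fact:relEnt} and a quadratically-growing piece handled by the $\mathcal{M}_{2}$ assumption. Once this is done, the remainder is standard Gaussian integration plus the matrix determinant lemma.
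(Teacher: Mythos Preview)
Your proof is correct and follows essentially the same approach as the paper's: both compare $\mu$ with the Gaussian having matching mean and covariance, reduce the first inequality to the non-positivity of relative entropy (Fact~\ref{fact:relEnt}), and obtain the second inequality from $\mathcal{G}(\mu)=\mathrm{Cov}(\mu)+mm^{T}$. The only cosmetic difference is that you handle well-definedness separately using the standard Gaussian $\nu_{0}$ before switching to the matched Gaussian $\nu^{*}$, whereas the paper uses the matched Gaussian throughout; and you invoke the matrix determinant lemma explicitly where the paper simply asserts $\det\mathrm{Cov}(\mu)\leq\det\mathcal{G}(\mu)$.
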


Note that Lemma \ref{lem:entropyMoment} also entails a useful 
bound on the entropy in terms of the second moments, as well as the classical 
fact that Gaussian measures are the measures of maximal entropy subject to second-order 
moment constraints.

The second analytical nuisance of the differential entropy is that we do not 
have the same semi-continuity guarantee as we have for the relative entropy 
in Fact \ref{fact:relEnt}. However, control on second moments allows a
semi-continuity result that will suffice for our purposes.

\begin{lem}
\label{lem:entropyUSC}
Assume that $\mu_j \in \mathcal{M}_2$ weakly converge to $\mu \in
\mathcal{M}$, and that there exists a constant $C$ such that $\mathcal{G}(\mu_j) \preceq C\cdot I_N$ for
all $j$. Then $\limsup_{j\ra \infty} H(\mu_j) \leq H(\mu)$.
\end{lem}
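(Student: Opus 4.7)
The plan is to reduce the question to the upper semi-continuity of a relative entropy $H_\nu$, for which Fact \ref{fact:relEnt} is already available. The key is to choose a reference probability measure $\nu$ such that the difference $H(\mu) - H_\nu(\mu)$ is controlled by moment integrals that behave continuously (not merely semi-continuously) along the sequence. A Gaussian reference would introduce a correction $\tfrac12 \int \|x\|^2\, d\mu$, which by the Portmanteau theorem is only lower semi-continuous under the second-moment bound---the wrong direction for this argument. I would therefore take $\nu$ to have the subquadratic exponential density $f(x) = c_N e^{-\|x\|}$ on $\Rn$, with $c_N$ the normalization constant.

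First, I would verify the identity
\[
H(\mu) = H_\nu(\mu) + \int \|x\|\, d\mu(x) - \log c_N
\]
for every $\mu \in \mathcal{M}_2$. When $\mu \ll \lambda$ this is a direct computation using the Radon--Nikodym chain rule together with $\log f = \log c_N - \|x\|$; when $\mu \not\ll \lambda$, both sides equal $-\infty$ because $\nu$ and $\lambda$ share the same null sets. The integral $\int \|x\|\, d\mu$ is finite by Cauchy--Schwarz and $\mu \in \mathcal{M}_2$.

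Next, using $\mathcal{G}(\mu_j) \preceq C \cdot I_N$ (so $\int \|x\|^2\, d\mu_j \leq CN$ uniformly), I would extract two facts. (i) By Portmanteau applied to the nonnegative continuous function $\|x\|^2$, we have $\int \|x\|^2\, d\mu \leq \liminf_j \int \|x\|^2\, d\mu_j \leq CN$, so $\mu \in \mathcal{M}_2$ and the displayed identity also applies to $\mu$. (ii) The first moments converge: $\int \|x\|\, d\mu_j \to \int \|x\|\, d\mu$. For (ii) I would use a standard truncation: for $g_M(x) := \min(\|x\|, M)$, weak convergence gives $\int g_M\, d\mu_j \to \int g_M\, d\mu$ since $g_M$ is bounded and continuous, while the tail contributions satisfy the uniform bound $\int (\|x\| - g_M)\, d\mu_j \leq M^{-1}\int \|x\|^2\, d\mu_j \leq CN/M$ (and an analogous bound for $\mu$), which is uniform in $j$ and can be made arbitrarily small.

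Combining the upper semi-continuity of $H_\nu$ (Fact \ref{fact:relEnt}) with the convergence from (ii) and the displayed identity,
\[
\limsup_{j\to\infty} H(\mu_j) = \limsup_{j\to\infty} H_\nu(\mu_j) + \int \|x\|\, d\mu - \log c_N \leq H_\nu(\mu) + \int \|x\|\, d\mu - \log c_N = H(\mu),
\]
which is the desired bound; the degenerate case $H_\nu(\mu) = -\infty$ simply forces $H(\mu_j) \to -\infty = H(\mu)$. The most delicate step is (ii), the passage from a uniform bound on the second moments to genuine convergence of $\int \|x\|\, d\mu_j$, and the conceptual crux of the whole argument is the choice of a reference measure with subquadratic tail so that this passage goes through in the correct direction.
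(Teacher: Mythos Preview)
Your proof is correct, but it differs from the paper's in the choice of reference measure and the way the moment correction is handled. The paper uses a Gaussian reference $\gamma_\beta$ with density proportional to $e^{-\beta\|x\|^2}$, obtains the identity $H(\mu_j)=\log Z_\beta+H_{\gamma_\beta}(\mu_j)+\beta\,\Tr[\mathcal{G}(\mu_j)]$, and then simply bounds the moment term by $\beta CN$ using the hypothesis (no convergence of second moments is claimed). This yields $\limsup_j H(\mu_j)\le H(\mu)+\beta(CN-\Tr[\mathcal{G}(\mu)])$ for every $\beta>0$, and letting $\beta\to 0$ finishes the argument. Thus your remark that a Gaussian reference gives the ``wrong direction'' is not quite accurate: the paper sidesteps the semi-continuity issue by introducing the free parameter $\beta$ and sending it to zero, rather than by seeking convergence of the correction term.

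Your approach trades that auxiliary limit for a uniform-integrability argument: with the subquadratic reference $e^{-\|x\|}$, the correction is the first moment $\int\|x\|\,d\mu_j$, and the uniform second-moment bound is just strong enough to upgrade weak convergence to genuine convergence of first moments. This is a clean and self-contained alternative; it requires the extra truncation step (your item (ii)) but avoids a limiting family of reference measures. Either route works, and both rest on the same key input, Fact~\ref{fact:relEnt}.
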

\begin{rem}
In other words, the entropy is upper semi-continuous with respect to the
topology of weak convergence on any subset of probability measures with
uniformly bounded second moments. The subtle difference between the
statements in Fact \ref{fact:relEnt} and Lemma~\ref{lem:entropyUSC} is
due to the fact that the Lebesgue measure $\lambda\notin
\mathcal{M}$.
\end{rem}

The proofs of Lemmas~\ref{lem:entropyMoment} and \ref{lem:entropyUSC} are given in appendix \ref{sec:appendix}.

Finally we record the classical fact that subject to marginal constraints, 
the entropy is maximized by a product measure. In the statement and 
throughout the paper, `$\#$' denotes the pushforward operation on
measures.
\begin{fact} 
\label{fact:productEntropy}
Suppose $p < N$ 
and let $\pi_1:\Rn \ra \R^{p}$ 
and $\pi_2:\Rn \ra \R^{N-p}$ to be
the projections onto the first $p$ and last $N-p$ components,
respectively. Then for $\mu \in \mc{M}_2$, $H(\mu) \leq H(\pi_1 \# \mu) + H(\pi_2 \# \mu)$. 
\end{fact}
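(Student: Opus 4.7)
The plan is to derive subadditivity from the non-positivity of the relative entropy $H_{\mu_1 \otimes \mu_2}(\mu)$, where $\mu_1 := \pi_1 \# \mu$ and $\mu_2 := \pi_2 \# \mu$ denote the marginals and $\mu_1 \otimes \mu_2$ denotes their product measure on $\R^N$. This is a classical manipulation whose only subtlety in our setting is managing the possibility that differential entropies take the value $-\infty$.

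First I would dispose of the trivial case: if $\mu \not\ll \lambda$, then $H(\mu) = -\infty$ and the inequality holds vacuously. So assume $\mu$ has a Lebesgue density $\rho$. By Fubini, the marginals $\mu_1, \mu_2$ then admit densities $\rho_1, \rho_2$ obtained by integrating out the complementary variables. Since $\mu \in \mathcal{M}_2$ forces each marginal to have finite second moments in its own space, Lemma~\ref{lem:entropyMoment} guarantees that $H(\mu)$, $H(\mu_1)$, and $H(\mu_2)$ are all well-defined in $\R \cup \{-\infty\}$ with finite positive parts. One also checks $\mu \ll \mu_1 \otimes \mu_2$: indeed $\rho_1(x_1) = 0$ forces $\rho(x_1,\cdot) = 0$ almost everywhere (and similarly for $\rho_2$), so $\frac{\ud\mu}{\ud(\mu_1 \otimes \mu_2)} = \rho/(\rho_1\rho_2)$ where the right-hand side is defined.

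The heart of the argument is the identity
\[
H_{\mu_1 \otimes \mu_2}(\mu) \;=\; H(\mu) - H(\mu_1) - H(\mu_2),
\]
obtained by splitting $\log[\rho/(\rho_1\rho_2)]$ into three pieces and applying Fubini to reduce $\int \rho \log \rho_i \,\ud\lambda$ to $\int \rho_i \log \rho_i \,\ud\lambda_i = -H(\mu_i)$. Combined with $H_{\mu_1 \otimes \mu_2}(\mu) \leq 0$ from Fact~\ref{fact:relEnt}, this delivers the claim. The main obstacle is justifying the splitting of the integral when one or more of the entropies equals $-\infty$; the positive-part bound supplied by Lemma~\ref{lem:entropyMoment} (controlled by the uniform second-moment constraint inherited from $\mu \in \mathcal{M}_2$) suffices to make the arithmetic in $\R \cup \{-\infty\}$ consistent throughout. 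Alternatively, one can short-circuit the issue by invoking Jensen's inequality directly against the probability measure $\mu$ to get $\int \log[\rho_1\rho_2/\rho]\,\ud\mu \leq \log \int (\rho_1\rho_2/\rho)\,\ud\mu \leq 0$, which after rearrangement yields exactly the desired bound without any case analysis.
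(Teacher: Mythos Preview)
Your proposal is correct and follows essentially the same route as the paper: reduce to the absolutely continuous case, establish the identity $H_{\mu_1\otimes\mu_2}(\mu)=H(\mu)-H(\mu_1)-H(\mu_2)$ by splitting $\log[\rho/(\rho_1\rho_2)]$ and using Fubini on the marginal terms, and conclude via the non-positivity of relative entropy from Fact~\ref{fact:relEnt}. If anything, you are more careful than the paper about the $-\infty$ bookkeeping, invoking Lemma~\ref{lem:entropyMoment} to control positive parts; the alternative Jensen argument you sketch is also fine but not needed.
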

\begin{rem}
Note that $\pi_1 \# \mu$ and $\pi_2 \# \mu$ are the marginal distributions of $\mu$
with respect to the product structure $\Rn = \R^p \times \R^{N-p}$.
\end{rem}

See appendix \ref{sec:appendix} for a short proof.

\section{Luttinger-Ward formalism}\label{sec:luttingerward}

This section is organized as follows. In
section~\ref{sec:variation}, we provide a variational expression for the free energy via
the classical Gibbs variational principle. For fixed $U$, this allows us to identify the Legendre dual of $\Omega[A]$, denoted
by $\mathcal{F}[G]$, and to establish a bijection between $A$ and the interacting Green's function $G$. In
section~\ref{sec:lw}, we define the Luttinger-Ward functional and
show that the Dyson equation can be naturally derived by considering the
first-order optimality condition associated to the minimization problem in the variational 
expression for the free energy. Then we prove that the LW 
functional satisfies a number of desirable properties.
First, in section~\ref{sec:transformation} we prove the transformation rule,
which relates a change of the coordinates of the interaction with an appropriate 
transformation of the Green's 
function.  The transformation rule leads to the projection
rule in section~\ref{sec:projection}, which implies the sparsity pattern of the self-energy
for the impurity problem. Up until this point we assume only that $U$ 
satisfy the weak growth condition. Then in section \ref{sec:continuousState} we motivate 
and state 
our result that the LW functional is continuous up to the boundary of $\pd$, for which 
we need the assumption that $U$ satisfies the strong growth condition. The proof 
(as well as a counterexample demonstrating that weak growth is not sufficient) is 
deferred to section \ref{sec:boundary}. Throughout we defer the proofs of some technical 
lemmas to Appendix~\ref{sec:appendix}. Moreover we
will invoke the language of convex analysis following Rockafellar
\cite{Rock} and Rockafellar and Wets \cite{RockWets}. See Appendix 
{\ref{sec:convex}} for 
further background and details.

\subsection{Variational formulation of the free
energy}\label{sec:variation}

The main result in this subsection is given by Theorem~\ref{thm:variation}.

\begin{thm}[Variational structure]\label{thm:variation}
  For $U$ satisfying the weak growth condition, the free energy can
  be expressed variationally via the constrained minimization problem
  \begin{equation}
    \Omega[A]=\inf_{G\in\mathcal{S}^{N}_{+}}\left(\frac{1}{2}\mathrm{Tr}[AG]-\mathcal{F}[G]\right),\label{eq:omegaInf2}
  \end{equation}
  where 
  \begin{equation}
    \mathcal{F}[G]:=\sup_{\mu \in \mathcal{G}^{-1}(G)}\left[H(\mu)-\int
    U\,\ud \mu\right] \label{eq:Fdef}
  \end{equation}
  is the concave conjugate of $\Omega[A]$ with respect to the inner product
  $\langle A,G\rangle = \frac{1}{2}\mathrm{Tr}[AG]$. 
  (Note that by convention 
  $\mathcal{F}[G] = -\infty$ whenever $\mc{G}^{-1}(G)$ is empty, i.e., whenever 
  $G\in\mathcal{S}^{N}\backslash\mathcal{S}_{+}^{N}$.) 
  Moreover $\Omega$ and $\mathcal{F}$ are smooth and strictly concave on their respective domains $\dom\Omega$ and $\pd$. The mapping
  $G[A]:=\nabla\Omega[A]$ is a \textit{bijection} $\dom\Omega \ra
  \mathcal{S}^{N}_{++}$, with inverse given by $A[G]:=\nabla\mathcal{F}[G]$. 
\end{thm}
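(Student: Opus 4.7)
The plan is to derive the variational formula from the classical Gibbs variational principle and to extract the remaining analytic claims via concave duality. First, applying the Gibbs principle to the full Hamiltonian $h(x) = \tfrac{1}{2} x^{T}Ax + U(x)$ gives
$$\Omega[A] = \inf_{\mu\in\mc{M}_2}\Bigl[\int h\,\ud\mu - H(\mu)\Bigr];$$
this follows from the identity $\int h\,\ud\mu - H(\mu) = \Omega[A] - H_{\nu_A}(\mu)$, where $\nu_A\propto e^{-h}$ is the Gibbs measure, combined with Fact~\ref{fact:relEnt} (the restriction to $\mc{M}_2$ is justified by Lemma~\ref{lem:entropyMoment} and the quadratic part of $h$). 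Stratifying the infimum by $G = \mc{G}(\mu)$ and using $\int \tfrac{1}{2}x^{T}Ax\,\ud\mu = \tfrac{1}{2}\Tr[AG]$ on each stratum immediately yields \eqref{eq:omegaInf2} and identifies $\mathcal{F}$ as the concave conjugate of $\Omega$ relative to the pairing $\langle A,G\rangle = \tfrac{1}{2}\Tr[AG]$; concavity of $\mathcal{F}$ is then automatic from its definition as a supremum of concave functionals over a linear constraint set.

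Next, I would establish smoothness, strict concavity, and the identity $\nabla\Omega = G$ directly from the integral representation \eqref{eq:OmegaDef}. The weak growth condition together with the exponential decay of $\nu_A$ on $\dom\Omega$ permits differentiation under the integral to all orders, giving $\Omega\in C^{\infty}(\dom\Omega)$ with $\nabla\Omega[A] = G[A]$ as in \eqref{eqn:gradomega}. A second differentiation identifies the Hessian of $\Omega$ (up to a positive constant) with the negative of the covariance tensor of $x\mapsto xx^{T}$ under $\nu_A$; this tensor is strictly positive definite as a bilinear form on $\symm$ because $\nu_A$ is absolutely continuous and fully supported on $\Rn$, so $\Omega$ is strictly concave on $\dom\Omega$. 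The full support of $\nu_A$ also forces $G[A]\in\pd$ for every $A\in\dom\Omega$, so $\nabla\Omega$ lands in $\pd$.

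It remains to establish the bijection and the Legendre inverse relation. Injectivity of $G[\cdot]:\dom\Omega\to\pd$ is immediate from strict concavity. I expect surjectivity to be the main obstacle: for arbitrary $G\in\pd$ one must produce $A\in\dom\Omega$ with $\nabla\Omega[A] = G$. My plan is to study the dual optimization $\sup_{A\in\dom\Omega}\bigl[\Omega[A] - \tfrac{1}{2}\Tr[AG]\bigr]$, whose value equals $-\mathcal{F}[G]$, and to exhibit an interior maximizer via coercivity: along any sequence $A_k$ exiting every compact subset of the open set $\dom\Omega$, the objective should tend to $-\infty$, using the weak growth condition on $U$ together with $G\succ 0$ to control both the case where $A_k$ escapes to infinity in $\symm$ and the case where it approaches the relative boundary of $\dom\Omega$. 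Any interior critical point $A^{*}$ then satisfies $\nabla\Omega[A^{*}] = G$. With the bijection in hand, the remaining claims---$A[G] = \nabla\mathcal{F}[G]$ together with the smoothness and strict concavity of $\mathcal{F}$ on $\pd$---follow from the standard Legendre duality for smooth strictly concave functions with open domain: $\nabla\Omega$ is a $C^{\infty}$-diffeomorphism $\dom\Omega\to\pd$ by the inverse function theorem applied to a map with everywhere nondegenerate derivative, its inverse is $\nabla\mathcal{F}$, and strict concavity of $\mathcal{F}$ follows because the Hessians of $\Omega$ and $\mathcal{F}$ at corresponding points are mutual inverses (with respect to the chosen pairing).
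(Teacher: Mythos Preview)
Your overall architecture matches the paper's: Gibbs variational principle, stratification of the infimum by $G=\mathcal{G}(\mu)$, and smoothness/strict concavity of $\Omega$ via differentiation under the integral. The substantive difference is how you obtain surjectivity of $\nabla\Omega:\dom\Omega\to\pd$. The paper does not argue by coercivity; instead it isolates two lemmas---$\mathcal{F}$ is finite precisely on $\pd$ (Lemma~\ref{lem:Ffinite}) and $\mathcal{F}$ is closed proper concave (Lemma~\ref{lem:Fconcave})---and then invokes the subgradient inversion theorem for closed conjugate pairs (Theorem~\ref{invSubThm}): since $\dom\mathcal{F}=\pd$ is open, $\partial\mathcal{F}(G)\neq\emptyset$ for every $G\in\pd$, and each such supergradient furnishes the required preimage. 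Differentiability and strict concavity of $\mathcal{F}$ are then read off from Theorem~\ref{strictConvDiff2} and the implicit function theorem, rather than from Hessian inversion.

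Your coercivity route is viable, but the step you flag as ``the main obstacle'' is genuinely where the work lies, and your sketch hides the key ingredient. The boundary case $A_k\to\partial(\dom\Omega)$ is indeed handled by upper semi-continuity. For $\|A_k\|\to\infty$, however, you need for every unit direction $D\in\symm$ a test measure $\mu$ with $\mathcal{G}(\mu)=G-\varepsilon D\in\pd$ and $H(\mu)-\int U\,\ud\mu>-\infty$, so that the bound $\Omega[A_k]-\tfrac12\Tr[A_kG]\le -\tfrac{\varepsilon}{2}\Tr[A_kD]+C$ forces divergence. The existence of such $\mu$ is exactly Lemma~\ref{lem:Ffinite} ($\mathcal{F}>-\infty$ on $\pd$), which the paper proves by exhibiting a compactly supported smooth density with prescribed second moments; this is where the weak growth condition actually enters. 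So your approach does not bypass the paper's key lemma---it uses it implicitly. A minor related point: the Gibbs stratification gives $\Omega=\mathcal{F}^{*}$ directly, but the theorem also asserts $\mathcal{F}=\Omega^{*}$, which requires $\mathcal{F}$ to be closed (Lemma~\ref{lem:Fconcave}); you do not address this, though once surjectivity is in hand it can be recovered by evaluating both sides at the optimizer.
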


We first record some technical properties of $\Omega$ in
Lemma~\ref{lem:OmegaConcave}.

\begin{lem}
\label{lem:OmegaConcave}
$\Omega$ is an upper semi-continuous, proper (hence closed) concave function. Moreover,
$\Omega$ is strictly concave and $\smooth$-smooth on $\dom \Omega$.
\end{lem}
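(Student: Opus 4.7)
The plan is to establish each of the four claims in turn, using H\"older's inequality for (strict) concavity, Fatou's lemma for upper semi-continuity, and a domination argument to justify differentiation under the integral sign for $\smooth$-smoothness.

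For concavity, I would fix $A_0, A_1 \in \symm$ and $t \in (0,1)$, and express the integrand at the convex combination as the product
\[
e^{-\frac12 x^T (tA_0+(1-t)A_1) x - U(x)} = \bigl(e^{-\frac12 x^T A_0 x - U(x)}\bigr)^t \bigl(e^{-\frac12 x^T A_1 x - U(x)}\bigr)^{1-t}.
\]
H\"older's inequality with conjugate exponents $1/t$ and $1/(1-t)$ then gives $Z[tA_0+(1-t)A_1] \leq Z[A_0]^t Z[A_1]^{1-t}$, so taking $-\log$ yields concavity of $\Omega$. When $A_0 \neq A_1$ both lie in $\dom \Omega$, the equality case of H\"older forces the two integrands to be Lebesgue-proportional, hence $x^T(A_0 - A_1)x$ to be constant in $x$, which is impossible unless $A_0 = A_1$. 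The inequality is therefore strict, and $\Omega$ is strictly concave on $\dom \Omega$.

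Upper semi-continuity of $\Omega$ is equivalent to lower semi-continuity of $Z$, and for any sequence $A_n \to A$ in $\symm$ the latter follows by Fatou's lemma applied to the nonnegative integrand $e^{-\frac12 x^T A_n x - U(x)}$, which converges pointwise to $e^{-\frac12 x^T A x - U(x)}$. Since $-\log$ is decreasing and $Z > 0$, the bound $Z[A] \leq \liminf_n Z[A_n]$ translates to $\limsup_n \Omega[A_n] \leq \Omega[A]$. Properness is then clear: $Z[A] > 0$ for every $A \in \symm$, so $\Omega < +\infty$ everywhere, while $\dom \Omega \neq \emptyset$ by the weak growth condition. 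Closedness follows from the standard fact that a proper concave function is closed if and only if it is upper semi-continuous.

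For $\smooth$-smoothness on $\dom \Omega$, the key step is to differentiate $Z$ under the integral sign to any order. Fix $A_0 \in \dom \Omega$; since $\dom \Omega$ is open, there exists $s > 0$ such that every $A$ with $\norm{A - A_0} \leq s$ (in spectral norm) lies in $\dom \Omega$, and in particular $A_0 - sI \in \dom \Omega$, so $\int e^{\frac{s}{2}\norm{x}^2} e^{-\frac12 x^T A_0 x - U(x)}\,\ud x < \infty$. For $A$ with $\norm{A - A_0} \leq s/2$ one has the pointwise bound $e^{-\frac12 x^T A x - U(x)} \leq e^{\frac{s}{4}\norm{x}^2} e^{-\frac12 x^T A_0 x - U(x)}$, and for any multi-index $\alpha$ the polynomial factor satisfies $\abs{x^\alpha} e^{\frac{s}{4}\norm{x}^2} \leq C_\alpha e^{\frac{s}{2}\norm{x}^2}$. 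Combining these supplies a locally uniform integrable dominator for every mixed $A$-derivative of the integrand, so dominated convergence yields $Z \in \smooth(\dom \Omega)$; positivity of $Z$ then transfers smoothness to $\Omega = -\log Z$. The main obstacle is precisely this smoothness step: openness of $\dom \Omega$ is essential because it lets me enlarge the exponential weight by the small Gaussian factor $e^{\frac{s}{2}\norm{x}^2}$ while retaining integrability, and this enlargement is exactly what is needed to absorb the polynomial factors arising from repeated differentiation in $A$.
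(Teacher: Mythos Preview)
Your proposal is correct and follows essentially the same route as the paper: H\"older's inequality (with its equality case) for concavity and strict concavity, Fatou's lemma for upper semi-continuity, and the observation that openness of $\dom\Omega$ yields an integrable Gaussian-weighted envelope that dominates all polynomial-times-integrand derivatives, justifying differentiation under the integral sign. Your write-up is in fact slightly more explicit than the paper's in spelling out the domination step (e.g., splitting the exponential budget as $s/4 + s/4$ to absorb both the perturbation of $A$ and the polynomial factor), but the underlying ideas are identical.
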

\begin{rem}
Recall that a function $f$ on a metric space $X$ is upper semi-continuous if 
for any sequence $x_k \in X$ converging to $x$, we have $\limsup_{k\ra\infty} f(x_k) \leq f(x)$.
\end{rem}

We now turn to exploring the concave (or Legendre-Fenchel) duality associated to $\Omega$.
The following lemma, a version of the classical Gibbs variational
principle~\cite{RassoulAgha-Sepp2015}, is the first step toward identifying the dual of $\Omega$.
\begin{lem}
\label{lem:OmegaDual1}
For any $A\in\mathcal{S}^{N}$, 
\begin{equation}
\Omega[A]=\inf_{\mu\in\mathcal{M}_2}\left[\int\left(\frac{1}{2}x^{T}Ax+U(x)\right)\,\ud\mu(x)-H(\mu)\right].\label{eq:omegaInf1}
\end{equation}
If $A\in \dom \Omega$, the infimum is uniquely attained at $\ud\mu(x)=\frac{1}{Z[A]}e^{-\frac{1}{2}x^{T}Ax-U(x)}\,\ud x$.
\end{lem}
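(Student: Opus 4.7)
The plan is to split the argument into two cases according to whether $A \in \dom\Omega$. In the regular case $A \in \dom\Omega$, I will use the standard Gibbs variational principle, reducing the inequality in \eqref{eq:omegaInf1} to the non-negativity of relative entropy against the candidate minimizer $\mu^{*}$. In the singular case $A \notin \dom\Omega$, where $\Omega[A] = -\infty$, I will exhibit a sequence of compactly truncated Gibbs measures whose functional values diverge to $-\infty$.

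For the regular case, first define $\ud\mu^{*}(x) = \frac{1}{Z[A]}\, e^{-\frac{1}{2} x^{T} A x - U(x)}\,\ud x$ and check that $\mu^{*} \in \mathcal{M}_2$. This reduces to a tail estimate on the Gibbs density: by the weak growth condition $U(x) \geq -C_U(1+\norm{x}^2)$, so the density decays at least like $e^{-\frac12 x^T(A-2C_U I)x + C_U}$, and since $A \in \dom\Omega$ this shifted quadratic form must force Gaussian decay (otherwise $Z[A]$ would already diverge). For any $\mu \in \mathcal{M}_2$ with $\mu \ll \lambda$, I would then use the pointwise identity
\begin{equation*}
\log\frac{\ud\mu}{\ud\mu^{*}}(x) = \log\frac{\ud\mu}{\ud\lambda}(x) + \tfrac12 x^T A x + U(x) + \log Z[A],
\end{equation*}
integrate against $\ud\mu$, and rearrange to obtain $\int(\tfrac12 x^T A x + U)\,\ud\mu - H(\mu) = \Omega[A] - H_{\mu^{*}}(\mu)$. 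Fact \ref{fact:relEnt} then yields both the lower bound $\Omega[A]$ and the equality characterization $\mu = \mu^{*}$. For $\mu \in \mathcal{M}_2$ not absolutely continuous with respect to $\lambda$, the inequality is automatic, since $-H(\mu) = +\infty$ while $\int(\tfrac12 x^T A x + U)\,\ud\mu$ has a finite negative part by the weak growth condition and the second-moment bound on $\mu$.

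For the singular case, consider
\begin{equation*}
\ud\mu_R = \frac{1}{Z_R}\,\mathbf{1}_{\norm{x}\leq R}\,e^{-\frac12 x^T A x - U(x)}\,\ud x,\qquad Z_R := \int_{\norm{x}\leq R} e^{-\frac12 x^T A x - U(x)}\,\ud x.
\end{equation*}
The weak growth condition makes the integrand locally bounded, so $Z_R \in (0,\infty)$ and $\mu_R \in \mathcal{M}_2$ thanks to its compact support. A calculation parallel to the one above (using the explicit log-density of $\mu_R$) yields $\int(\tfrac12 x^T A x + U)\,\ud\mu_R - H(\mu_R) = -\log Z_R$, and since $Z[A] = \infty$ by hypothesis, monotone convergence gives $Z_R \to \infty$. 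Hence the infimum in \eqref{eq:omegaInf1} is $-\infty = \Omega[A]$.

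The main obstacle, such as it is, is purely bookkeeping: one must check that no quantity appearing in the argument is an ambiguous $\infty - \infty$. Almost all of this amounts to verifying, via the weak growth condition, that $\int U\,\ud\mu$ and $\int x^T A x\,\ud\mu$ have finite negative parts whenever $\mu \in \mathcal{M}_2$, so that the functional in \eqref{eq:omegaInf1} is unambiguously defined with values in $\R \cup \{+\infty\}$. The uniqueness statement in the regular case is then immediate from the strict concavity part of Fact \ref{fact:relEnt}.
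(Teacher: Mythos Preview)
Your approach in the regular case $A\in\dom\Omega$ is essentially the paper's: rewrite the functional as $\Omega[A]-H_{\mu^*}(\mu)$ and invoke Fact~\ref{fact:relEnt}. There is one small gap, however. Your argument that $\mu^*\in\mathcal{M}_2$ via the upper bound $e^{-\frac12 x^T(A-2C_U I)x+C_U}$ on the density does not work in general, because $A-2C_U I$ need not be positive definite (consider $U(x)=x^4$ in one dimension and $A$ large and negative; then $C_U=0$ and the bound blows up, yet $\mu^*$ clearly has all moments). The conclusion ``otherwise $Z[A]$ would already diverge'' is not valid: an upper bound on the integrand diverging says nothing about $Z[A]$. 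The paper fixes this using the openness of $\dom\Omega$ (which is part of the weak growth condition): if $A\in\dom\Omega$ then $A-\delta I\in\dom\Omega$ for some $\delta>0$, so $\int e^{\frac{\delta}{2}\Vert x\Vert^2}e^{-\frac12 x^TAx-U(x)}\,\ud x<\infty$, and this dominates any polynomial moment.

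Your treatment of the singular case $A\notin\dom\Omega$ via compactly truncated Gibbs measures $\mu_R$ is a genuine addition. The paper's proof only records the trivial direction $\Omega[A]=-\infty\leq\inf\{\cdots\}$ there and does not exhibit a sequence driving the infimum to $-\infty$; your construction fills that gap cleanly. The computation $\int(\tfrac12 x^TAx+U)\,\ud\mu_R-H(\mu_R)=-\log Z_R$ is correct, and the finiteness of each term follows as you say from compact support together with the lower bound on $U$.
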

 
\begin{rem}
\label{rem:infSet}
One might wonder whether the infimum in \eqref{eq:omegaInf1} can be
taken over all of $\mathcal{M}$. Note that if $\mu$ does not have a
second moment, it is possible to have both $H(\mu) = +\infty$ and $\int\left(\frac{1}{2}x^{T}Ax+U(x)\right)\,\ud\mu(x) = +\infty$, so the expression in brackets is of the indeterminate form $\infty - \infty$.
The restriction to $\mu \in \mathcal{M}_2$ takes care of this problem because Lemma \ref{lem:entropyMoment} guarantees that $H(\mu) < +\infty$, and by the weak growth condition, the other term in the infimum must be either finite or $+\infty$. Moreover, $\mathcal{M}_2$ is still large enough to contain the minimizer, and restricting our attention to measures with finite second-order moments will be convenient in later developments.
\end{rem}

From the previous lemma we can split up the infimum in
(\ref{eq:omegaInf1}) and obtain
\[
\Omega[A]=\inf_{G\in\mathcal{S}_{+}^{N}}\inf_{\mu\in \mathcal{G}^{-1}(G)}\left[\int\left(\frac{1}{2}x^{T}Ax+U(x)\right)\,\ud\mu (x)
-H(\mu)\right].
\]
Since $\int x^{T}Ax\,\ud\mu=\mathrm{Tr}[\mathcal{G}(\mu)A]$, it follows that
\[
\Omega[A]=\inf_{G\in\mathcal{S}_{+}^{N}}\left(\frac{1}{2}\mathrm{Tr}[AG]+\inf_{\mu\in \mathcal{G}^{-1}(G)}\left[\int U\,\ud\mu-H(\mu)\right]\right).
\]
This proves Eq.~\eqref{eq:omegaInf2} of Theorem~\ref{thm:variation}
using the definition of $\mathcal{F}[G]$ in Eq.~\eqref{eq:Fdef}.

\begin{rem}
\label{rem:Fdomain}
For the perspective of the large deviations theory, we comment
that the construction of $\mathcal{F}$ from the entropy may be
recognizable by analogy to the contraction principle~\cite{RassoulAgha-Sepp2015}. Indeed, the
expression $\int U\,\ud\mu - H(\mu)$ is equal (modulo a constant offset)
to $-H_{\nu_U}(\mu)$, where $\nu_U$ is the measure with density proportional to $e^{-U}$. If one considers i.i.d. sampling from the
probability measure $\nu_U$, by Sanov's theorem $-H_{\nu_U}$ is the
corresponding large deviations rate function for the empirical measure.
The rate function for the second-order moment matrix (i.e.,
$-\mathcal{F}$, modulo constant offset) is obtained via the contraction
principle applied to the mapping $\mu \mapsto \mathcal{G}(\mu)$. This is
analogous to the procedure by which one obtains Cram\'er's theorem from
Sanov's theorem via application of the contraction principle to a map
that maps $\mu$ to its mean~\cite{RassoulAgha-Sepp2015}.  
\end{rem}

Now we record some technical facts about $\mathcal{F}$ in
Lemma~\ref{lem:Ffinite}, which demonstrates in particular that $\mathcal{F}$ 
diverges (at least) logarithmically at the boundary $\partial
\mathcal{S}_{+}^{N} = \mathcal{S}_{+}^{N} \backslash
\mathcal{S}_{++}^{N}$.
\begin{lem}
\label{lem:Ffinite}
$\mathcal{F}$ is finite on $\pd$ and $-\infty$ elsewhere. Moreover, 
\[
\mathcal{F}[G] \leq \frac{1}{2}\log\left[(2\pi e)^N \det G \right] + C_U (1+\mathrm{Tr}\,G)
\]
for all $G \in \pd$.
\end{lem}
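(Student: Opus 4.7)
My plan is to split the argument into three regimes depending on whether $G$ is not positive semidefinite, positive semidefinite but singular, or strictly positive definite; for the first two I aim to show $\mathcal{F}[G] = -\infty$, and for the last I establish both the quantitative upper bound and finiteness from below.

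For $G \notin \psd$, the set $\mathcal{G}^{-1}(G)$ is empty (every second-moment matrix is PSD), so the supremum in the definition of $\mathcal{F}[G]$ is taken over the empty set and equals $-\infty$ by convention. For $G \in \psd$ singular, I would pick $v \in \ker G \setminus \{0\}$; for any $\mu \in \mathcal{G}^{-1}(G)$, the identity $\int (v^T x)^2 \,\ud\mu = v^T G v = 0$ forces $v^T x = 0$ $\mu$-a.s., so $\mu$ is concentrated on a proper hyperplane and hence singular with respect to Lebesgue, giving $H(\mu) = -\infty$. The weak growth condition then yields $-\int U\,\ud\mu \leq C_U(1+\Tr G) < +\infty$, so $H(\mu) - \int U\,\ud\mu = -\infty$ for every such $\mu$ and $\mathcal{F}[G] = -\infty$.

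For $G \in \pd$, the upper bound is a one-line estimate: for any $\mu \in \mathcal{G}^{-1}(G)$, Lemma~\ref{lem:entropyMoment} gives $H(\mu) \leq \frac{1}{2}\log[(2\pi e)^N \det G]$, while the weak growth condition gives $-\int U\,\ud\mu \leq C_U(1+\Tr G)$; adding these and taking the sup over $\mu$ yields the asserted bound on $\mathcal{F}[G]$.

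The remaining task --- and where the main work lies --- is to confirm $\mathcal{F}[G] > -\infty$ for every $G \in \pd$, i.e., to exhibit some $\mu \in \mathcal{G}^{-1}(G)$ with $H(\mu) - \int U\,\ud\mu > -\infty$. A clean candidate is the centered Gaussian $\nu_G$ with covariance $G$: its entropy equals $\frac{1}{2}\log[(2\pi e)^N \det G]$, and $\int U\,\ud\nu_G < +\infty$ whenever $U$ has at most polynomial or exponential growth (in particular under Assumption~\ref{assumption:atMostExp}). Under only the weak growth condition, the backup plan is to pick any $A_0 \in \dom \Omega$ (nonempty since $\dom \Omega \supseteq \{A : A \succ 2C_U I\}$ from the weak-growth lower bound on $U$) and use the Gibbs density $\mu_{A_0}$ itself, for which a direct computation gives $H(\mu_{A_0}) - \int U\,\ud\mu_{A_0} = -\Omega[A_0] + \frac{1}{2}\Tr[A_0\,G[A_0]]$, which is finite. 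The main obstacle is then transporting this finiteness to a general $G \in \pd$, e.g.\ via a linear pushforward $T\#\mu_{A_0}$ with $T(x) = Lx$ chosen so that $L\,G[A_0]\,L^T = G$; the entropy piece transforms cleanly (adding $\log|\det L|$), but controlling $\int U\,\ud(T\#\mu_{A_0})$ requires extra care. This last point is ultimately tied to the surjectivity of $A \mapsto G[A]$ onto $\pd$ asserted in Theorem~\ref{thm:variation}, which can be established via an inverse-function and continuation argument exploiting the strict concavity and smoothness of $\Omega$ from Lemma~\ref{lem:OmegaConcave}.
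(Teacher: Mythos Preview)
Your treatment of the cases $G \notin \psd$, $G \in \partial\psd$, and the upper bound for $G \in \pd$ is essentially identical to the paper's, and correct.

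The divergence is in the lower bound $\mathcal{F}[G] > -\infty$ for $G \in \pd$, and here your proposal has a genuine gap. The paper's argument is a one-liner you have overlooked: take any mean-zero probability measure $\mu_0$ with a smooth, \emph{compactly supported} density and nondegenerate second-moment matrix $G_0$, then set $\mu = L\#\mu_0$ for an invertible linear $L$ with $L G_0 L^T = G$. This $\mu$ again has compact support and smooth bounded density, so $H(\mu)$ is finite and $\int U\,\ud\mu$ is finite (bounded between $-C_U(1+\Tr G)$ and $\sup_{\mathrm{supp}\,\mu}U \cdot 1$, modulo local integrability of $U$, which is implicit). No growth assumption on $U$ from above is needed, and no appeal to the $A \leftrightarrow G$ bijection is required.

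By contrast, your Gaussian candidate $\nu_G$ requires an upper growth bound on $U$ that is not part of the weak growth hypothesis, and your backup plan is circular: you propose to control $\int U\,\ud(T\#\mu_{A_0})$ by invoking the surjectivity of $A \mapsto G[A]$ from Theorem~\ref{thm:variation}, but the proof of that theorem in the paper passes through Lemma~\ref{lem:Fconcave} and Lemma~\ref{lem:Fdiff}, both of which rely on the present Lemma~\ref{lem:Ffinite}. Even if you tried to establish surjectivity independently via an inverse-function-plus-continuation argument, the properness step (ruling out that $G[A]$ escapes to $\partial\pd$ or infinity along a path in $\dom\Omega$) is nontrivial and not supplied. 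The compactly-supported construction sidesteps all of this.
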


Define 
\[
\Psi[\mu]:=H(\mu)-\int U\,\ud\mu,
\]
so $\mathcal{F}[G]=\sup_{\mu\in \mathcal{G}^{-1}(G)}\Psi[\mu]$.
By the concavity of the entropy, $\Psi$ is concave on $\mathcal{M}_2$.
Thus, given $G$, we can in principle solve a
concave maximization problem over $\mu\in\mathcal{M}$ to find
$\mathcal{F}[G]$, with the linear constraint $\mu\in \mathcal{G}^{-1}(G)$. Moreover, this variational representation of $\mathcal{F}$ in terms of the concave function $\Psi$ is 
enough to establish the concavity of $\mathcal{F}$ by abstract considerations. This and other properties of $\mathcal{F}$ are collected in the following.

\begin{lem}
  \label{lem:Fconcave}
$\mathcal{F}$ is an upper semi-continuous, proper (hence closed) concave function on $\mathcal{S}^{N}$.
\end{lem}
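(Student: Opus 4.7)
My plan is to decompose the statement into three pieces---properness, concavity, and upper semi-continuity---and then use the fact that for a proper concave function closedness is by definition the same as upper semi-continuity. The crucial analytical input will be Lemma \ref{lem:Ffinite}, which essentially already contains the hard work.

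Properness is immediate from Lemma \ref{lem:Ffinite}: $\mathcal{F}$ is finite on the nonempty set $\pd$ and equals $-\infty$ elsewhere, so it is neither identically $-\infty$ nor ever $+\infty$. For concavity, I would use the standard ``sup over a linearly constrained convex set'' argument. Given $G_1,G_2 \in \mathcal{S}^N$ and $\lambda\in(0,1)$, set $G := \lambda G_1 + (1-\lambda)G_2$; I may assume both $\mathcal{F}[G_i]$ are finite, since otherwise the desired inequality is trivial, and by Lemma \ref{lem:Ffinite} this forces $G_i \in \pd$ and hence $G \in \pd$. For any $\varepsilon>0$ I take $\mu_i \in \mathcal{G}^{-1}(G_i)$ with $\Psi[\mu_i] \geq \mathcal{F}[G_i] - \varepsilon$ and form $\mu := \lambda \mu_1 + (1-\lambda)\mu_2 \in \mathcal{M}_2$. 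Linearity of $\mathcal{G}$ in $\mu$ yields $\mathcal{G}(\mu) = G$, while concavity of $\Psi$---which follows from concavity of the differential entropy on densities together with the linearity of $\mu \mapsto \int U\,\ud\mu$---gives
\[
\mathcal{F}[G] \geq \Psi[\mu] \geq \lambda \Psi[\mu_1] + (1-\lambda)\Psi[\mu_2] \geq \lambda \mathcal{F}[G_1] + (1-\lambda)\mathcal{F}[G_2] - \varepsilon,
\]
and letting $\varepsilon\to 0$ closes this step.

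The main step is upper semi-continuity, which I plan to establish by a three-case analysis on the location of a test point $G \in \mathcal{S}^N$. If $G \in \pd$, then since $\dom \mathcal{F} = \pd$ by Lemma \ref{lem:Ffinite} and $\pd$ is open in the finite-dimensional space $\mathcal{S}^N$, the standard continuity theorem for proper concave functions on a finite-dimensional space (Rockafellar's Theorem 10.1 in \cite{Rock}) gives continuity, in particular upper semi-continuity, at $G$. If $G \notin \mathcal{S}^N_+$, then $\mathcal{F} \equiv -\infty$ on an entire neighborhood of $G$ because $\mathcal{S}^N_+$ is closed, so upper semi-continuity is automatic. The only nontrivial case is $G \in \partial \mathcal{S}^N_+ = \mathcal{S}^N_+ \setminus \pd$, where $\mathcal{F}[G] = -\infty$ and I must show $\limsup_{G' \to G} \mathcal{F}[G'] = -\infty$. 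The quantitative upper bound from Lemma \ref{lem:Ffinite},
\[
\mathcal{F}[G'] \leq \tfrac{1}{2}\log\bigl[(2\pi e)^N \det G'\bigr] + C_U(1 + \mathrm{Tr}\,G') \quad\text{for } G' \in \pd,
\]
takes care of this: as $G' \to G$ through $\pd$, $\det G' \to \det G = 0$ by continuity, so the right-hand side tends to $-\infty$; and for $G'$ near $G$ but outside $\pd$, $\mathcal{F}[G'] = -\infty$ directly. Finally, closedness of a proper concave function is definitionally equivalent to upper semi-continuity, so nothing further is needed.

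I do not anticipate a serious technical obstacle in carrying out this plan. The potentially delicate point---the behavior of $\mathcal{F}$ as $G$ approaches the singular boundary $\partial \mathcal{S}^N_+$---has already been packaged into Lemma \ref{lem:Ffinite}, whose $\tfrac{1}{2}\log\det G$ upper bound forces the singular divergence of $\mathcal{F}$ at the boundary; everything else is routine finite-dimensional convex analysis plus elementary concavity of the differential entropy.
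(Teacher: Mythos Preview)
Your proposal is correct and matches the paper's own proof essentially step for step: the same $\varepsilon$-approximate-maximizer argument for concavity via concavity of $\Psi$, properness from Lemma~\ref{lem:Ffinite}, and the same three-case analysis for upper semi-continuity (continuity on the open interior from Rockafellar, triviality outside $\mathcal{S}^N_+$, and the $\tfrac12\log\det G$ bound from Lemma~\ref{lem:Ffinite} at the boundary). There is nothing substantively different to compare.
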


Now Eq.~\eqref{eq:omegaInf2} states precisely that $\Omega$ is the \emph{concave
conjugate} of $\mathcal{F}$ with respect to the inner product $\langle
A,G\rangle = \frac{1}{2}\mathrm{Tr}[AG]$, and accordingly we write
$\Omega=\mathcal{F}^{*}$.
Since
$\mathcal{F}$ is concave and closed, we have by Theorem \ref{doubleDual} that 
$\mathcal{F}=\mathcal{F}^{**}=\Omega^{*}$, i.e., $\mathcal{F}$ 
and $\Omega$ are concave duals of one another.
Thus we expect that $\nabla \mathcal{F}$ and $\nabla \Omega$ are
inverses of one another, but to make sense of this claim we need to establish the differentiability of
$\mathcal{F}$. We collect this and other desirable properties of $\mathcal{F}$ in the following:

\begin{lem}
\label{lem:Fdiff}
$\mathcal{F}$ is $C^\infty$-smooth and strictly concave on $\pd$.
\end{lem}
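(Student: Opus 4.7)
My strategy is to show that the gradient map $\nabla\Omega : \dom\Omega \to \pd$ is a $C^\infty$-diffeomorphism, and then to transfer this regularity to $\mathcal{F}$ via the duality $\Omega = \mathcal{F}^*$. Lemma~\ref{lem:Fconcave} already tells us $\mathcal{F}$ is closed, proper, and concave, so differentiability at $G$ reduces (by a standard result in convex analysis, Appendix~\ref{sec:convex}) to showing the subdifferential $\partial\mathcal{F}[G]$ is a singleton, and $\nabla\mathcal{F}$ will be forced to coincide with $(\nabla\Omega)^{-1}$ wherever it is defined.

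\textbf{Bijection of $\nabla\Omega$.} First I would establish injectivity on $\dom\Omega$: this is immediate from the strict concavity asserted in Lemma~\ref{lem:OmegaConcave}, which makes the gradient strictly monotone. For surjectivity onto $\pd$, I would use Lemma~\ref{lem:Ffinite} to note that $\dom\mathcal{F} = \pd$, which is open and equal to its own relative interior. A standard convex-analysis fact then gives $\partial\mathcal{F}[G]\neq\emptyset$ for every $G\in\pd$. The Fenchel equality with $\Omega = \mathcal{F}^*$ turns any $A\in\partial\mathcal{F}[G]$ into a subgradient $G\in\partial\Omega[A]$; this forces $A\in\dom\Omega$ (otherwise $\Omega[A]=-\infty$ and $\partial\Omega[A]$ is empty), and since $\Omega$ is smooth there, $\nabla\Omega[A]=G$.

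\textbf{Hessian and inverse function theorem.} Next I would compute the Hessian of $\Omega$ by differentiating under the integral sign, which is justified on $\dom\Omega$ by the exponential decay of the integrand together with the weak growth condition. A short calculation, using the convention of Definition~\ref{def:grad1} and the identity $\frac{1}{2}x^T E^{(kl)} x = x_k x_l$, yields
\[
\nabla^2\Omega[A](B,B) \;=\; -\mathrm{Var}_{\mu_A}\!\bigl(\tfrac{1}{2}x^T B x\bigr),
\]
where $\mu_A$ is the Gibbs measure with density proportional to $e^{-\frac{1}{2}x^T A x - U(x)}$. Since $\mu_A$ has strictly positive density on $\Rn$ and the polynomial $x\mapsto x^T B x$ is non-constant for any $0\neq B\in\symm$, the variance is strictly positive, so $\nabla^2\Omega[A]$ is negative definite at every $A\in\dom\Omega$. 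Combining this with the bijectivity already obtained, the inverse function theorem upgrades $\nabla\Omega$ to a $C^\infty$-diffeomorphism $\dom\Omega \to \pd$.

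\textbf{Regularity of $\mathcal{F}$ and main obstacle.} Since $\nabla\mathcal{F} = (\nabla\Omega)^{-1}$ on $\pd$ is $C^\infty$, I conclude $\mathcal{F}\in C^\infty(\pd)$. Differentiating the identity $G = \nabla\Omega[\nabla\mathcal{F}[G]]$ shows that the Hessian of $\mathcal{F}$ is the inverse of the Hessian of $\Omega$, hence itself negative definite; together with the convexity of $\pd$, this gives strict concavity. The main obstacle I anticipate is in the surjectivity half of the bijection argument: one must verify that the subgradient produced by duality actually lands in $\dom\Omega$ rather than on its boundary, and that the non-standard pairing $\langle A,G\rangle = \tfrac{1}{2}\Tr[AG]$ is handled consistently with the subdifferential calculus throughout. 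The Hessian computation, while mechanical, must also be done with care because of the factor-of-two subtlety between $\nabla$ and the usual matrix derivative noted in Definitions~\ref{def:grad1}--\ref{def:grad2}.
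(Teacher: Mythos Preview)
Your proposal is correct and follows essentially the same route as the paper: both arguments reduce $C^\infty$-smoothness of $\mathcal{F}$ to the invertibility of $\nabla^2\Omega$ and then apply the implicit/inverse function theorem to conclude that $\nabla\mathcal{F}=(\nabla\Omega)^{-1}$ is smooth. The differences are minor. The paper obtains first-order differentiability and strict concavity of $\mathcal{F}$ in one stroke by invoking the abstract duality result Theorem~\ref{strictConvDiff2} (open domains on both sides, $\Omega$ differentiable and strictly concave $\Rightarrow$ $\mathcal{F}$ differentiable and strictly concave), whereas you unpack this by hand via subdifferential calculus and then read off strict concavity from the negative-definite Hessian at the end. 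Your explicit variance computation $\nabla^2\Omega[A](B,B)=-\mathrm{Var}_{\mu_A}(\tfrac12 x^TBx)$ is in fact a useful addition: the paper's proof simply asserts that ``since $\Omega$ is smooth and strictly concave, $\nabla^2\Omega$ is invertible,'' which is not literally valid in general (strict concavity of a $C^2$ function does not force the Hessian to be nonsingular everywhere), so your argument is the more careful one at precisely this step. The concern you flag about the nonstandard pairing $\langle A,G\rangle=\tfrac12\Tr[AG]$ is not an issue---it is an inner product on $\symm$, and all the convex-analysis machinery of Appendix~\ref{sec:convex} applies verbatim.
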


Then Theorem \ref{invSubThm} guarantees that $\nabla\Omega$ is a
bijection from $\dom\Omega \ra \pd$ with its inverse given by $\nabla\mathcal{F}$. This completes the proof of
Theorem~\ref{thm:variation}.

Finally, following Lemma \ref{lem:OmegaDual1}, together with the splitting of
\eqref{eq:omegaInf1} and the $A\leftrightarrow G$ correspondence of
Theorem~\ref{thm:variation}, we observe that the supremum in
\eqref{eq:Fdef} is attained uniquely at the measure $\ud\mu := \frac{1}{Z[A[G]]} e^{-\frac{1}{2}x^{T}A[G]x-U(x)} \ud x$.

\subsection{The Luttinger-Ward functional and the Dyson equation}\label{sec:lw}

According to Lemma \ref{lem:Ffinite}, 
$\mathcal{F}$ should blow up at least logarithmically as $G$ approaches the
boundary of $\mathcal{S}_{++}^{N}$. Remarkably, we can explicitly separate the part that accounts
for the blowup of $\mathcal{F}$ at the boundary. In fact, subtracting
away this part is how we define the Luttinger-Ward (LW) functional for the Gibbs model. We
will see in this subsection that the definition of the Luttinger-Ward functional
can also be motivated by the stipulation that its gradient (the
self-energy) should
satisfy the Dyson equation.

Consider for a moment the case in which $U\equiv0$, so
\[
\mathcal{F}[G]=\sup_{\mu\in \mathcal{G}^{-1}(G)}\left[H(\mu)-\int U\,\ud\mu\right]=\sup_{\mu\in \mathcal{G}^{-1}(G)}H(\mu).
\]
The random variable $X$ achieving the maximum entropy subject to $\E[X_{i}X_{j}]=G_{ij}$
follows a Gaussian distribution, i.e., $X\sim\mathcal{N}(0,G)$. 
It follows that
\[
\mathcal{F}[G]=\frac{1}{2}\log\left((2\pi e)^{N}\det G\right)=\frac{1}{2}\mathrm{Tr}[\log(G)]+\frac{N}{2}\log(2\pi e).
\]

This motivates, for general $U$, the consideration of the
\textit{Luttinger-Ward functional}
\begin{equation}
\Phi[G]:=2\mathcal{\mathcal{F}}[G]-\mathrm{Tr}[\log(G)]-N\log(2\pi e).\label{eq:LWdef}
\end{equation}
For non-interacting systems, $\Phi[G]\equiv 0$ by construction.

Now we turn to establishing the Dyson equation.
Theorem~\ref{thm:variation} shows that for $A\in\dom\Omega$, the
minimizer $G^{*}$ in \eqref{eq:omegaInf2} satisfies 
$A=\nabla\mathcal{F}[G^{*}]=A[G^{*}]$, so the
minimizer is $G^{*}=G[A]$. Recall 
\[
\mathcal{F}[G] = \frac{1}{2}\mathrm{Tr}[\log(G)] + \frac{1}{2}\Phi[G] +\frac{1}{2}N
\log(2\pi e).
\]
Taking gradients and plugging into $A=\nabla\mathcal{F}[G^*]$ yields
\[
0=A-(G^{*})^{-1}- \frac{1}{2} \nabla\Phi[G^{*}].
\]

Define the self-energy $\Sigma$ as a functional of $G$ by
$\Sigma[G]:= \frac{1}{2} \nabla\Phi[G] = \frac{\partial \Phi}{\partial G} [G]$. 
Then we have established that for $G=G[A]$, 
\begin{equation}
  G^{-1}=A-\Sigma[G].
  \label{eqn:dysonLW}
\end{equation}
Moreover, by the strict concavity of $\mathcal{F}$, $G=G[A]$ is the \textit{unique} $G$ solving \eqref{eqn:dysonLW}.

Eq.~\eqref{eqn:dysonLW} is in fact the Dyson equation as in section 3.8 of Part I of this series. 
To see this, recall from Eq.~\eqref{eqn:gaussian} that the non-interacting Green's function $G^{0}$ is
given by $G^{0}=A^{-1}$, so we have 
\[
G^{-1}=(G^{0})^{-1}-\Sigma[G].
\]
 Left- and right-multiplying by $G^{0}$ and $G$, respectively, and
then rearranging, we obtain 
\[
G=G^{0}+G^{0}\Sigma[G]G.
\]
However, Eq.~\eqref{eqn:gaussian} requires $G^{0}$ to be well defined, i.e., $A\in\mathcal{S}_{++}^{N}$.   On the other hand, the Dyson
equation~\eqref{eqn:dysonLW} derived from the LW functional does not
rely on this assumption and makes sense for all $A\in\dom\Omega$. Nonetheless, 
if for fixed $A$ one seeks to approximately solve the Dyson equation for $G$ by inserting 
an ansatz for the self-energy obtained from many-body perturbation theory, one must be 
wary in the case that $A \notin \pd$; see section \ref{sec:failbold}.

\subsection{Transformation rule for the LW
functional}\label{sec:transformation}

Though the dependence of the Luttinger-Ward functional on the
interaction $U$ was only implicit in the previous section, we now explicitly
consider this dependence, including it in our notation as $\Phi[G,U]$.
The same convention will be followed for other functionals without
comment.  Proposition~\ref{prop:LWtransformation} relates a
transformation of the interaction with a corresponding transformation of
the Green's function.

\begin{prop}[Transformation rule]
\label{prop:LWtransformation}Let $G\in\mathcal{S}_{++}^{N}$, 
$U$ be an interaction satisfying the weak growth condition. Let $T$
denote an invertible matrix in $\R^{N\times N}$, as well as the
corresponding linear transformation $\Rn\ra\Rn$. Then 
\[
\Phi[TGT^{*},U]=\Phi[G,U\circ T].
\]
 \end{prop}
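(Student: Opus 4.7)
The plan is to reduce the identity about $\Phi$ to an analogous identity about $\mathcal{F}$ and then bookkeep the logarithmic corrections. Specifically, from the definition
\[
\Phi[G,U] = 2\mathcal{F}[G,U] - \Tr[\log G] - N\log(2\pi e),
\]
and the fact that $\Tr[\log(TGT^*)] = \log\det(TGT^*) = 2\log|\det T| + \Tr[\log G]$, the transformation rule for $\Phi$ is equivalent to
\[
\mathcal{F}[TGT^{*},U] \;=\; \mathcal{F}[G,U\circ T] + \log|\det T|.
\]
So the main task is to prove this identity for the concave dual $\mathcal{F}$.

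The key observation is that the pushforward map $\mu \mapsto T\#\mu$ is a bijection from $\mathcal{G}^{-1}(G)$ onto $\mathcal{G}^{-1}(TGT^{*})$, since if $X \sim \mu$ then $TX \sim T\#\mu$ and $\E[(TX)(TX)^{T}] = T\,\mathcal{G}(\mu)\,T^{*}$. Under this bijection I would verify two standard change-of-variables formulas. First, for any $\mu\in\mathcal{M}_2$ that is absolutely continuous (the only case in which $H$ is finite), the substitution $y = Tx$ in the density gives $H(T\#\mu) = H(\mu) + \log|\det T|$; for singular $\mu$ both sides are $-\infty$ since invertibility of $T$ preserves singularity with respect to Lebesgue measure. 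Second, by the definition of pushforward, $\int U\,\ud(T\#\mu) = \int (U\circ T)\,\ud\mu$. Combining these,
\[
H(T\#\mu) - \int U\,\ud(T\#\mu) \;=\; \Bigl[H(\mu) - \int (U\circ T)\,\ud\mu\Bigr] + \log|\det T|.
\]
Taking the supremum over $\mu \in \mathcal{G}^{-1}(G)$ on the right, and noting the supremum on the left runs over all of $\mathcal{G}^{-1}(TGT^{*})$ by the bijection, yields the $\mathcal{F}$-identity above, from which the $\Phi$-identity follows immediately after substituting into the definition.

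Before applying any of this I should verify the mild prerequisite that $U\circ T$ itself satisfies the weak growth condition, so that $\Phi[\,\cdot\,,U\circ T]$ is well defined. For the pointwise lower bound, $U(Tx) + C_U(1+\|Tx\|^2) \geq 0$ gives $U(Tx) + C_U\max(1,\|T\|^2)(1+\|x\|^2) \geq 0$. For openness of $\dom\Omega[\,\cdot\,,U\circ T]$, the linear substitution $y=Tx$ in the partition function yields $Z[A,U\circ T] = |\det T|^{-1}\,Z[T^{-T}AT^{-1},U]$, so $\dom\Omega[\,\cdot\,,U\circ T]$ is the image of $\dom\Omega[\,\cdot\,,U]$ under the linear homeomorphism $A\mapsto T^{T}AT$ and hence is open.

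I do not anticipate a serious obstacle here; the main point requiring care is simply the change-of-variables formula for the differential entropy, together with the check that absolute continuity is preserved by the pushforward (which uses only the invertibility of $T$). The rest is algebraic bookkeeping, and the cancellation $2\log|\det T| - 2\log|\det T| = 0$ between the entropy correction and the $\Tr[\log]$ correction is exactly what makes the transformation rule clean in terms of $\Phi$ rather than $\mathcal{F}$.
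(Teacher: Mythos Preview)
Your proof is correct and follows essentially the same route as the paper: both establish the identity via the variational formula for $\mathcal{F}$ (equivalently $\Phi$), using that $\mu\mapsto T\#\mu$ is a bijection $\mathcal{G}^{-1}(G)\to\mathcal{G}^{-1}(TGT^*)$ together with the change-of-variables formulas $H(T\#\mu)=H(\mu)+\log|\det T|$ and $\int U\,\ud(T\#\mu)=\int(U\circ T)\,\ud\mu$. Your explicit verification that $U\circ T$ inherits the weak growth condition is a nice addition that the paper leaves implicit.
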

\begin{proof}
For $G\in\pd$, note that the supremum in \eqref{eq:Fdef} can be restricted to the set of $\mu \in \mathcal{G}^{-1}(G)$ that have densities with respect to the Lebesgue measure. (Indeed, for any $\mu\in\mathcal{M}_2$ that does not have a density, $H(\mu)-\int U\,\ud\mu = -\infty$.) Then observe
\begin{eqnarray*}
\Phi[G,U] & = & -N\log(2\pi e) - \log\det G + 2 \sup_{\mu\in \mathcal{G}^{-1}(G)}\left[H(\mu)-\int U\,\ud\mu\right] \\
& = & -N\log(2\pi e) - \log\det G - 2 \inf_{\{\rho \,:\, \rho\,\ud x \in \mathcal{G}^{-1}(G)\}}\left[ \int \left(\log \rho + U \right)\,\rho\,\ud x \right] \\
& = &-N\log(2\pi e) - 2 \inf_{\{\rho\,:\,\rho\,\ud x\in \mathcal{G}^{-1}(G)\}}\left[ \int \left(\log \left[(\det G)^{1/2} \rho\right] + U \right)\,\rho\,\ud x \right].
\end{eqnarray*}
Going forward we will denote $C:=-N\log(2\pi e)$. 

Then for $T$ invertible, we have 
\[
\Phi[TGT^*,U] = C - 2 \inf_{\rho\,\ud x\in \mathcal{G}^{-1}(TGT^*)}\left[ \int \left(\log \left[(\det G)^{1/2}\cdot\vert \det T\vert\cdot \rho\right] + U \right)\,\rho\,\ud x \right].
\]
Now observe by changing variables that
\[
\left\{\rho \,:\, \rho\ \ud x \in \mathcal{G}^{-1}(TGT^*)\right\} =
\left\{\vert\det T\vert^{-1}\cdot\rho\circ T^{-1} \,:\, \rho\ \ud x \in \mathcal{G}^{-1}(G)\right\}.
\]
Therefore
\begin{eqnarray*}
\Phi[TGT^*,U] &=& C - 2 \inf_{\rho\,\ud x\in \mathcal{G}^{-1}(G)}\left[ \vert\det T\vert^{-1} \int \left(\log \left[(\det G)^{1/2} \cdot \rho\circ T^{-1}\right] + U \right)\,\rho\circ T^{-1}\,\ud x \right] \\
& = & C - 2 \inf_{\rho\,\ud x\in \mathcal{G}^{-1}(G)}\left[ \int \left(\log \left[(\det G)^{1/2} \cdot \rho \right] + U\circ T \right)\,\rho\,\ud x \right] \\
& = & \Phi[G,U\circ T],
\end{eqnarray*}
as was to be shown.
\end{proof}

\begin{rem}
  Since $T$ is real, the Hermite conjugation $T^{*}$ is the same as the
  matrix transpose, and this is used simply to avoid the notation
  $T^{T}$. 
\end{rem}

From the transformation rule we have the following corollary:
\begin{cor}
\label{cor:fourthOrderScaling}Let $G\in\mathcal{S}_{++}^{N}$, and
consider an interaction $U$ which is a homogeneous polynomial of
degree $4$ satisfying the weak growth condition. For $\lambda>0$, we have 
\[
\Phi[\lambda G,U]=\Phi[G,\lambda^{2}U].
\]
\end{cor}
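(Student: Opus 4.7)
The plan is to apply the transformation rule of Proposition \ref{prop:LWtransformation} with a suitably chosen scalar multiple of the identity. Specifically, I would take $T = \sqrt{\lambda}\, I_N$, which is invertible since $\lambda > 0$. Then $T^* = \sqrt{\lambda}\, I_N$ as well, so $T G T^* = \lambda G$, which matches the left-hand side of the desired identity.

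Next I would compute the composed interaction. Since $T$ acts as scaling by $\sqrt{\lambda}$, we have $(U \circ T)(x) = U(\sqrt{\lambda}\, x)$. The key observation is that $U$ is a homogeneous polynomial of degree $4$, so $U(\sqrt{\lambda}\, x) = (\sqrt{\lambda})^{4} U(x) = \lambda^2 U(x)$. Hence $U \circ T = \lambda^2 U$.

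It remains to verify that $\lambda^2 U$ still satisfies the weak growth condition, which is immediate: if $U(x) + C_U(1 + \|x\|^2) \geq 0$, then $\lambda^2 U(x) + \lambda^2 C_U(1 + \|x\|^2) \geq 0$, so $\lambda^2 U$ satisfies the weak growth condition with constant $\lambda^2 C_U$. (The openness of $\dom\Omega$ for the scaled interaction also follows by rescaling.) Thus Proposition \ref{prop:LWtransformation} applies and yields
\[
\Phi[\lambda G, U] = \Phi[T G T^*, U] = \Phi[G, U \circ T] = \Phi[G, \lambda^2 U],
\]
completing the proof.

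There is essentially no obstacle here beyond identifying the correct $T$; the entire content is a one-line consequence of the transformation rule combined with the degree-$4$ homogeneity of $U$.
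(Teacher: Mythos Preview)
Your proof is correct and matches the paper's intended approach: the paper simply states the corollary as an immediate consequence of the transformation rule (Proposition~\ref{prop:LWtransformation}), and your choice $T=\sqrt{\lambda}\,I_N$ together with degree-$4$ homogeneity is exactly the one-line argument that is implicit there.
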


\subsection{Impurity problems and the projection rule}\label{sec:projection}

For the impurity problem, the interaction only depends on 
a subset of the variables $x_1,\ldots,x_N$, namely the fragment. In such a case,
the Luttinger-Ward functional can be related to a
lower-dimensional Luttinger-Ward functional corresponding to the fragment. This relation, called the
projection rule, is given in Proposition~\ref{prop:projection} below. In the notation, we will now explicitly
indicate the dimension $d$ of the state space associated with the
Luttinger-Ward functional via subscript as in $\Phi_{d}[G,U]$, since
we will be considering functionals for state spaces of different dimensions.
We will follow the same convention for other functionals without comment.

Before we state the projection rule, we record some remarks on the domain of 
$\Omega$ and growth conditions in the context of impurity problems. Suppose that the interaction $U$ depends only on $x_1, \ldots, x_p$, where $p\leq N$, so $U$ can alternatively be considered as a function on $\R ^p$. Notice that even if $U$ satisfies the strong growth condition as a function on $\R^p$, it is of
course \emph{not} true that $\mathrm{dom}\left(
\Omega_N[\,\cdot\,,U] \right) = \symm$. As mentioned above, this provides a natural
reason to consider interactions that do not grow fast in all directions
and motivates the generality of our previous considerations.

In fact, for
\[
A = \left(\begin{array}{cc}
A_{11} & A_{12} \\
A_{12}^T & A_{22}
\end{array}\right),
\]
one can show by Fubini's theorem, integrating out the last $N-p$ variables in \eqref{eq:OmegaDef}, that $A\in \mathrm{dom}\left( \Omega_N[\,\cdot\,,U] \right)$ if and only if both 
\[
A_{22} \in \mathcal{S}^{N-p}_{++}\ \mathrm{and}\  A_{11} - A_{12} A_{22}^{-1} A_{12}^T \in \mathrm{dom}\,\left( \Omega_p[\,\cdot\,,U] \right).
\]
Moreover, one can show that for such $A$,
\[
\Omega_N[A,U] = \Omega_p \left[A_{11} - A_{12} A_{22}^{-1} A_{12}^T,\ U
\right] + \frac{1}{2}\log( (2\pi)^{p-N} \det A_{22}).
\]
Therefore, if $\mathrm{dom}\,\left( \Omega_p[\,\cdot\,,U(\,\cdot\,,0)] \right)$ is open, then so is $\mathrm{dom}\left( \Omega_N[\,\cdot\,,U] \right)$. It follows that if $U$ satisfies the weak growth condition as a function on $R^p$, then $U$ also satisfies the weak growth condition as a function on $\R ^N$.

\begin{prop}[Projection rule]
\label{prop:projection}
Let $p\leq N$. Suppose that $U$ depends only
on $x_1,\ldots,x_p$ and satisfies the weak growth condition. Hence we can think of $U$ as a function on both $\Rn$ and $\R^p$.
Then for $G\in\pd$,
\[
\Phi_{N}\left[G,U\right]=\Phi_{p}\left[G_{11},U\right],
\]
where $G_{11}$ is the upper-left $p\times p$ block of $G$.
\end{prop}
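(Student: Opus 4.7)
The plan is to first apply the transformation rule (Proposition~\ref{prop:LWtransformation}) to reduce to the case where $G$ has block-diagonal structure, and then to exploit the fact that in that case the unique maximizer in the supremum defining $\mathcal{F}_N$ is a product measure.

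For the reduction step, I would consider the invertible matrix
\[
T := \begin{pmatrix} I_p & 0 \\ -G_{12}^T G_{11}^{-1} & I_{N-p} \end{pmatrix},
\]
partitioned conformally with $\R^N = \R^p \times \R^{N-p}$. Because the top-right block of $T$ vanishes and $T_{11}=I_p$, we have $U\circ T = U$ as a function on $\R^N$, and a direct block computation gives $TGT^T = \mathrm{diag}(G_{11},\wt G_{22})$, with Schur complement $\wt G_{22} := G_{22} - G_{12}^T G_{11}^{-1} G_{12} \in \mathcal{S}^{N-p}_{++}$. The transformation rule then collapses the statement to the assertion $\Phi_N[\mathrm{diag}(G_{11},\wt G_{22}),U] = \Phi_p[G_{11},U]$.

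To handle the block-diagonal case, I would construct the supremizer $\mu^*$ of the $N$-dimensional problem $\mathcal{F}_N[\mathrm{diag}(G_{11},\wt G_{22}),U]$ by hand. Let $\rho_1$ be the (unique) supremizer for the $p$-dimensional problem $\mathcal{F}_p[G_{11},U]$; by the final remark of section~\ref{sec:variation}, $\rho_1 \propto e^{-\frac12 y^T A_p[G_{11}] y - U(y)}$. Let $\nu$ be the centered Gaussian on $\R^{N-p}$ with covariance $\wt G_{22}$, and set $\mu^* := \rho_1 \otimes \nu$. A quick check shows that $\mathcal{G}(\mu^*) = \mathrm{diag}(G_{11},\wt G_{22})$, while the product of the densities of $\rho_1$ and $\nu$ is of Gibbs form corresponding to the matrix $A := \mathrm{diag}(A_p[G_{11}],\wt G_{22}^{-1})$. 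By the uniqueness of the Gibbs-form maximizer asserted in Theorem~\ref{thm:variation} applied in dimension $N$, this $\mu^*$ is indeed the supremizer.

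Since $\mu^*$ is a product measure and $U$ depends only on the first $p$ coordinates, one has $H(\mu^*)=H(\rho_1)+H(\nu)$ and $\int U\,\ud\mu^* = \int U\,\ud\rho_1$; by the Gaussian equality case of Lemma~\ref{lem:entropyMoment}, $H(\nu) = \tfrac12 \log\bigl[(2\pi e)^{N-p}\det \wt G_{22}\bigr]$. Assembling these gives
\[
\mathcal{F}_N[\mathrm{diag}(G_{11},\wt G_{22}),U] = \mathcal{F}_p[G_{11},U] + \tfrac12\log\bigl[(2\pi e)^{N-p}\det \wt G_{22}\bigr].
\]
Plugging into the definition \eqref{eq:LWdef} of $\Phi_N$ and using $\log\det(\mathrm{diag}(G_{11},\wt G_{22})) = \log\det G_{11} + \log\det \wt G_{22}$, the $\wt G_{22}$- and $(2\pi e)^{N-p}$-terms cancel exactly, leaving $\Phi_p[G_{11},U]$. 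I do not expect any serious obstacles; the only step requiring care is the identification of $\mu^*$ as the maximizer in the block-diagonal case, which I would handle by invoking the uniqueness of the Gibbs-form maximizer from Theorem~\ref{thm:variation} rather than by a direct variational argument.
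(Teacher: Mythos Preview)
Your proof is correct. The reduction to the block-diagonal case via the transformation rule is exactly what the paper does (with the same lower-triangular $T$, up to inversion). Where you diverge is in the treatment of the block-diagonal case: the paper proves $\mathcal{F}_N[\mathrm{diag}(G_{11},\wt G_{22}),U] = \mathcal{F}_p[G_{11},U] + \tfrac12\log[(2\pi e)^{N-p}\det \wt G_{22}]$ by two inequalities---an upper bound using the product-entropy inequality $H(\mu)\leq H(\pi_1\#\mu)+H(\pi_2\#\mu)$ (Fact~\ref{fact:productEntropy}) together with the marginal second-moment constraints, and a lower bound by plugging in $\mu_1\times\nu$ for arbitrary $\mu_1\in\mathcal{G}_p^{-1}(G_{11})$---whereas you bypass both inequalities by directly exhibiting the maximizer $\mu^*=\rho_1\otimes\nu$, observing that it is of Gibbs form for $A=\mathrm{diag}(A_p[G_{11}],\wt G_{22}^{-1})$ with $\mathcal{G}(\mu^*)$ equal to the target, and invoking the $A\leftrightarrow G$ bijection of Theorem~\ref{thm:variation} together with the remark at the end of section~\ref{sec:variation} identifying the unique supremizer. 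Your route is shorter and avoids Fact~\ref{fact:productEntropy} entirely, at the cost of leaning on the explicit Gibbs-form characterization of the maximizer; the paper's route is more purely variational and makes the role of the product structure in the entropy more transparent. One small point: the uniqueness you invoke is stated in the remark following Lemma~\ref{lem:Fdiff} rather than in Theorem~\ref{thm:variation} itself, so you may want to adjust the citation.
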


\begin{rem}
If $U$ can be made to depend only on $p\leq N$ variables by linearly
changing variables, then we can use the projection rule in combination
with the transformation rule (Proposition \ref{prop:LWtransformation})
to reveal the relationship with a lower-dimensional Luttinger-Ward
functional, though we do not make this explicit here with a formula.
\end{rem}

\begin{cor}
\label{cor:projection}
Let $p\leq N$, and $P$ be the orthogonal projection onto the subspace $\mathrm{span}\,\{e_1^{(N)},\ldots,e_p^{(N)}\}$. Suppose that $U(\,\cdot\,,0)$ satisfies the weak growth condition. Then for $G \in \pd$,
\[
\Phi_{N}\left[G,U\circ P\right]=\Phi_{p}\left[G_{11},U(\,\cdot\,,0)\right],
\]
where $G_{11}$ is the upper-left $p\times p$ block of $G$.
\end{cor}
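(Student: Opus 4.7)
The plan is to reduce the corollary to a direct application of the projection rule (Proposition \ref{prop:projection}) with the interaction taken to be $\wt U := U \circ P$. First I would observe that since $P$ zeros out the last $N-p$ coordinates, $(U \circ P)(x_1,\ldots,x_N) = U(x_1,\ldots,x_p,0,\ldots,0)$, so $\wt U$ depends only on $x_1,\ldots,x_p$. Viewed as a function on $\R^p$ (the fragment), $\wt U$ coincides exactly with $U(\,\cdot\,,0)$.

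Next I would verify that $\wt U$ satisfies the weak growth condition on $\R^N$, which is the hypothesis required to invoke Proposition \ref{prop:projection}. The quadratic lower bound portion transfers from $U(\,\cdot\,,0)$ on $\R^p$ to $\wt U$ on $\R^N$ with the same constant, since the two functions agree on their nontrivial coordinates and $\Vert(x_1,\ldots,x_p)\Vert \leq \Vert x\Vert$. The openness portion is supplied by exactly the Fubini-based calculation recorded in the paragraph preceding Proposition \ref{prop:projection}: openness of $\dom \Omega_p[\,\cdot\,, U(\,\cdot\,,0)]$ implies openness of $\dom \Omega_N[\,\cdot\,, \wt U]$.

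Applying Proposition \ref{prop:projection} to $\wt U$ and chaining the two identifications above then yields
\[
\Phi_N[G, U \circ P] \;=\; \Phi_N[G, \wt U] \;=\; \Phi_p[G_{11}, \wt U] \;=\; \Phi_p[G_{11}, U(\,\cdot\,,0)],
\]
which is the claim. I do not anticipate any substantive obstacle; the only care needed is the routine bookkeeping of viewing $\wt U$ simultaneously as a function on $\R^N$ and on $\R^p$, which is precisely the identification already in play in the statement of Proposition \ref{prop:projection} itself.
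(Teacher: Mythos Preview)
Your proposal is correct and is exactly the intended derivation: the paper states Corollary~\ref{cor:projection} without a separate proof, treating it as an immediate instance of Proposition~\ref{prop:projection} applied to $\wt U = U\circ P$, and you have supplied precisely the routine verification (dependence on $x_1,\ldots,x_p$ only, identification of $\wt U$ on $\R^p$ with $U(\,\cdot\,,0)$, and transfer of the weak growth condition via the Fubini remark preceding the proposition) that makes this application legitimate.
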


\begin{proof}
(Of Proposition \ref{prop:projection}.)
First we observe that we can assume that $G$ is block-diagonal. To see
this, let $G\in \pd$, and write
\[
G = \left(\begin{array}{cc}
G_{11} & G_{12} \\
G_{12}^T & G_{22}
\end{array}\right).
\]
Then block Gaussian elimination reveals that
\[
G = 
\left(\begin{array}{cc}
I & 0\\
G_{12}^{T}G_{11}^{-1} & I
\end{array}\right)\left(\begin{array}{cc}
G_{11} & 0\\
0 & G_{22}-G_{12}^{T}G_{11}^{-1}G_{12}
\end{array}\right)\left(\begin{array}{cc}
I & G_{11}^{-1}G_{12}\\
0 & I
\end{array}\right).
\]
Define 
\[
T:= \left(\begin{array}{cc}
I & 0\\
G_{12}^{T}G_{11}^{-1} & I
\end{array}\right),
\ \ 
\widetilde{G} :=
\left(\begin{array}{cc}
G_{11} & 0\\
0 & G_{22}-G_{12}^{T}G_{11}^{-1}G_{12}
\end{array}\right),
\]
so $G = T\widetilde{G}T^*$. Then by the transformation rule, we have
\[
\Phi_N[G,U] = \Phi_N[\widetilde{G},U\circ T] = \Phi_N[\widetilde{G},U],
\]
where the last equality uses the fact that $U$ depends only on the first $p$ arguments, which are unchanged by the transformation $T$.

Since $\widetilde{G}$ is block-diagonal with the same upper-left block as $G$, we have reduced to the block-diagonal case, as claimed, so now assume that $G\in \pd$ with
\[
G = \left(\begin{array}{cc}
G_{11} & 0 \\
0 & G_{22}
\end{array}\right).
\]

Recall the expression for $\mathcal{F}_N$:
\[
\mathcal{F}_N [G,U] = \sup_{\mu\in \mathcal{G}_N^{-1}(G)}\left[H(\mu)-\int U\,\ud\mu\right].
\]

Next define $\pi_1:\Rn \ra \R^{p}$ and $\pi_2:\Rn \ra \R^{N-p}$ to be
the projections onto the first $p$ and last $N-p$ components,
respectively. Then with `$\#$' denoting the pushforward operation on
measures, $\pi_1 \# \mu$ and $\pi_2 \# \mu$ are the marginals of $\mu$
with respect to the product structure $\Rn = \R^p \times \R^{N-p}$.
Now recall Fact \ref{fact:productEntropy}, in particular the inequality $H(\mu) \leq H(\pi_1 \# \mu) + H(\pi_2 \# \mu)$. Also note that if $\mu
\in \mathcal{G}_N^{-1}(G)$, then $\pi_1 \# \mu \in \mathcal{G}_p^{-1}(G_{11})$ and $\pi_2 \#
\mu \in \mathcal{G}_{N-p}^{-1}(G_{22})$. Finally observe that since $U$ depends
only on the first $p$ arguments, $\int U\,\ud\mu = \int U\,d(\pi_1 \#
\mu)$ for any $\mu$. Therefore
\begin{eqnarray*}
\mathcal{F}_N [G,U] & \leq & \sup_{\mu\in \mathcal{G}_N^{-1}(G)}\left[H(\pi_1 \# \mu) + H(\pi_2 \# \mu) -\int U\,d(\pi_1 \# \mu) \right] \\
& \leq & \sup_{\mu_1 \in \mathcal{G}_p^{-1}(G_{11})} \left[H(\mu_1) -\int U\,\ud\mu_1 \right] + \sup_{\mu_2 \in \mathcal{G}_{N-p}^{-1}(G_{22})}\left[H(\mu_2)  \right] \\
& = & \mathcal{F}_p [G_{11},U] + \frac{1}{2}\log( (2\pi e)^{N-p} \det G_{22}).
\end{eqnarray*}
Since $\det G = \det G_{11} \det G_{22}$, it follows that 
\[
\Phi_N [G,U] \leq \Phi_p [G_{11},U].
\]

For the reverse inequality, let $\mu_1$ be arbitrary in $\mathcal{G}_p^{-1}(G_{11})$, and consider $\mu:= \mu_1 \times \mu_2$, where $\mu_2$ is given by the normal distribution with mean zero and covariance $G_{22}$. Then 
\[
\mathcal{F}_N [G,U] \geq H(\mu) - \int U\,\ud\mu = H(\mu_1)- \int U\,\ud\mu_1 + \frac{1}{2}\log( (2\pi e)^{N-p} \det G_{22}).
\]
Since $\mu_1$ is arbitrary in $\mathcal{G}_p^{-1}(G_{11})$, it follows by taking the supremum over $\mu_1$ that 
\[
\mathcal{F}_N [G,U] \geq \mathcal{F}_p [G_{11},U] + \frac{1}{2}\log( (2\pi e)^{N-p} \det G_{22}),
\]
which implies 
\[
\Phi_N [G,U] \geq \Phi_p [G_{11},U].
\]
\end{proof}

\begin{rem}
The proof suggests that for $U$ depending only on the first $p$
arguments and $G$ block-diagonal, the supremum in the definition of $\mathcal{F}$ is attained by a product measure, which is perhaps not surprising. The proof also suggests, however, that for such $U$ and general $G$, the supremum is attained by taking a product measure and then `correlating' it via the transformation $T$.
\end{rem}

For the impurity problem, Proposition
\ref{prop:projection} immediately implies that the self-energy has a particular 
sparsity pattern:
\begin{cor}
\label{cor:SEsparse}
Let $p\leq N$ and suppose that $U$ (satisfying the weak growth condition) depends only on $x_1,\ldots,x_p$. Then 
\[
\Sigma_N [G,U] = \left(\begin{array}{cc}
\Sigma_p [G_{11},U] & 0 \\
0 & 0
\end{array}\right).
\]
\end{cor}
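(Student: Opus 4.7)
The plan is to deduce the sparsity of $\Sigma_N[G,U]$ as a direct consequence of the projection rule (Proposition \ref{prop:projection}), since the self-energy is defined as $\Sigma_N[G,U]=\frac{\partial \Phi_N}{\partial G}[G,U]=\tfrac{1}{2}\nabla \Phi_N[G,U]$ (Definitions \ref{def:grad1} and \ref{def:grad2}). In other words, I want to differentiate the identity $\Phi_N[G,U]=\Phi_p[G_{11},U]$ entry by entry in $G$ and read off the block structure.

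Concretely, fix $G\in\pd$ and $(i,j)$ with $1\le i,j\le N$. Since $\pd$ is open, $G+\delta E^{(ij)}\in\pd$ for all $\delta$ in some neighborhood of $0$, so the projection rule applies to $G+\delta E^{(ij)}$ and gives
\[
\Phi_N[G+\delta E^{(ij)},U]=\Phi_p\bigl[(G+\delta E^{(ij)})_{11},U\bigr].
\]
The key observation is how the upper-left $p\times p$ block of $E^{(ij)}$ behaves. Recall that $E^{(ij)}_{kl}=\delta_{ik}\delta_{jl}+\delta_{il}\delta_{jk}$, which is nonzero only at the positions $(i,j)$ and $(j,i)$. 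Therefore, if either $i>p$ or $j>p$, the entire upper-left $p\times p$ block of $E^{(ij)}$ vanishes, so $(G+\delta E^{(ij)})_{11}=G_{11}$ and
\[
\nabla_{ij}\Phi_N[G,U]=\lim_{\delta\to 0}\frac{\Phi_p[G_{11},U]-\Phi_p[G_{11},U]}{\delta}=0.
\]
On the other hand, if $1\le i,j\le p$, then $E^{(ij)}$ (viewed as an element of $\symm$) has its support entirely in the upper-left block, and $(G+\delta E^{(ij)})_{11}=G_{11}+\delta \widetilde{E}^{(ij)}$, where $\widetilde{E}^{(ij)}\in\mathcal{S}^p$ is the $p\times p$ version of the same basis perturbation. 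Thus
\[
\nabla_{ij}\Phi_N[G,U]=\nabla_{ij}\Phi_p[G_{11},U],
\]
and dividing by $2$ gives $(\Sigma_N[G,U])_{ij}=(\Sigma_p[G_{11},U])_{ij}$.

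Combining the two cases yields exactly the claimed block form. There is essentially no main obstacle here beyond careful bookkeeping of the symmetric gradient convention: since the projection rule is already established and the set $\pd$ is open, the differentiation is legitimate, and the support of $E^{(ij)}$ immediately dictates which entries of $\Sigma_N$ inherit nontrivial contributions from $\Phi_p$ and which must vanish.
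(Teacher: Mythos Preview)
Your proof is correct and is precisely the argument the paper has in mind: the corollary is stated as an immediate consequence of the projection rule, and you have simply made explicit the entrywise differentiation of the identity $\Phi_N[G,U]=\Phi_p[G_{11},U]$ using Definitions~\ref{def:grad1} and~\ref{def:grad2}. There is nothing to add.
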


For example, consider $U(x) = \frac{1}{8} \sum_{ijkl} v_{ij} x_i^2  x_j^2$.
Here the stipulation that $U$ depend only on the first $p$ arguments
corresponds to the stipulation that $v_{ij} = 0$ unless $i,j \leq
p$. For such an interaction, in the bold diagrammatic expansion for $\Phi$
and $\Sigma$, any term in which $G_{ij}$ appears will be zero unless
$i,j\leq p$. This is a non-rigorous perturbative explanation of the fact that $\Phi$
depends only on the upper-left block of $G$, which in turn explains the
sparsity structure of $\Sigma$, as well as the fact that $\Sigma$ also
depends only on the upper-left block of $G$. However, the developments of this 
section apply to interactions $U$ of far greater generality and which may indeed be
non-polynomial, hence not admitting of a bold diagrammatic expansion.

\subsection{Continuous extension of the LW functional to the boundary}
\label{sec:continuousState}
The 
discussion in this subsection is only heuristic, and the proofs of the 
theorems stated here are deferred to section \ref{sec:boundary}.

Now in section~\ref{sec:variation} we saw that 
the functional $\mathcal{F}[G]$ diverges at the boundary
$\partial \mathcal{S}_{+}^{N} = \mathcal{S}_{+}^{N} \backslash
\mathcal{S}_{++}^{N}$. On the other hand, the projection rule together with the
transformation rule, motivates the formula by which we can extend $\Phi$
continuously up to the boundary $\partial \mathcal{S}_{+}^{N}$.

Indeed, suppose that $T^{(j)} \ra P$, where $T^{(j)}$ is invertible and $P$
is the orthogonal projection onto the first $p$ components, as in Corollary \ref{cor:projection}. Then for $G\in \pd$, 
\[
 \Phi_N [T^{(j)}G(T^{(j)})^*,U] = \Phi_N [G,U\circ T^{(j)}] .
\]
By naively taking limits of both sides, we expect that
\[
\Phi_N [PGP, U] = \Phi_N [G,U\circ P]
\]
where $G_{11}$ is the upper-left $p\times p$ block of $G$. Then by the projection rule we expect
\[
 \Phi_N \left[ \left(\begin{array}{cc} G_{11} & 0 \\
0 & 0
\end{array}\right), U \right]
= \Phi_p [G_{11},U(\,\cdot\,,0)],
\]
where $G_{11}$ is the upper-left $p\times p$ block of $G$. After possibly changing coordinates via 
the transformation rule, this formula provides a general recipe for evaluating the LW functional on 
the boundary $\partial \psd$, which is the content of Theorem \ref{thm:contUpTo} below.

Unfortunately, there are nontrivial analytic difficulties that are hidden by this heuristic derivation.
In fact there exists an interaction $U$ satisfying the weak growth 
condition for which the continuous extension property fails. Since the discussion of this counterexample 
is somewhat involved, it is postponed to section \ref{subsec:counterexample}. However, 
the continuous extension property is true for $U$ satisfying the strong growth condition of Definition \ref{def:Ugrowth2}.

Before stating the continuous extension property in Theorem \ref{thm:contUpTo}, we provide a 
more careful discussion of the structure of the boundary $\partial \psd$. Consider a $q$-dimensional 
subspace $K$ of $\Rn$, and let $p=N-q$.
Then the set
\[
S_{K}:=\left\{ G\in\mathcal{S}_{+}^{N}\,:\,\ker G=K\right\} 
\]
forms a `stratum' of the boundary of $\mathcal{S}_{+}$, which is
itself isomorphic to the set of $p\times p$ positive definite matrices.
In turn, one can consider boundary strata (of smaller dimension) nested
inside of $S_{K}$.

We will show that the restriction of the Luttinger-Ward function to
such a stratum is precisely the Luttinger-Ward function for a lower-dimensional
system. To this end, fix a subspace $K$ and choose any orthonormal
basis $v_{1},\ldots,v_{p}$ for $K^{\perp}$. (The choice of basis
is not canonical but can be made for the purpose of writing down results explicitly.)
Define $V_{p}:=[v_{1},\ldots,v_{p}]$. We use this notation to indicate both the matrix 
and the corresponding linear map.

\begin{thm}[Continuous extension, I]
  \label{thm:contUpTo}Suppose that $U$ is continuous and satisfies the strong growth condition. With notation as in the preceding discussion,
$\Phi_{N}[\,\cdot\,,U]$ extends continuously to $S_{K}$ via the rule
\[
\Phi_{N}\left[G,U\right]=\Phi_{p}\left[V_{p}^{*}GV_{p},U\circ V_{p} \right]
\]
 for $G\in S_{K}$. Consequently, $\Phi_{N}[\,\cdot\,,U]$ extends continuously
to all of $\mathcal{S}_{+}^{N}$.\end{thm}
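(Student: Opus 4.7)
The plan is to verify the extension formula along arbitrary sequences $G^{(j)}\in\pd$ with $G^{(j)}\to G^*\in S_K$. First I would reduce to a canonical form: applying the transformation rule (Proposition~\ref{prop:LWtransformation}) with the orthogonal matrix $V=[V_p,V_q]$, where $V_q$ is any orthonormal basis of $K$, gives $\Phi_N[G,U]=\Phi_N[V^{*}GV,\,U\circ V]$, and $U\circ V$ inherits continuity and the strong growth condition from $U$. Relabelling, it suffices to treat $K=\mathrm{span}\{e_{p+1},\ldots,e_N\}$, so that $G^{*}=\mathrm{diag}(G^{*}_{11},0)$ with $G^{*}_{11}\in\mathcal{S}^{p}_{++}$. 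Recalling Eq.~\eqref{eq:LWdef}, the claim is equivalent to
\[
\mathcal{F}_N[G^{(j)},U]-\tfrac{1}{2}\log\!\left((2\pi e)^{q}\det G^{(j)}_{22}\right)\;\longrightarrow\;\mathcal{F}_p\!\left[G^{*}_{11},\,U(\,\cdot\,,0)\right].
\]
A further application of the transformation rule to the Schur-complement factorization $G^{(j)}=T^{(j)}\widetilde G^{(j)}(T^{(j)})^{*}$, with $T^{(j)}\to I_N$ and $\widetilde G^{(j)}$ block-diagonal, then lets me assume $G^{(j)}$ itself is block-diagonal with $G^{(j)}_{11}\to G^{*}_{11}$ and $G^{(j)}_{22}\to 0$, at the cost of replacing $U$ by $U\circ T^{(j)}$; this new interaction converges to $U$ locally uniformly and, since $T^{(j)}$ is uniformly bounded, satisfies the strong growth condition uniformly in $j$.

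The core of the argument is a matched pair of bounds in this block-diagonal case. For the lower bound, I plug the product test measure $\mu^{(j)}:=\mu^{*,(j)}_1\times\mathcal{N}(0,G^{(j)}_{22})$ into~\eqref{eq:Fdef}, where $\mu^{*,(j)}_1$ is the unique maximizer associated to $\mathcal{F}_p[G^{(j)}_{11},U(\,\cdot\,,0)]$; computing the Gaussian entropy of the second factor yields
\[
\mathcal{F}_N[G^{(j)},U]\;\geq\;\mathcal{F}_p\!\left[G^{(j)}_{11},U(\,\cdot\,,0)\right]+\tfrac{1}{2}\log\!\left((2\pi e)^{q}\det G^{(j)}_{22}\right)-E^{(j)},
\]
with $E^{(j)}:=\iint[U(x_1,x_2)-U(x_1,0)]\,d\mu^{*,(j)}_1(x_1)\,d\mathcal{N}(0,G^{(j)}_{22})(x_2)$. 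For the upper bound, let $\mu^{(j)}$ denote the maximizer realizing $\mathcal{F}_N[G^{(j)},U]$. Combining Fact~\ref{fact:productEntropy} to split $H(\mu^{(j)})\leq H(\pi_1 \# \mu^{(j)})+H(\pi_2 \# \mu^{(j)})$, Lemma~\ref{lem:entropyMoment} on the second marginal (whose covariance is exactly $G^{(j)}_{22}$), and the variational definition of $\mathcal{F}_p$ applied to $\pi_1 \# \mu^{(j)}\in\mathcal{G}_p^{-1}(G^{(j)}_{11})$, one obtains the matching reverse inequality modulo an error $\widetilde E^{(j)}:=\int[U(x_1,x_2)-U(x_1,0)]\,d\mu^{(j)}$.

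It remains to show $E^{(j)},\widetilde E^{(j)}\to 0$, which is the technical heart of the argument and the step where the strong growth condition enters decisively (without it the conclusion genuinely fails, as shown by the counterexample in Section~\ref{subsec:counterexample}). For $E^{(j)}$, the bijection $A\leftrightarrow G$ of Theorem~\ref{thm:variation} and the smoothness of $\mathcal{F}_p$ on $\mathcal{S}^{p}_{++}$ yield strong convergence $\mu^{*,(j)}_1\to \mu^{*}_1$ of the corresponding Gibbs measures, while $\mathcal{N}(0,G^{(j)}_{22})\to\delta_0$ weakly with Gaussian tails of shrinking variance; combined with continuity of $U$ and the strong growth bound $U+b_\alpha\geq\alpha\|\cdot\|^{2}$, which provides a uniform-in-$j$ integrable majorant via second-moment estimates, dominated convergence gives $E^{(j)}\to 0$. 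For $\widetilde E^{(j)}$, the identity $\mathcal{G}_N(\mu^{(j)})=G^{(j)}\to G^{*}$ makes $\{\mu^{(j)}\}$ tight, and any weak subsequential limit is concentrated on $\R^{p}\times\{0\}$ since $\mathrm{Cov}(\pi_2 \# \mu^{(j)})=G^{(j)}_{22}\to 0$; the same strong-growth-based uniform integrability then allows passage to the limit. I expect this last step for $\widetilde E^{(j)}$ to be the principal obstacle, since $\mu^{(j)}$ is only known implicitly through its optimality, and one must leverage its variational characterization together with the strong growth of $U$ to preclude mass escaping to regions where $U(x_1,x_2)-U(x_1,0)$ is large.
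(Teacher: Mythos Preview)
Your two-sided bounding strategy matches the paper's, but both halves have genuine gaps stemming from the same misreading: the strong growth condition is a \emph{lower} bound on $U$, not an upper bound, so it cannot furnish an integrable majorant for dominated convergence. For the lower bound, your test measure $\mu_1^{*,(j)}\times\mathcal{N}(0,G^{(j)}_{22})$ is precisely the ``ideal'' construction the paper explicitly abandons (see the opening of section~\ref{subsec:lowerbound}). Concretely, take $N=2$, $p=1$, and $U(x_1,x_2)=x_1^4+x_2^4+e^{e^{x_1^2}}x_2^2$, which is continuous and satisfies strong growth. Then $U(\cdot,0)=x_1^4$, so $\mu_1^{*,(j)}$ has density $\propto e^{-a_j x_1^2-x_1^4}$, and your error term contains $\sigma_j^2\int e^{e^{x_1^2}}\,d\mu_1^{*,(j)}=+\infty$ for every $j$; the test measure yields only the trivial bound $\mathcal{F}_N\geq-\infty$. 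The paper fixes this by restricting to compactly supported test measures via Lemma~\ref{lem:Ftech}, so that $\int U\,d\mu^{(j)}\to\int U(\cdot,0)\,d\alpha$ follows from continuity of $U$ alone on a fixed compact set---no growth control needed, which is why the lower bound holds even under weak growth.

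For the upper bound, your claim that ``strong-growth-based uniform integrability'' gives $\widetilde E^{(j)}\to 0$ is likewise unjustified: you would need $\limsup\int U(x_1,0)\,d\mu^{(j)}\leq\int U(x_1,0)\,d\mu$ for the weak limit $\mu$, but Portmanteau (Theorem~\ref{thm:portmanteau}) only gives the reverse inequality for functions bounded below. The paper does not attempt to prove $\widetilde E^{(j)}\to 0$ at all. Instead it writes $U=U_\alpha+\alpha\|x\|^2$ with $U_\alpha$ bounded below (this is where strong growth enters), applies Portmanteau to $U_\alpha$, and arrives at an upper bound of the form $\mathcal{F}_p[\mathcal{G}_p(\mu_1),U(\cdot,0)]+\alpha\,\Tr[\mathcal{G}_p(\mu_1)-G_p]$ for every $\alpha$. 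A~priori one only knows $\mathcal{G}_p(\mu_1)\preceq G_p$; the crucial step is to send $\alpha\to\infty$ and invoke the \emph{already-proved lower bound} to rule out strict inequality, forcing $\mathcal{G}_p(\mu_1)=G_p$. This bootstrap between the two bounds is the mechanism you are missing.
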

\begin{rem}
\label{rem:growthSC}
We interpret the extension rule as to set $\Phi_N [0,U] = \Phi_0[U] :=
-2 \cdot U(0)$. Moreover, it will become clear in the proof that even for continuous interactions $U$ that do not satisfy the strong growth condition, the extension is still lower semi-continuous on $\psd$ and continuous on $\pd \cup \{0\}$.
\end{rem}

Changing coordinates via Proposition \ref{prop:LWtransformation}, we see that Theorem \ref{thm:contUpTo} is actually equivalent to the following:

\begin{thm}[Continuous extension, II]
\label{thm:contUpToSpecialCase}Suppose that $U$ is continuous and satisfies the strong growth condition. For $G\in\mathcal{S}_{++}^{p}$, $\Phi[\,\cdot\,,U]$ extends continuously via the rule
\[
\Phi_{N}\left[\left(\begin{array}{cc}
G & 0\\
0 & 0
\end{array}\right),U\right]=\Phi_{p}\left[G,U(\cdot,0)\right].
\]
\end{thm}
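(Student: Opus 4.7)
The plan is to prove the theorem by establishing, for any sequence $G^{(j)} \in \pd$ with $G^{(j)} \to G_\infty := \begin{pmatrix} G & 0 \\ 0 & 0 \end{pmatrix}$, the convergence $\Phi_N[G^{(j)}, U] \to \Phi_p[G, U(\cdot,0)]$. Since $\Phi_N = 2\mathcal{F}_N - \mathrm{Tr}[\log(\cdot)] - N\log(2\pi e)$ and $\log \det G^{(j)}$ diverges to $-\infty$ as the last $N-p$ eigenvalues collapse, the task reduces to showing that $\mathcal{F}_N[G^{(j)}, U]$ diverges in precisely the same way, with the finite remainder identified as $\mathcal{F}_p[G, U(\cdot,0)]$. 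I will obtain this via matched upper and lower bounds on the variational quantity $\mathcal{F}_N$ from Theorem \ref{thm:variation}.

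First, I would reduce to the block-diagonal case exactly as in the proof of Proposition \ref{prop:projection}: writing $G^{(j)} = T^{(j)}\widetilde G^{(j)}(T^{(j)})^{*}$ via block Gaussian elimination, with $T^{(j)}$ lower-triangular and $\widetilde G^{(j)} = \mathrm{diag}(G_1^{(j)}, G_2^{(j)})$, one has $T^{(j)} \to I$, $G_1^{(j)} \to G$, and $G_2^{(j)} \to 0$. Applying the transformation rule (Proposition \ref{prop:LWtransformation}) yields $\Phi_N[G^{(j)},U] = \Phi_N[\widetilde G^{(j)}, U^{(j)}]$, where $U^{(j)} := U\circ T^{(j)}$ converges to $U$ uniformly on compacts and inherits the strong growth condition with constants uniform in $j$. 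For the lower bound I would imitate the reverse-inequality part of the projection-rule proof: for any $\mu_1 \in \mathcal{G}_p^{-1}(G_1^{(j)})$, test with $\mu_1 \times \nu_2^{(j)}$ where $\nu_2^{(j)}$ is Gaussian with covariance $G_2^{(j)}$; this product measure supplies exactly the entropy term $\tfrac{1}{2}\log((2\pi e)^{N-p}\det G_2^{(j)})$ that cancels the divergent piece of $-\mathrm{Tr}[\log \widetilde G^{(j)}]$. Taking $\mu_1$ near-optimal for $\mathcal{F}_p[G_1^{(j)}, U(\cdot, 0)]$ and using continuity of $\mathcal{F}_p$ on $\mathcal{S}^p_{++}$ (Lemma \ref{lem:Fdiff}), together with the fact that $\nu_2^{(j)} \to \delta_0$ weakly so that $\int U^{(j)}\, d(\mu_1\times\nu_2^{(j)}) \to \int U(\cdot, 0)\, d\mu_1$ by dominated convergence (justified by the strong growth condition providing a uniform integrable majorant fiberwise in $x_1$), delivers the required lower bound.

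For the upper bound, pick a near-maximizer $\mu^{(j)}$ for $\mathcal{F}_N[\widetilde G^{(j)}, U^{(j)}]$, with marginals $\mu_1^{(j)}, \mu_2^{(j)}$. Fact \ref{fact:productEntropy} gives $H(\mu^{(j)}) \leq H(\mu_1^{(j)}) + H(\mu_2^{(j)})$, and Lemma \ref{lem:entropyMoment} bounds $H(\mu_2^{(j)}) \leq \tfrac{1}{2}\log((2\pi e)^{N-p}\det G_2^{(j)})$, so that
\[
\mathcal{F}_N[\widetilde G^{(j)}, U^{(j)}] \leq H(\mu_1^{(j)}) - \int U^{(j)}\, d\mu^{(j)} + \tfrac{1}{2}\log((2\pi e)^{N-p}\det G_2^{(j)}).
\]
The main obstacle is then to show $\int U^{(j)}\,d\mu^{(j)} - \int U^{(j)}(\cdot,0)\, d\mu_1^{(j)} \to 0$: because $\mu^{(j)}$ need not be a product measure, even though $\mu_2^{(j)} \to \delta_0$ weakly by Chebyshev, transferring this to control of the joint integral requires uniform integrability of $U^{(j)}$ against the joint measures in the $x_2$ direction. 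This is exactly where the strong growth condition is essential: combined with $\mu^{(j)}$ being a near-maximizer and the bound $H(\mu_1^{(j)}) \leq \tfrac{1}{2}\log((2\pi e)^p \det G_1^{(j)})$, it yields a uniform energy-entropy estimate that forces uniformly bounded higher moments of $\mu^{(j)}$, permitting a truncation of $U^{(j)}$ to a compact set on which uniform continuity of $U$ takes over and justifies the fiberwise replacement of $x_2$ by $0$. With the integral convergence established, optimizing over $\mu_1^{(j)}$ in the definition of $\mathcal{F}_p$ bounds the right-hand side above by $\mathcal{F}_p[G_1^{(j)}, U^{(j)}(\cdot,0)] + \tfrac{1}{2}\log((2\pi e)^{N-p}\det G_2^{(j)}) + o(1)$, after which continuity of $\mathcal{F}_p$ and of $U^{(j)}(\cdot,0) \to U(\cdot,0)$ closes the argument; the failure of this uniform integrability step under only the weak growth condition is what will underlie the counterexample promised for section \ref{subsec:counterexample}.
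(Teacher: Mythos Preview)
Your overall architecture---matching lower and upper bounds on $\mathcal{F}_N[G^{(j)},U]$ via the variational formula, splitting entropy by Fact~\ref{fact:productEntropy} and Lemma~\ref{lem:entropyMoment}---is the same as the paper's. But both halves contain genuine gaps.

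\textbf{Lower bound.} Your Gaussian test measure $\mu_1 \times \nu_2^{(j)}$ need not even give a finite value of $\int U\,d(\mu_1 \times \nu_2^{(j)})$. The strong growth condition bounds $U$ from \emph{below}, not above; for instance $U(x)=e^{\|x\|^4}$ satisfies it, yet $\int e^{\|x_2\|^4}\,d\nu_2^{(j)}(x_2)=+\infty$ for every Gaussian $\nu_2^{(j)}$, so your test measure yields $-\infty$ in the sup and proves nothing. The ``dominated convergence'' you invoke has no majorant. The paper anticipates exactly this difficulty (see the discussion opening section~\ref{subsec:lowerbound}) and instead constructs $\mu^{(j)}$ with uniformly compact support, taking $\beta^{(j)}$ to be a rescaled compactly supported measure of unit variance rather than a Gaussian; the lost entropy is recovered by sending the support radius $R\to\infty$ only \emph{after} $j\to\infty$. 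This in turn requires Lemma~\ref{lem:Ftech}, that $\mathcal{F}$ is already attained in the sup over compactly supported measures. Notably, the paper's lower bound needs only the weak growth condition.

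\textbf{Upper bound.} Your identified ``main obstacle''---showing $\int U^{(j)}\,d\mu^{(j)} - \int U^{(j)}(\cdot,0)\,d\mu_1^{(j)}\to 0$ directly via uniform integrability and higher moments---is not how the paper proceeds, and your justification is thin: the strong growth condition gives no growth rate beyond ``faster than any quadratic,'' so there is no obvious higher-moment bound, and controlling $|U(x_1,x_2)-U(x_1,0)|$ on the tail $\{|x|>R\}$ is exactly where a naive truncation argument breaks down. The paper's route is different and more robust: extract a weak limit $\mu$ of the $\mu^{(j)}$ via Prokhorov, then for each $\alpha\in\R$ write $U=U_\alpha + \alpha\|x\|^2$ with $U_\alpha$ bounded below (this is precisely what strong growth buys), apply Portmanteau to $U_\alpha$ to get a one-sided bound $\liminf_k \int U\,d\mu^{(j_k)} \ge \int U(\cdot,0)\,d\mu_1 + \alpha\,\mathrm{Tr}[\mathcal{G}_p(\mu_1)-G_p]$, and finally send $\alpha\to+\infty$ in a contradiction argument against the already-established lower bound to force $\mathcal{G}_p(\mu_1)=G_p$. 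No direct convergence of the $U$-integrals is ever asserted.
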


Once again we comment that proof is deferred to section \ref{sec:boundary}.

\section{Bold diagram expansion for the generalized Coulomb interaction}\label{sec:bolddiagram}

Using the Luttinger-Ward formalism, in this section we prove that the
bold diagrammatic expansions from Part I of the self-energy and the LW functional 
(for the generalized Coulomb interaction \eqref{eqn:thmSeries}) can indeed 
be interpreted as asymptotic series expansions in the interaction 
strength at fixed $G$.
This provides a rigorous interpretation of the bold expansions that is not merely 
combinatorial.  Recall that when each $G$ in the bold diagrammatic expansion 
of the self-energy is
further expanded using $G^{0}$ and $U$, the resulting expansion should
be formally the same as the bare diagrammatic expansion of the self energy. The 
combinatorial argument in section 4 of Part I guaranteeing this fact 
does not need to be repeated in this setting, and we will be able to directly use 
Theorem 4.12 from Part I. The remaining hurdles are analytical, not 
combinatorial.

We summarize the results of this section as follows.
 \begin{thm}
 \label{thm:boldDiag}
 For any continuous interaction $U:\Rn \ra \R$ satisfying the weak growth condition and 
 any $G \in \pd$, the LW functional and the self-energy have asymptotic series expansions as 
 \begin{equation}
 \label{eqn:thmSeries}
 \Phi[G,\ve U] = \sum_{k=1}^\infty \Phi^{(k)}[G,U]\ve^k,\quad \Sigma[G,\ve U] = \sum_{k=1}^\infty \Sigma^{(k)}[G,U]\ve^k.
 \end{equation}
 Moreover, for $U$ a homogeneous quartic polynomial, the coefficients of the 
 asymptotic series satisfy
 \begin{equation}
 \label{eq:boldRelation}
 \Phi^{(k)}[G,U] = \frac{1}{2k} \Tr \left[ G \Sigma^{(k)}[G,U] \right].
 \end{equation}
 If $U$ is moreover a generalized Coulomb interaction~\eqref{eqn:Uterm}, 
 we have (borrowing the language of Part I) that
 \begin{equation}
 \label{eqn:thmSeries2}
 \Sigma^{(k)}_{ij}[G,U] = \sum_{\Gamma_{\mr{s}} \in \mf{F}_2^{\mr{2PI}},\,\mathrm{order}\,k} \frac{\mathbf{F}_{\Gamma_{\mr{s}}}(i,j)}{S_{\Gamma_{\mr{s}}}},
 \end{equation}
 i.e., $\Sigma^{(k)}$ is given the sum over bold skeleton diagrams of order $k$ with bold 
 propagator $G$ and interaction $v_{ij} \delta_{ik} \delta_{jl}$.
 \end{thm}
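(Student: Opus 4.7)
My plan is to split the proof of Theorem~\ref{thm:boldDiag} into three separate arguments, one for each of the displayed statements, corresponding to the developments of sections~\ref{sec:asymptotic}, \ref{sec:lwbold}, and \ref{sec:selfenergybold}.

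For the existence of the asymptotic series \eqref{eqn:thmSeries}, I work from the primal side first. For fixed $A\in\pd$, Taylor expanding $e^{-\ve U}$ inside the partition function integral \eqref{eqn:partition}, combined with Assumption~\ref{assumption:atMostExp} (which guarantees that every Gaussian moment $\int U^{k}\,e^{-\frac{1}{2}x^{T}Ax}\,\ud x$ is finite and supplies a dominated-convergence majorant for the Taylor remainder), yields an asymptotic expansion $\Omega[A,\ve U]=\sum_{k=0}^{K}\Omega^{(k)}[A,U]\,\ve^{k}+o(\ve^{K})$ whose coefficients are smooth functions of $A\in\pd$. Taking $A$-gradients produces the corresponding expansion of $G[A,\ve U]$, with leading term $A^{-1}$. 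To pass to fixed $G$, I plan to invert this relation order by order: writing $A[G,\ve U]=G^{-1}+\sum_{k\geq 1}A^{(k)}[G,U]\,\ve^{k}$ formally, substituting into $G=G[A,\ve U]$, and matching coefficients of $\ve^{k}$ determines the $A^{(k)}$ recursively, because the leading-order linearization of $A\mapsto G[A,0]=A^{-1}$ at $A_{0}=G^{-1}$ is the invertible map $H\mapsto -GHG$ on $\symm$. The asymptotic expansions of $\Sigma[G,\ve U]=A[G,\ve U]-G^{-1}$ and of $\Phi[G,\ve U]$ then follow from the Dyson equation \eqref{eqn:dysonLW} and the relation $\Sigma=\partial\Phi/\partial G$.

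For the trace identity \eqref{eq:boldRelation} I use a scaling argument. For homogeneous quartic $U$, Corollary~\ref{cor:fourthOrderScaling} applied to the interaction $\ve U$ gives $\Phi[\lambda G,\ve U]=\Phi[G,\lambda^{2}\ve U]$ for every $\lambda>0$. Differentiating both sides at $\lambda=1$, using Definition~\ref{def:grad2} together with $\Sigma=\partial\Phi/\partial G$, produces
\[
\Tr\bigl[G\,\Sigma[G,\ve U]\bigr]=2\ve\,\partial_{\ve}\Phi[G,\ve U].
\]
Substituting the asymptotic series above into both sides and matching coefficients of $\ve^{k}$ yields $\Tr[G\,\Sigma^{(k)}]=2k\,\Phi^{(k)}$, as claimed.

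The main obstacle is the identification \eqref{eqn:thmSeries2} of $\Sigma^{(k)}$ with the sum over bold 2PI skeleton diagrams for generalized Coulomb interactions. My plan here is to bridge the analytical expansion above with the combinatorial resummation of Part~I. Explicitly, for generalized Coulomb $U$ the coefficients $\Omega^{(k)}[A,U]$ are, by Isserlis--Wick, precisely the sums of bare Feynman diagrams on the bare propagator $G^{0}=A^{-1}$; differentiating and applying the Dyson equation then produces the bare 2PI diagrammatic expansion of $\Sigma[A,\ve U]$ at fixed $A$. Substituting the asymptotic expansion of $A[G,\ve U]$ (equivalently of $G^{0}$ in terms of $G$) into this bare series and regrouping by powers of $\ve$ defines a candidate set of coefficients, and Theorem~4.12 of Part~I, to be reviewed in diagram-free form in section~\ref{sec:p1Review}, is precisely the combinatorial identity stating that this regrouping produces the sum over bold 2PI skeleton diagrams on $G$. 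The remaining bridge between the formal series manipulation of Part~I and the analytical expansion at fixed $G$ is closed by invoking uniqueness of asymptotic expansions together with the formal uniqueness of the solution to the Dyson equation, so that the two candidate series for $\Sigma[G,\ve U]$ must coincide order by order in $\ve$.
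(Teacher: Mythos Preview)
Your plan for the first two parts is broadly sound, though it diverges from the paper's route. For the existence of the asymptotic series, the paper does \emph{not} build the expansion of $A[G,\ve U]$ by formal order-by-order inversion of the bare Green's function expansion. Instead it applies the implicit function theorem directly to $h(\ve,A)=\nabla_A\Omega[A,\ve U]-G$ to show that $\ve\mapsto A_G(\ve)$ is smooth with all derivatives extending continuously to $\ve=0$ (Lemma~\ref{lem:epsCts}), and then invokes the Lagrange remainder. Your inversion scheme produces the same candidate coefficients, but to promote it from a formal to a genuine asymptotic expansion you must supply a uniformity statement for the bare expansion of $G[A,\ve U]$ in $A$ near $G^{-1}$ (cf.\ the locally uniform error bound in Lemma~\ref{lem:bareSeries}); you do not mention this. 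For the trace identity your differentiation-at-$\lambda=1$ argument is essentially the infinitesimal version of the paper's integration $\Phi[G,\ve U]=\int_0^1\frac{d}{dt}\Phi[tG,\ve U]\,dt$; both rest on the same homogeneity relation, and yours is arguably cleaner once you justify term-by-term differentiation of the asymptotic series (or, equivalently, integrate the exact identity $\Tr[G\Sigma]=2\ve\,\partial_\ve\Phi$).

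The genuine gap is in your treatment of \eqref{eqn:thmSeries2}. Your closing step---``invoking uniqueness of asymptotic expansions together with the formal uniqueness of the solution to the Dyson equation''---is not an argument. Uniqueness of asymptotic expansions only applies once you have shown that the bold diagrammatic partial sums $\mathbf{S}_G^{(\le M)}(\ve)$ actually approximate $\Sigma[G,\ve U]$ to order $\ve^{M+1}$, and this is precisely the hard analytical content you are trying to prove. Two concrete obstacles: (i) substituting the $\ve$-dependent $A=A[G,\ve U]$ into the bare self-energy series $\sigma_A^{(\le M)}(\ve)$ and controlling the truncation error requires the locally uniform bound on the bare remainder that you never invoke; (ii) Theorem~4.12 of Part~I (the paper's Theorem~\ref{thm:p1Main}) equates $\sigma_A^{(\le M)}(\ve)$ with the bold sum $\mathbf{S}^{(\le M)}$ evaluated at the \emph{bare partial sum} $G_A^{(\le M)}(\ve)$, not at $G$ itself, so after substituting $A\leftarrow A^{(M)}(\ve)$ you still need $G_{A^{(M)}(\ve)}^{(\le M)}(\ve)\sim G$. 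The paper closes both gaps through a non-perturbative device you have not discovered: it observes (Lemma~\ref{lem:SEpartialSum}) that $\Sigma_G^{(\le M)}(\ve)$ is the \emph{exact} self-energy for the modified interaction $U_\ve^{(M)}(x)=\ve U(x)+\tfrac12 x^T[\Sigma_G(\ve)-\Sigma_G^{(\le M)}(\ve)]x$, so that $G=G[A^{(M)}(\ve),U_\ve^{(M)}]$ exactly, and then chains together a sequence of $\sim$ and $\approx$ estimates (Lemmas~\ref{lem:Gsim}--\ref{lem:endgame}, summarized in Figure~\ref{fig:schematic}) to pass from $\Sigma_G^{(\le M)}$ to $\sigma_{A^{(M)}(\ve)}$ to $\sigma_{A^{(M)}(\ve)}^{(\le M)}$ to $\mathbf{S}_G^{(\le M)}$. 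Without some analogue of this bridge, your Part~3 argument remains at the level of formal power series and does not establish \eqref{eqn:thmSeries2}.
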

 \begin{rem}
 For a series as in Eq.~\eqref{eqn:thmSeries} to be asymptotic means that 
 the error of the $M$-th partial sum is $O(\ve^{M+1})$ as $\ve \ra 0$.
 \end{rem}

Since $U$ is fixed, for simplicity in the ensuing discussion  we will omit the
dependence on $U$ from the notation
via the definitions $\Phi_G(\ve) :=
\Phi [G,\ve U]$, $\Sigma_G(\ve) = \Sigma[G,\ve U]$, and $A_G(\ve) :=
A[G,\ve U]$. We will also denote the series coefficients via
$\Phi^{(k)}_G := \Phi^{(k)}[G,U]$ and $\Sigma^{(k)}_G :=
\Sigma^{(k)}[G,U]$. In this notation, our asymptotic series take the
form
\begin{equation}
\label{eq:LWSEbold}
\Phi_G (\ve) = \sum_{k=1}^\infty \Phi^{(k)}_G \ve ^k, \quad \Sigma_G (\ve) = \sum_{k=1}^\infty \Sigma^{(k)}_G \ve ^k.
\end{equation}

\begin{notation}
Note carefully that in this section the superscript ${(k)}$ is merely a notation and \emph{does not} indicate the $k$-th derivative. Such derivatives will be written 
out as $\frac{\ud^k}{\ud \ve^k}$.
\end{notation}

Now we outline the remainder of this section. In section~\ref{sec:asymptotic} 
we prove that the LW functional and the self-energy do indeed 
admit asymptotic series expansions. 
In section \ref{sec:lwbold} we prove the relation between the 
LW and self-energy expansions for quartic interactions, namely Eq.~\eqref{eq:boldRelation}.
Interestingly, this relation---which is well-known formally based on
diagrammatic observations---was originally assumed to be 
true to obtain a formal derivation of the LW
functional~\cite{LuttingerWard1960,MartinReiningCeperley2016}.  Our
proof here does not rely on any diagrammatic manipulation, only
making use of the transformation rule and the quartic nature of the
interaction $U$. Similar relations for homogeneous polynomial interactions of different
order could easily be obtained. 
Next, in section~\ref{sec:p1Review}, we summarize and expand on the necessary results 
from Part I 
in diagram-free language; this both reduces the prerequisite knowledge needed for 
the remainder of the section and clarifies the arguments that follow.
Finally, in section~\ref{sec:selfenergybold} 
we prove that when $U$ is a generalized Coulomb interaction, 
the series for the self-energy is in fact the bold diagrammatic
expansion of section 4 of Part I.

\subsection{Existence of asymptotic series}\label{sec:asymptotic}

In this section we assume that $U$ is continuous and satisfies the weak 
growth condition.
We first prove the following pair of lemmas.

\begin{lem}
\label{lem:Aeps}
For any $G\in\pd$, $A_G (\ve) \ra G^{-1}$ as $\ve \ra 0^+$.
\end{lem}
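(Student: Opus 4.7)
By Theorem~\ref{thm:variation}, $A_G(\ve) = \nabla_G \mathcal{F}[G,\ve U]$ for each $\ve \geq 0$. The plan is to show $\mathcal{F}[\,\cdot\,,\ve U] \to \mathcal{F}[\,\cdot\,,0]$ pointwise on $\pd$ as $\ve \to 0^+$ and then invoke the classical fact that pointwise convergence of finite concave functions on an open convex set forces convergence of their gradients at points of differentiability. To identify the prospective limit, observe that at $\ve = 0$ the supremum defining $\mathcal{F}[G,0] = \sup_{\mu \in \mathcal{G}^{-1}(G)} H(\mu)$ is attained by the centered Gaussian $\mathcal{N}(0,G)$ (Lemma~\ref{lem:entropyMoment}), yielding the closed form $\mathcal{F}[G,0] = \frac{1}{2}\log\!\left[(2\pi e)^N \det G\right]$; differentiating in the sense of Definition~\ref{def:grad1} then gives $\nabla_G \mathcal{F}[G,0] = G^{-1}$, which will be the limit.

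For the pointwise convergence, I would bound $\mathcal{F}[G,\ve U]$ from above and below. The weak growth condition yields $U(x) \geq -C_U(1+\Vert x\Vert^2)$, so for any $\mu \in \mathcal{G}^{-1}(G)$ one has $\int U\,\ud\mu \geq -C_U(1+\Tr G)$, whence
\[
\mathcal{F}[G,\ve U] = \sup_{\mu \in \mathcal{G}^{-1}(G)} \left[H(\mu) - \ve \int U\,\ud\mu\right] \leq \mathcal{F}[G,0] + \ve\, C_U (1 + \Tr G).
\]
For the matching lower bound, evaluating the functional at the feasible test measure $\mu = \mathcal{N}(0,G)$ gives
\[
\mathcal{F}[G,\ve U] \geq H(\mathcal{N}(0,G)) - \ve \int U\,\ud\mathcal{N}(0,G) = \mathcal{F}[G,0] - \ve \int U\,\ud\mathcal{N}(0,G),
\]
with the integral on the right finite because the Gaussian density absorbs the at-most-exponential growth allowed by Assumption~\ref{assumption:atMostExp}. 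Together these estimates give $|\mathcal{F}[G,\ve U] - \mathcal{F}[G,0]| = O(\ve)$ as $\ve \to 0^+$ for each fixed $G \in \pd$.

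For the gradient step, Lemmas~\ref{lem:Ffinite}, \ref{lem:Fconcave}, and \ref{lem:Fdiff} ensure that every $\mathcal{F}[\,\cdot\,,\ve U]$ is finite and concave on the open convex set $\pd$ and that $\mathcal{F}[\,\cdot\,,0]$ is smooth there. Applying the concave analogue of Rockafellar's gradient convergence result (Theorem 25.7 of~\cite{Rock}) to the pointwise-convergent family $\{-\mathcal{F}[\,\cdot\,,\ve U]\}_{\ve > 0}$ at the point $G$, one obtains $A_G(\ve) = \nabla_G \mathcal{F}[G,\ve U] \to \nabla_G \mathcal{F}[G,0] = G^{-1}$ along any sequence $\ve \to 0^+$, which is enough. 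The main obstacle is really the pointwise convergence step, as it is what requires quantitative control of $\int U\,\ud\mu$ uniformly over $\mathcal{G}^{-1}(G)$; both the weak growth condition and the at-most-exponential growth condition play essential roles in securing the two sides of the bound.
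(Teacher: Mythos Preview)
Your proof is correct, but it reaches the pointwise convergence $\mathcal{F}[\,\cdot\,,\ve U]\to\mathcal{F}[\,\cdot\,,0]$ by a different route than the paper. The paper works on the dual side: it shows that $Z[\,\cdot\,,\ve U]$ epi-converges to $Z[\,\cdot\,,0]$ (lower bound by Fatou, upper bound by a dominated convergence argument using the weak growth decomposition $U=\widetilde U-\lambda-\lambda\|x\|^2$ with $\widetilde U\geq 0$), then transfers this to hypo-convergence of the conjugates $\mathcal{F}[\,\cdot\,,\ve U]$ via Theorem~\ref{conjConvThm}, and finally invokes Theorem~\ref{subgradConvThm} for the gradient step. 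You instead sandwich $\mathcal{F}[G,\ve U]$ directly between $\mathcal{F}[G,0]-\ve\int U\,\ud\mathcal{N}(0,G)$ and $\mathcal{F}[G,0]+\ve C_U(1+\Tr G)$, then use the same gradient convergence principle.

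Your argument is more elementary (it avoids the epi-convergence and duality machinery entirely) and even yields the quantitative rate $|\mathcal{F}[G,\ve U]-\mathcal{F}[G,0]|=O(\ve)$. The trade-off is that your lower bound needs $\int U\,\ud\mathcal{N}(0,G)<\infty$, for which you invoke Assumption~\ref{assumption:atMostExp}; the paper's proof, by contrast, uses only the weak growth condition. Since Assumption~\ref{assumption:atMostExp} is in force throughout Section~\ref{sec:bolddiagram} this is not a gap in context, but the paper's argument is strictly more general on this point. One small remark: your closing sentence slightly overstates things---the at-most-exponential growth is used only at the single Gaussian test measure, not uniformly over $\mathcal{G}^{-1}(G)$.
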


\begin{lem}
\label{lem:epsCts}
For $G\in\pd$, all derivatives of the functions $\Phi_G:(0,\infty) \ra \R$ 
and $\Sigma_G:(0,\infty) \ra \R^{N\times N}$ extend continuously to $[0,\infty)$.
\end{lem}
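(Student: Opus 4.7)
The plan is to reduce both statements to the smoothness of $A_G$ as a function of $\ve$, via the identity
\[
\Sigma_G(\ve) = A_G(\ve) - G^{-1}
\]
(which is the Dyson equation \eqref{eqn:dysonLW} rearranged) and the Legendre-type identity
\[
\Phi_G(\ve) = \Tr\bigl[A_G(\ve)\,G\bigr] - 2\,\Omega\bigl[A_G(\ve),\ve U\bigr] - \Tr\log G - N\log(2\pi e),
\]
which follows from $\Phi = 2\mathcal{F} - \Tr\log G - N\log(2\pi e)$ together with $\mathcal{F}[G] = \frac{1}{2}\Tr[A[G]\,G] - \Omega[A[G]]$ at the optimum (as produced by Theorem \ref{thm:variation}).

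First I would establish that $(A,\ve) \mapsto \Omega[A,\ve U]$, and hence $(A,\ve)\mapsto G[A,\ve U] = \nabla\Omega[A,\ve U]$, is jointly $C^\infty$ on an open neighborhood of $(G^{-1},0)$ in $\pd \times \R$. The argument is differentiation under the integral sign in \eqref{eq:OmegaDef}. On a compact neighborhood $K$ of $G^{-1}$ inside $\pd$, every $A \in K$ satisfies $A \succeq \alpha I$ for some fixed $\alpha > 0$, so the weak growth condition yields
\[
\tfrac{1}{2}x^T A x + \ve U(x) \;\geq\; \tfrac{\alpha}{4}\|x\|^2 - \ve C_U
\]
uniformly for $(A,\ve) \in K \times [0,\ve_0]$ when $\ve_0$ is sufficiently small. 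A differentiation in an entry of $A$ brings down a factor $x_i x_j$, and a differentiation in $\ve$ brings down a factor $-U(x)$; by Assumption \ref{assumption:atMostExp} we have $|U(x)| \leq B e^{C\|x\|}$, so the integrand associated with each mixed partial is uniformly dominated by something of the form $\mathrm{poly}(\|x\|)\,e^{k C\|x\|}\,e^{-\frac{\alpha}{4}\|x\|^2}$, which is integrable. Dominated convergence then delivers the joint $C^\infty$ regularity.

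Next I would apply the implicit function theorem to $F(A,\ve) := G[A,\ve U] - G$ at $(G^{-1},0)$. We have $F(G^{-1},0) = 0$ by \eqref{eqn:gaussian}, and the partial $A$-derivative there is the differential of $A \mapsto A^{-1}$ at $A = G^{-1}$, namely $\delta A \mapsto -G\,\delta A\, G$, which is a linear bijection on $\symm$. This yields $\delta > 0$ and a unique $C^\infty$ map $\tilde A:(-\delta,\delta) \to \pd$ with $\tilde A(0) = G^{-1}$ and $F(\tilde A(\ve),\ve) = 0$. For $\ve \in (0,\delta)$, Theorem \ref{thm:variation} identifies $A_G(\ve)$ as the unique element of $\dom\Omega[\,\cdot\,,\ve U]$ mapping to $G$, while Lemma \ref{lem:Aeps} guarantees $A_G(\ve) \to G^{-1}$, so $A_G = \tilde A$ on $(0,\delta)$. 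Hence $A_G$ extends $C^\infty$-smoothly to $[0,\delta)$; combining this with the two identities above and the joint smoothness of $\Omega$ from the previous step, one obtains $C^\infty$ extensions of $\Phi_G$ and $\Sigma_G$ to $[0,\delta)$. An identical localized argument at any $\ve_0 > 0$ (using $A_G(\ve_0) \in \intdom\Omega[\,\cdot\,,\ve_0 U]$ in place of $G^{-1}$, where the needed coercivity is inherited from the integrability at $(A_G(\ve_0),\ve_0)$) shows that all derivatives exist on $(0,\infty)$, completing the proof.

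The main obstacle is the uniform integrable domination in the differentiation under the integral sign; this is exactly where Assumption \ref{assumption:atMostExp} is essential, because arbitrary differentiation in $\ve$ produces arbitrarily high powers of $U$, and these must be integrable against the Gaussian factor uniformly in $(A,\ve)$ in a neighborhood. The weak growth condition alone, which controls only the negative part of $U$, would not suffice.
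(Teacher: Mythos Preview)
Your approach is essentially the paper's: reduce to smoothness of $A_G$ via the Dyson identity and Legendre duality, obtain $A_G$ from the implicit function theorem applied to $\nabla_A\Omega[A,\ve U]-G$, and justify the requisite regularity by differentiation under the integral with dominated convergence driven by Assumption~\ref{assumption:atMostExp}. There is, however, one step that does not go through as written. You assert that $(A,\ve)\mapsto\Omega[A,\ve U]$ is jointly $C^\infty$ on an \emph{open} neighborhood of $(G^{-1},0)$ in $\pd\times\R$, and then invoke the implicit function theorem at $\ve=0$. But for typical interactions (e.g., the generalized Coulomb quartic), $Z[A,\ve U]=\int e^{-\frac12 x^T A x-\ve U(x)}\,\ud x$ diverges for every $\ve<0$, so $\Omega$ is not even defined, let alone smooth, on any two-sided $\ve$-neighborhood of $0$. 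Your own domination bound is stated only for $\ve\in[0,\ve_0]$, which is consistent with this.

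The paper sidesteps the issue by applying the implicit function theorem at a small \emph{positive} $\hat\ve$ (where an honest open neighborhood exists), obtaining the formula $A_G'(\ve)=-(\nabla_A^2\Omega)^{-1}\partial_\ve h$ for all small $\ve>0$, and then showing directly that this expression and its iterated $\ve$-derivatives---each of which is a ratio of integrals of the form $\int P(x,U(x))\,e^{-\frac12 x^T A_G(\ve)x-\ve U(x)}\,\ud x$---extend continuously to $\ve=0$ by exactly the dominated convergence you already set up. Your argument is salvaged verbatim by this rephrasing; alternatively you could invoke a Seeley-type extension of $\Omega$ across $\{\ve=0\}$ before applying the implicit function theorem, but that is heavier than necessary.
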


We will convey the continuous extension of the derivatives of $\Phi_G$ to the origin by the notation $\Phi^{(k)}_G := \Phi^{(k)}_G (0)$, and similarly for the self-energy $\Sigma_G^{(k)} := \Sigma_G^{(k)}(0)$.  From the preceding it will follow that the series \eqref{eq:LWSEbold} are indeed asymptotic series in the 
following sense:

\begin{prop}
\label{prop:asymptoticSeries}
For any nonnegative integer $M$, $\Phi_G(\ve) - \sum_{k=1}^M \Phi^{(k)}_G \ve^k = O(\ve^{M+1})$ and 
$\Sigma_G(\ve) - \sum_{k=1}^M \Sigma^{(k)}_G \ve^k = O(\ve^{M+1})$ as $\ve \ra 0^+$.
\end{prop}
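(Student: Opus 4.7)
The plan is to deduce Proposition \ref{prop:asymptoticSeries} as a direct consequence of Lemma \ref{lem:epsCts} together with the classical Taylor theorem with remainder. Indeed, by Lemma \ref{lem:epsCts}, $\Phi_G$ (and, componentwise, $\Sigma_G$) is of class $C^\infty$ on $[0,\infty)$, and by the notational conventions set just before the Proposition, the series coefficients $\Phi^{(k)}_G$ and $\Sigma^{(k)}_G$ appearing in \eqref{eq:LWSEbold} are the associated Taylor coefficients at the origin.

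First I would fix $M \geq 0$ and $\delta > 0$ and invoke Taylor's theorem with integral remainder on $[0,\delta]$: for $\ve \in [0,\delta]$,
\[
\Phi_G(\ve) = \Phi_G(0) + \sum_{k=1}^{M} \Phi^{(k)}_G\, \ve^k + R_M(\ve),
\]
where $|R_M(\ve)| \leq C_{M,\delta}\, \ve^{M+1}$ for a finite constant $C_{M,\delta}$ given by $\frac{1}{(M+1)!}$ times the supremum of the continuous $(M+1)$-th derivative of $\Phi_G$ over the compact interval $[0,\delta]$. Next I would identify the constant term: at $\ve = 0$ we are in the non-interacting case, so by the construction of the LW functional in Section \ref{sec:lw}, $\Phi_G(0) = \Phi[G,0] = 0$, and consequently $\Sigma_G(0) = \frac{\partial \Phi}{\partial G}[G,0] = 0$ as well. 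Hence the series may be taken to start at $k=1$, and $R_M(\ve)$ is exactly the partial-sum error appearing in the claim, giving the desired $O(\ve^{M+1})$ bound as $\ve \to 0^+$. Applying the same argument coordinatewise to each entry of $\Sigma_G$ yields the corresponding statement for the self-energy.

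The main obstacle is not in the present Proposition but in the preparatory Lemma \ref{lem:epsCts}: once smoothness of $\Phi_G$ and $\Sigma_G$ up to and including $\ve = 0$ is in hand, Taylor's theorem does all the remaining work. The substantive content is therefore that, even though the associated Gaussian reference parameter $A_G(\ve)$ only converges to $G^{-1}$ in the limit (Lemma \ref{lem:Aeps}), the implicit $\ve$-dependence of $A_G(\ve)$ does not obstruct continuous extension of all derivatives of $\Phi_G$ and $\Sigma_G$ to the boundary $\ve = 0$.
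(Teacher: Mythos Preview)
Your proof is correct and follows essentially the same strategy as the paper: both rely on Lemma \ref{lem:epsCts} for the regularity of $\Phi_G$ and $\Sigma_G$ at $\ve=0$ and then invoke Taylor's theorem with the Lagrange remainder. The only minor difference is that the paper, rather than asserting $C^\infty$-smoothness on $[0,\infty)$ outright, expands about an interior point $\delta>0$ and passes to the limit $\delta\to 0^+$ using the continuous extension of the derivatives---a maneuver that sidesteps the (standard but unstated in your argument) fact that continuous extension of all derivatives to the endpoint implies genuine one-sided $C^\infty$-smoothness there.
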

\begin{proof}
Consider any function $f:[0,\infty)\ra\R$ with all derivatives extending
continuously up to the boundary (and so defined at 0). Let $\delta>0$,
so for $\ve \in (\delta,1]$ we know  by the Lagrange error bound that
\[
\left\vert f(\ve) - \sum_{k=0}^M f^{(k)}(\delta) (\ve -\delta)^k
\right\vert \leq C (\ve - \delta)^{M+1} \leq C \ve^{M+1},
\]
where $C$ is a constant that depends only on a uniform bound on
$\left(\frac{\ud}{\ud \ve}\right)^{k+1} f$ over $[0,1]$ (the existence of which is guaranteed by the
continuous extension property). Simply taking the limit of our
inequality as $\delta \ra 0^+$, and again employing the continuous
extension property, yields that $\left\vert f(\ve)-\sum_{k=0}^M
f^{(k)}(0) \ve^k \right\vert \leq C\ve^{M+1}$. This fact
together with Lemma~\ref{lem:epsCts} proves the proposition.
\end{proof}

\subsection{Relating the LW and self-energy expansions}\label{sec:lwbold}
The bold diagrams for the Luttinger-Ward functional are pinned down in terms of the 
bold diagrams for the self-energy via the following:
\begin{prop}
\label{prop:boldRelation}
If $U$ is a homogeneous quartic polynomial, then for all $k$,
\[
\Phi^{(k)}_G = \frac{1}{2k} \Tr[G \Sigma^{(k)}_G].
\]
\end{prop}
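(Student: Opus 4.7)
The plan is to exploit the scaling identity of Corollary \ref{cor:fourthOrderScaling}: for $U$ homogeneous quartic we have $\Phi[\lambda G, \ve U] = \Phi[G, \lambda^2 \ve U]$ for all $\lambda > 0$ and $\ve > 0$. I would differentiate this identity with respect to $\lambda$ at $\lambda = 1$, thereby producing a genuine (non-asymptotic) identity between $\Tr[G\, \Sigma_G(\ve)]$ and $\ve \, \Phi'_G(\ve)$, and then match asymptotic-series coefficients term by term.

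Differentiating the left-hand side requires the chain rule for the gradient convention of Definition \ref{def:grad1}, which reads $\frac{d}{d\lambda}\big|_{\lambda = 1} f(\lambda G) = \frac{1}{2}\Tr[G\, \nabla f(G)]$ for any differentiable $f$ on $\symm$. This is straightforward to verify by decomposing $G = \sum_i \tfrac{1}{2} G_{ii}\, E^{(ii)} + \sum_{i < j} G_{ij}\, E^{(ij)}$ (the diagonal coefficient carries a factor of $\tfrac{1}{2}$ because $E^{(ii)}$ has a $2$ on the diagonal) and applying Definition \ref{def:grad1} componentwise; the two apparent factors of $2$ then cancel against the doubling of off-diagonal entries in $\Tr[G\, \nabla f]$. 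Using $\Sigma = \tfrac{\partial \Phi}{\partial G} = \tfrac{1}{2} \nabla \Phi$ from Definition \ref{def:grad2}, this yields $\frac{d}{d\lambda}\big|_{\lambda=1} \Phi[\lambda G, \ve U] = \Tr[G\, \Sigma_G(\ve)]$. The right-hand side, by the ordinary chain rule in $\ve$, equals $2\ve\, \Phi'_G(\ve)$. Hence for all $\ve > 0$,
\[
\Tr\bigl[G\, \Sigma_G(\ve)\bigr] = 2\ve\, \Phi'_G(\ve).
\]

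To conclude, I would insert the asymptotic expansions from Proposition \ref{prop:asymptoticSeries}. Both series start at $k = 1$ because $\Phi[G,0] = 0$ by \eqref{eq:LWdef} combined with the Gaussian maximum-entropy formula, and therefore $\Sigma[G,0] = 0$ as well. Lemma \ref{lem:epsCts} ensures that $\Phi'_G$ admits the asymptotic expansion obtained by termwise differentiation, so $2\ve\, \Phi'_G(\ve) \sim \sum_{k \geq 1} 2k\, \Phi^{(k)}_G\, \ve^k$ while $\Tr[G\, \Sigma_G(\ve)] \sim \sum_{k \geq 1} \Tr[G\, \Sigma^{(k)}_G]\, \ve^k$. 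By the uniqueness of asymptotic coefficients (an elementary consequence of the Lagrange remainder estimate applied inductively in $k$), matching powers of $\ve$ yields $\Tr[G\, \Sigma^{(k)}_G] = 2k\, \Phi^{(k)}_G$, which is the stated identity.

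The only delicate point is the bookkeeping of the factor of $\tfrac{1}{2}$ arising from the nonstandard gradient convention on $\symm$; this is precisely the difference that motivated the matrix derivative of Definition \ref{def:grad2} and the identification $\Sigma = \tfrac{\partial \Phi}{\partial G}$. Once that is handled correctly, the proof reduces to a single application of the transformation rule together with the existence of the asymptotic series, with no reference to diagrammatics or resummation — consistent with the commentary preceding the proposition.
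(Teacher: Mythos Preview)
Your proof is correct and rests on the same scaling identity as the paper's (namely $\Phi[\lambda G,\ve U]=\Phi[G,\lambda^{2}\ve U]$, i.e.\ Corollary~\ref{cor:fourthOrderScaling}), but you exploit it differently. The paper first differentiates in $G$ to obtain $\Sigma[tG,\ve U]=t^{-1}\Sigma[G,t^{2}\ve U]$, then recovers $\Phi[G,\ve U]$ by integrating $\frac{d}{dt}\Phi[tG,\ve U]=\Tr[G\,\Sigma[tG,\ve U]]$ over $t\in(0,1)$, substituting the asymptotic series for $\Sigma$ inside the integral; the $\frac{1}{2k}$ emerges from $\int_{0}^{1} t^{2k-1}\,dt$. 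You instead differentiate the scaling identity in $\lambda$ at $\lambda=1$, which produces the exact functional equation $\Tr[G\,\Sigma_{G}(\ve)]=2\ve\,\Phi_{G}'(\ve)$ directly, after which matching asymptotic coefficients is immediate. Your route is a bit more elementary: it avoids the integration over $t$ and, in particular, sidesteps any appeal to the behavior of $\Phi[tG,\ve U]$ as $t\to 0^{+}$ (which in the paper's argument implicitly uses the continuous extension of $\Phi$ at the boundary point $0\in\psd$). The paper's approach, on the other hand, never needs to justify termwise differentiation of the $\Phi_{G}$ series, only termwise integration of the $\Sigma$ series, which is automatic for polynomials. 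Both are valid; yours is the standard Euler-homogeneity trick applied at the level of the LW functional rather than at the level of $\Sigma$.
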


\begin{proof}
Observe that by the transformation rule that for any $G\in \pd$, $\ve, t>0$. 
\[
\Phi[tG,\ve U] = \Phi[G, \ve U\circ (t^{1/2} I)]
\]
Taking the gradient in $G$ of both sides, we have
\[
t \Sigma[tG,\ve U] = \Sigma[G, \ve U\circ (t^{1/2} I)].
\]
Since $U$ is homogeneous quartic, in fact we have 
\[
\Sigma[tG,\ve U] = \frac{1}{t} \Sigma[G, t^2 \ve U].
\]

Then using this relation we compute: 
\begin{eqnarray*}
\Phi[G,\ve U] & = & \int_{0}^{1}\frac{d}{dt}\Phi[tG,\ve U]\,dt\\
 & = & \int_{0}^{1}\Tr[G\Sigma[tG,\ve U]]\,dt\\
 & = & \int_{0}^{1}\frac{1}{t}\Tr[G\Sigma[G,t^{2}\ve U]]\,dt\\
 & = & \int_{0}^{1}\frac{1}{t}\left[\sum_{k=1}^{M}\Tr\left[G\Sigma_{G}^{(k)}\right] t^{2k}\ve^k +O\left(t^{2(M+1)}\ve^{M+1}\right)\right]\,dt\\
 & = & \int_{0}^{1}\left[\sum_{k=1}^{M}\Tr\left[G\Sigma_{G}^{(k)}\right]t^{2k-1}\ve^{k}+O\left(t^{2M+1}\ve^{M+1}\right)\right]\,dt.
\end{eqnarray*}
Now since $t$ ranges from $0$ to $1$ in the integrand, we have
that $t^{2N+1}\ve^{N+1}\leq\ve^{N+1}$, and therefore
\begin{eqnarray*}
\Phi[G,\ve U] & = & \int_{0}^{1}\left[\sum_{k=1}^{M}\Tr\left[G\Sigma_{G}^{(k)}\right]t^{2k-1}\ve^{k}\right]\,dt+O(\ve^{M+1})\\
 & = & \sum_{k=1}^{M}\frac{1}{2k}\Tr\left[G\Sigma_{G}^{(k)}\right]\ve^{k}+O(\ve^{M+1}).
\end{eqnarray*}
This establishes the proposition.
\end{proof}

\subsection{Diagram-free discussion of results from Part I}
\label{sec:p1Review}
For $U$ satisfying the weak growth condition and 
$A \in \dom \Omega[\,\cdot\,,U]$, define 
\[
\sigma[A,U] := A - (G[A,U])^{-1}.
\]
Here we use the lowercase $\sigma$ to emphasize that the self-energy 
here is being considered as a functional of $A$ (not $G$), together with the interaction.

Now we set the notation of $U$ to indicated a fixed generalized Coulomb interaction~\eqref{eqn:Uterm}.
Further define
\begin{equation}
\label{eqn:gsigmaDef}
G_A(\ve) := G[A,\ve U], \quad \sigma_A(\ve) := \sigma[A,\ve U].
\end{equation}

The following lemma concerns the \emph{bare} diagrammatic expansion of the Green's 
function and the self-energy, i.e., the asymptotic 
series for $G_A$ and $\sigma_A$.

\begin{lem}
\label{lem:bareSeries}
For fixed $A \in \pd$, all derivatives 
$\frac{\ud^{n}G_A}{\ud \ve^{n}} : (0,\infty)\ra \pd$ and $\frac{\ud^{n}\sigma_A}{\ud \ve^{n}}: (0,\infty) \ra \symm$ extend continuously 
to $[0,\infty)$. In fact, interpreted as functions of both $A$ and $\ve$, 
$\frac{\ud^{n}G_A}{\ud \ve^{n}}(\ve)$ and $\frac{\ud^{n}\sigma_A}{\ud \ve^{n}}(\ve)$ extend continuously to $\pd\times [0,\infty)$.
 Moreover, 
we have asymptotic series expansions
\[
G_A(\ve) = \sum_{k=0}^\infty g^{(k)}_A \ve^k, \quad \sigma_A(\ve) =
\sum_{k=1}^\infty \sigma^{(k)}_A \ve^k, 
\]
where the coefficient functions $g^{(k)}_A$ and $\sigma^{(k)}_A$ are
polynomials in $A^{-1}$. More precisely, 
$g^{(k)}_A$ and $\sigma^{(k)}_A$ are homogeneous polynomials of degrees $2k+1$ and $2k-1$, respectively. 
(Note that the zeroth-order term $\sigma_A^{(0)}$ is 
implicitly zero.)

Finally, let $G_{A}^{(\le M)}(\ve)$ and $\sigma_{A}^{(\le M)}(\ve)$ denote the $M$-th partial sums of 
the above asymptotic series for $G_A(\ve)$ and $\sigma_A(\ve)$,
respectively. For every 
$A \in \pd$, there exists a neighborhood $\mathcal{N}$ of $A$ in $\pd$ on which the truncation errors can 
actually be bounded
\[
\left\vert G_A(\ve) - G^{(\le M)}_{A}(\ve) \right\vert \leq C \ve^{M+1}, 
\quad
\left\vert \sigma_A(\ve) - \sigma^{(\le M)}_{A}(\ve) \right\vert \leq C \ve^{M+1}
\]
for all $\epsilon \in [0,\tau]$, with $C, \tau$ \emph{independent} of $A \in \mc{N}$. 
\end{lem}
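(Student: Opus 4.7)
The plan is to derive the statement in two stages: an analytic stage proving joint continuity on $\pd \times [0,\infty)$ via dominated convergence, and a combinatorial stage identifying the Taylor coefficients with the bare diagrams of Part I, which will then supply the polynomial-in-$A^{-1}$ structure and degree counts. For the analytic stage I would write $G_{A,ij}(\ve) = Z_{ij}[A,\ve]/Z[A,\ve]$ for the usual Gibbs integrals and observe the key domination: for any compact $K \Subset \pd$ there exists $c > 0$ such that $A \succeq cI$ on $K$, and since $U \geq 0$ (the generalized Coulomb matrix $v$ is positive definite), the integrand $e^{-\frac{1}{2} x^T A x - \ve U(x)}$ is dominated by $e^{-\frac{c}{2} \|x\|^2}$ uniformly over $K \times [0,\infty)$. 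Differentiation in $\ve$ brings down polynomial factors $(-U(x))^n$; differentiation in $A$ in the sense of Definition \ref{def:grad1} brings down polynomial factors built from $x_k x_l$; both classes of polynomial factors are integrable against the Gaussian envelope. Dominated convergence then delivers joint continuity on $\pd \times [0,\infty)$ of every mixed partial of $Z$ and $Z_{ij}$, and the quotient rule (using $Z > 0$) transfers this to $G_A$. For $\sigma_A = A - G_A^{-1}$, I would use that $G_A \in \pd$ by Theorem \ref{thm:variation} together with the smoothness of matrix inversion on $\pd$.

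Next, I would identify the Taylor coefficients at $\ve = 0$. There the Gibbs measure is the Gaussian with covariance $A^{-1}$, and iterated $\ve$-differentiation of the ratio $Z_{ij}/Z$ produces connected Gaussian expectations of polynomials in $x$ and $U$. By Isserlis/Wick's theorem, each such expectation of an $x$-polynomial of degree $d$ is a homogeneous polynomial of degree $d/2$ in the entries of $A^{-1}$; taking $d = 2 + 4k$ yields $g_A^{(k)}$ of degree $2k+1$ in $A^{-1}$. The more delicate claim that $\sigma_A^{(k)}$ is a polynomial in $A^{-1}$ of degree exactly $2k-1$---rather than a rational expression in $A$ and $A^{-1}$ coming from the Neumann inversion of $G_A(\ve)^{-1}$---is the amputation identity and reflects the 2PI structure of the self-energy diagrams in Part I. I would invoke the combinatorial identification of $g_A^{(k)}$ with the connected bare diagrams and $\sigma_A^{(k)}$ with the 2PI bare diagrams via Theorem 4.12 of Part I; the polynomial structure and degrees then follow from a direct reading of the diagrams.

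Finally, for the uniform truncation bounds, given $A_0 \in \pd$ I would pick a compact neighborhood $\mathcal{N} \Subset \pd$ of $A_0$ and a small $\tau > 0$ so that the joint continuity of $\frac{\ud^{M+1}}{\ud \ve^{M+1}} G_A(\ve)$ on $\pd \times [0,\infty)$ makes it uniformly bounded on $\mathcal{N} \times [0,\tau]$, and similarly for $\sigma_A$. The Lagrange remainder form of Taylor's theorem, applied on the segment $\{A\} \times [0,\ve]$, then yields $|G_A(\ve) - G_A^{(\leq M)}(\ve)| \leq C \ve^{M+1}$ with $C, \tau$ independent of $A \in \mathcal{N}$, and likewise for $\sigma_A$. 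I expect the main obstacle to be the amputation identity in the second paragraph: a purely analytic proof of the exact degree reduction for $\sigma_A^{(k)}$ would require somewhat involved bookkeeping of Wick pairings, but invoking Part I's diagrammatic machinery renders this transparent.
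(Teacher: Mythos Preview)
Your proposal is correct and follows essentially the same route as the paper: dominated convergence (using $U\geq 0$ for the generalized Coulomb interaction) for the joint continuous extension of all $\ve$-derivatives to $\pd\times[0,\infty)$, invocation of Part I for the polynomial-in-$A^{-1}$ structure and degree counts of the coefficients, and the Lagrange remainder together with joint continuity for the locally uniform truncation bound. One minor correction: the paper cites Theorems 3.15 and 3.17 of Part I for the bare series (the bare self-energy diagrams are 1PI, not 2PI---the 2PI structure pertains to the bold skeletons), whereas Theorem 4.12 of Part I is the bold/bare matching result used later.
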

\begin{proof}
The asymptotic series expansions for $G_A$ and $\Sigma_A$ are established 
in Theorems 3.15 and 3.17 of Part I. The continuous extension of the derivatives of 
$G_A$ and $\sigma_A$ to $[0,\infty)$ follows from differentiation under the 
integral and simple dominated convergence arguments.

The uniform error bound follows from a Lagrange error bound argument as in 
Proposition \ref{prop:asymptoticSeries}, together with the continuity of 
$\frac{\ud^{n}G_A}{\ud \ve^{n}}(\ve)$ and $\frac{\ud^{n}\sigma_A}{\ud \ve^{n}}(\ve)$ on $\pd\times [0,\infty)$.
\end{proof}

Inspired by Eq.~\eqref{eqn:thmSeries2}, let 
\[
 \mathbf{S}_G^{(k)} = \sum_{\Gamma_{\mr{s}} \in \mf{F}_2^{\mr{2PI}},\,\mathrm{order}\,k} \frac{\mathbf{F}_{\Gamma_{\mr{s}}}}{S_{\Gamma_{\mr{s}}}}.
\] 
In fact $\mathbf{S}_G^{(k)}$ is polynomial in $G$, homogeneous of degree $2k-1$. 
At this point we do not yet know that $\mathbf{S}_G^{(k)}$ 
coincides with $\Sigma_G^{(k)}$, and indeed this is what we want to show. For any $G$, also define the 
partial sum
\[
 \mathbf{S}_G^{(\leq M)}(\ve) := \sum_{k=1}^M  \mathbf{S}_G^{(k)} \ve^k.
\]
Then the main result 
(Theorem 4.12) of Part I can be phrased as follows. 
\begin{thm}
\label{thm:p1Main}
For any fixed $A\in \pd$, the expressions
\[
\mathbf{S}_{G^{(\le M)}_{A}(\ve)}^{(\le M)}(\ve) = \sum_{k=1}^M
\mathbf{S}_{G^{(\le M)}_{A}(\ve)}^{(k)} \ve^k, \quad  \sigma_{A}^{(\le M)}(\ve) = \sum_{k=1}^M \sigma^{(k)}_A \ve^k
\]
agree as polynomials in $\ve$ up to order $M$, and hence they 
agree as joint polynomials in $(A^{-1},\ve)$ after neglecting all terms in which $\ve$ appears 
degree at least $M+1$.
\end{thm}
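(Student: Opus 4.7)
The plan is to recognize this theorem as the analytic reformulation of the combinatorial bijection established as Theorem 4.12 of Part I, and then to verify the translation. The first step is to expand the left-hand side explicitly: substitute the truncated bare propagator expansion $G^{(\le M)}_A(\ve) = \sum_{j=0}^{M} g^{(j)}_A \ve^j$ into each of the $2k-1$ propagator slots of every bold skeleton diagram appearing in $\mathbf{S}^{(k)}_G$, for $k = 1, \ldots, M$. After multiplying out, the coefficient of $\ve^n$ on the left is a weighted sum over all ways of assigning nonnegative integers $j_1, \ldots, j_{2k-1}$ to the propagator slots of each bold skeleton $\Gamma_{\mr{s}}$ of order $k$ satisfying $k + \sum_\ell j_\ell = n$, with the $\ell$th slot replaced by the bare coefficient $g^{(j_\ell)}_A$. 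Since each $g^{(j_\ell)}_A$ is itself a sum over bare (non-skeleton) propagator insertions, this operation decorates each bold skeleton into a collection of ordinary bare self-energy diagrams of total order $n$.

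Next, I would invoke Theorem 4.12 of Part I, which is the combinatorial statement that this decoration procedure, summed over all bold skeletons of order $k \le M$ and all admissible decorations, recovers every bare self-energy diagram of each order $n \le M$ exactly once with the correct symmetry factor (via the two-particle irreducibility partition of bare diagrams by skeleton part). Since $\sigma^{(n)}_A$ is, by the same development in Part I (cf.\ Lemma~\ref{lem:bareSeries}), the sum of all bare self-energy diagrams of order $n$ evaluated with propagator $A^{-1}$, the coefficients of $\ve^n$ on both sides agree for $n \le M$. This already gives the claimed agreement as polynomials in $\ve$ up to order $M$, and since both sides are manifestly polynomial in $A^{-1}$ (by Lemma~\ref{lem:bareSeries} on the right, and by $\mathbf{S}_G^{(k)}$ being polynomial in $G$ together with $g^{(j)}_A$ being polynomial in $A^{-1}$ on the left), the joint-polynomial statement in $(A^{-1},\ve)$ follows.

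The only analytic bookkeeping is verifying that the truncation orders suffice. On the bold side we keep $k \le M$, and on the bare propagator side we keep $j \le M$; any bare diagram of order $n \le M$ arising from decorating a skeleton of order $k$ satisfies $j_\ell \le n - k \le M - 1$, so restricting to $j \le M$ in $G^{(\le M)}_A(\ve)$ loses nothing relevant. Higher-order contamination from the substitution (up to $\ve^{(2M-1)M}$) is irrelevant since we assert agreement only modulo $\ve^{M+1}$. The main obstacle, such as it is, is the clerical matching of symmetry factors, edge/vertex labelings, and the precise formulation of two-particle irreducibility used in Part I with the definition of $\mathbf{S}^{(k)}_G$ adopted in this paper; the substantive combinatorial content is entirely contained in Part I and need not be reproved here.
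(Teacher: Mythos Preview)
Your proposal is correct and matches the paper's approach: the paper simply states this theorem as a rephrasing of Theorem~4.12 of Part~I without giving any independent proof in Part~II, and your account of how the combinatorial content translates (substituting the truncated bare propagator into the bold skeletons and invoking the skeleton-decoration bijection) is exactly the intended reading. The only difference is that you spell out the dictionary between the two formulations in more detail than the paper does.
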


\subsection{Derivation of self-energy bold
diagrams}\label{sec:selfenergybold}

We have already shown that there \emph{exist} asymptotic series for the 
LW functional and the self-energy. 
The remainder of Theorem \ref{thm:boldDiag} then consists of identifying that the
self-energy coefficients $\Sigma_G^{(k)}$ are indeed given by the bold diagrammatic expansion, i.e., 
that $\Sigma_G^{(k)} = \mathbf{S}_G^{(k)}$. 
Equivalently, we want to show that the partial sums 
$\mathbf{S}_G^{(\leq M)}(\ve)$ 
and $\Sigma_G^{(\leq M)}(\ve)$, which are polynomials 
of degree $M$ in $\ve$, are equal.
We will think of $G \in \pd$ as fixed throughout the 
following discussion, and 
we omit dependence on $G$ from some of the notation 
below to avoid excess clutter. 
We will also think of 
$M$ as a fixed positive integer and $\ve >0$ as variable (and sufficiently small).

Since our series expansion is only valid in the asymptotic sense, for
any finite $M$ we consider the truncation
\[
\Sigma^{(\le M)}_G (\ve) := \sum_{k=1}^M
\Sigma^{(k)}_G\, \ve^k.
\]
 Then we have $\Sigma_G (\ve) -
\Sigma^{(\le M)}_G (\ve) = O(\ve^{M+1})$. For the purpose of this discussion, $O(\ve^{M+1})$ will be thought of as negligibly small, and `$\approx$' will be used to denote equality up to error $O(\ve^{M+1})$. Meanwhile 
`$\sim$' will be used to denote error that is $O(\ve^{M+1-p})$ for all $p\in(0,1)$, equivalently 
$O(\ve^{M+\delta})$ for all $\delta \in (0,1)$. We remark that the
difference between the relations `$\approx$' and `$\sim$' is due to technical reasons
to be detailed later, and may be neglected on first reading.

Note that it actually suffices to show 
that 
$\Sigma_G^{(\leq M)}(\ve) \sim \mathbf{S}_G^{(\leq M)}(\ve)$. 
Indeed, both sides are polynomials of degree $M$ in $\ve$. Thus their difference is a polynomial 
of degree $\leq M$. If the degree-$n$ part of the difference is nonzero for some $n = 1,\ldots, M$, then the difference is 
not $O(\ve^{n+\delta})$ for any $\delta>0$. But if 
$\Sigma_G^{(\leq M)}(\ve) \sim \mathbf{S}_G^{(\leq M)}(\ve)$, then 
the difference is $O(\ve^{n+\delta})$ for all $n=1,\ldots,M$, $\delta \in (0,1)$. Thus in this case the difference is zero.
With this reduction in mind, we now make a simple yet critical observation, namely that $\Sigma^{(\le M)}_G (\ve)$
can be identified as the \textit{exact} self-energy yielded by a 
modified interaction term. 
This will allow us to identify a quadratic form $A^{(M)}(\ve)$, for which dependence on $G$ has been suppressed 
from the notation, which generates (up to negligible error) the  
Green's function $G$ under the interaction $\ve U$.
\begin{lem}
\label{lem:SEpartialSum}
With notation as in the preceding discussion, $\Sigma^{(\le M)}_G (\ve)$ is the self-energy induced by the interaction ${U}^{(M)}_\ve(x) := \ve U(x) + \frac{1}{2} x^T \left[\Sigma_G(\ve) - \Sigma^{(\le M)}_G (\ve) \right] x $, i.e.,
\[
\Sigma^{(\le M)}_G (\ve) = \Sigma[G,{U}^{(M)}_\ve],
\]
and moreover
\[
A^{(M)} (\ve) := A\left[G,{U}^{(M)}_\ve \right] = G^{-1} + \Sigma^{(\le M)}_G (\ve).
\]
Thus we may identify
\[
G = G[A^{(M)}(\ve),U_\ve^{(M)}],\quad \Sigma^{(\le M)}_G (\ve) = \sigma[A^{(M)} (\ve),{U}^{(M)}_\ve].
\]
\end{lem}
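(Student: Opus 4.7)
The plan is to derive everything from a simple ``quadratic shift'' identity: shifting the interaction $U$ by a symmetric quadratic form $\frac12 x^T Q x$ is equivalent to shifting the quadratic part of the Hamiltonian $A \mapsto A+Q$. Concretely, for any $Q \in \symm$ and any $U$ satisfying the weak growth condition, the Gibbs densities corresponding to $(A, U + \tfrac12 x^T Q x)$ and $(A+Q, U)$ are identical, so $Z[A, U + \tfrac12 x^T Q x] = Z[A+Q, U]$, and consequently $\Omega[A, U + \tfrac12 x^T Q x] = \Omega[A+Q, U]$ and $G[A, U + \tfrac12 x^T Q x] = G[A+Q, U]$. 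First I would verify that $U^{(M)}_\ve$ satisfies the weak growth condition (which is immediate: $\ve U$ does, and adding $\tfrac12 x^T Q x$ only perturbs the constant $C_U$ by $\tfrac12\|Q\|$, while $\mathrm{dom}\,\Omega[\,\cdot\,,U^{(M)}_\ve]$ is just the shift of $\mathrm{dom}\,\Omega[\,\cdot\,,\ve U]$ by $-Q$ and hence still open), so that Theorem \ref{thm:variation} applies with interaction $U^{(M)}_\ve$, giving a bijection between $A$ and $G$ and a well-defined inverse mapping $A[\,\cdot\,,U^{(M)}_\ve]$.

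Applying the shift identity to the bijection then yields $A[G, U + \tfrac12 x^T Q x] = A[G,U] - Q$, simply because $G[A^*, U + \tfrac12 x^T Q x] = G$ iff $G[A^*+Q, U] = G$ iff $A^* + Q = A[G, U]$. Specialize this to $U \rightsquigarrow \ve U$ and $Q := \Sigma_G(\ve) - \Sigma_G^{(\le M)}(\ve)$. The Dyson equation (section \ref{sec:lw}) for the original interaction $\ve U$ at $G$ gives $A[G,\ve U] = G^{-1} + \Sigma_G(\ve)$, so
\[
A^{(M)}(\ve) := A[G,U^{(M)}_\ve] = A[G,\ve U] - Q = G^{-1} + \Sigma_G^{(\le M)}(\ve),
\]
which establishes the formula for $A^{(M)}(\ve)$.

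Now apply the Dyson equation a second time, this time for the interaction $U^{(M)}_\ve$ at the same Green's function $G$: since $G = G[A^{(M)}(\ve),U^{(M)}_\ve]$ by definition of $A^{(M)}(\ve)$, we obtain
\[
\Sigma[G,U^{(M)}_\ve] \;=\; A^{(M)}(\ve) - G^{-1} \;=\; \Sigma_G^{(\le M)}(\ve),
\]
which is the main claim of the lemma. The two remaining identifications in the last display of the statement are then immediate from the definitions: $G = G[A^{(M)}(\ve),U^{(M)}_\ve]$ is just the definition of $A^{(M)}(\ve)$, and $\sigma[A^{(M)}(\ve),U^{(M)}_\ve] = A^{(M)}(\ve) - G^{-1} = \Sigma_G^{(\le M)}(\ve)$ by the definition of $\sigma$ at the beginning of section \ref{sec:p1Review}.

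There is essentially no obstacle here beyond bookkeeping; the only substantive point is the verification that $U^{(M)}_\ve$ still falls within the scope of Theorem \ref{thm:variation}, which is what legitimizes writing $A[G,U^{(M)}_\ve]$ and $\Sigma[G,U^{(M)}_\ve]$ in the first place. Once the shift identity and its consequence for the bijection $A \leftrightarrow G$ are recorded, the two applications of the Dyson equation (one for $\ve U$, one for $U^{(M)}_\ve$) give the result mechanically, with no recourse to diagrammatic or series arguments---a feature that is essential for the inductive structure of section \ref{sec:selfenergybold}.
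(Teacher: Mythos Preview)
Your proof is correct and takes essentially the same approach as the paper: both rest on the observation that the Hamiltonian $\tfrac12 x^T A_G(\ve) x + \ve U(x)$ can be rewritten as $\tfrac12 x^T\bigl(G^{-1}+\Sigma_G^{(\le M)}(\ve)\bigr)x + U_\ve^{(M)}(x)$ via the Dyson equation $A_G(\ve)-\Sigma_G(\ve)=G^{-1}$, whence the Gibbs measure (hence the Green's function $G$) is unchanged and the two applications of Dyson you describe finish the argument. Your version is slightly more explicit in isolating the general quadratic-shift identity and in checking that $U_\ve^{(M)}$ satisfies the weak growth condition, but the underlying mechanism is identical.
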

\begin{proof}
Recalling that $A_G(\ve) = A[G,\ve U]$ and 
$\Sigma_G (\ve) = \Sigma[G,\ve U]$, write 
\begin{eqnarray*}
\frac12 x^T A_G(\ve) x + U(x) & = & \frac12 x^T \left( A_G(\ve) - \Sigma_G(\ve) + \Sigma^{(\le M)}_G (\ve) \right) x + U_\ve ^{(M)} (x) \\
& = &  \frac12 x^T \left( G^{-1} + \Sigma^{(\le M)}_G (\ve) \right) x + U_\ve ^{(M)} (x).
\end{eqnarray*}
It follows that under the interaction ${U}^{(M)}_\ve$, the quadratic form $G^{-1} + \Sigma^{(\le M)}_G (\ve)$ corresponds to the (interacting) Green's function $G$. This establishes the second statement of the lemma, i.e., that \[
A[G,{U}^{(M)}_\ve] = G^{-1} + \Sigma^{(\le M)}_G (\ve).
\]
Moreover, by the Dyson equation we have that 
\[
\Sigma[G,{U}^{(M)}_\ve] = A[G,{U}^{(M)}_\ve] - G^{-1} = \Sigma^{(\le M)}_G (\ve),
\]
which is the first statement of the lemma. The last 
statement then follows from the second, together 
with the definitions of $G[\,\cdot\,,\,\cdot\,]$ and $\sigma[\,\cdot\,,\,\cdot\,]$.
\end{proof}

\vspace{2mm}

\begin{rem}
Note carefully that Lemma \ref{lem:SEpartialSum} is a
non-perturbative fact and is valid for all $\ve>0$, though we shall apply it in a 
perturbative context.
\end{rem}

\vspace{2mm}

At this point we have defined the terms needed 
to present a schematic diagram (Figure \ref{fig:schematic}) of our proof that 
$\Sigma_{G}^{(\le M)}(\ve) \sim \mathbf{S}^{(\le M)}_G (\ve)$. Although the motivation for this schematic 
may not be fully clear at this point, the reader should 
refer back to it as needed for perspective.
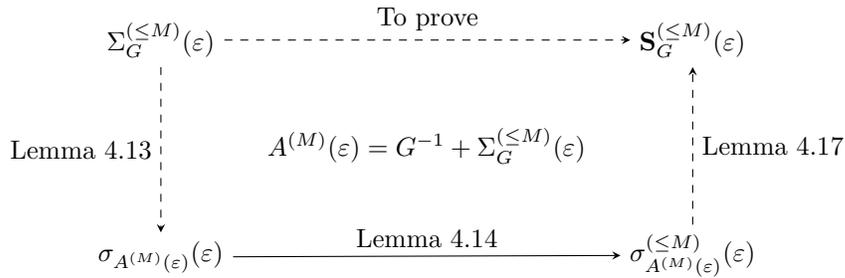
\begin{figure}[h]
\begin{center}
\begin{tikzpicture}
  \matrix (m) [matrix of math nodes,row sep=2em,column sep=1em,minimum width=2em] {
  \Sigma_{G}^{(\le M)}(\ve) & & \mathbf{S}^{(\le M)}_G (\ve) \\
  & A^{(M)}(\ve) = G^{-1} + \Sigma_{G}^{(\le M)}(\ve) & \\
  \sigma_{A^{(M)}(\ve)}(\ve) & & \sigma^{(\le M)}_{A^{(M)}(\ve)} (\ve) \\};
  \path[-stealth]
    (m-1-1) edge [dashed] node [left] {Lemma~\ref{lem:SigmaSim}}(m-3-1)
            edge [dashed] node [above] {To prove}(m-1-3)
    (m-3-3) edge [dashed] node [right] {Lemma \ref{lem:endgame}} (m-1-3)
    (m-3-1) edge          node [above] {Lemma~\ref{lem:SigmaSim2}}(m-3-3);
\end{tikzpicture}
\end{center}
\caption{Schematic diagram for proving the bold diagrammatic expansion.
Dashed lines indicate `$\sim$', and solid lines indicate `$\approx$'.}
\label{fig:schematic}
\end{figure}

Now recalling the definitions \eqref{eqn:gsigmaDef}, we can write
\begin{equation}
\label{eqn:boldProp}
G_{A^{(M)}(\ve)}(\ve) = G[A^{(M)}(\ve) , \ve U], \quad \sigma_{A^{(M)}(\ve)} (\ve) := \sigma[A^{(M)}(\ve) , \ve U].
\end{equation}
Meanwhile, following Lemma \ref{lem:SEpartialSum} we have the identities
\begin{equation}
\label{eqn:boldProp2}
G = G[A^{(M)}(\ve) , U_\ve^{(M)}], \quad \Sigma_G^{(\le M)} (\ve) = \sigma[A^{(M)}(\ve) , U_\ve^{(M)}].
\end{equation}
Note that pointwise, $\ve U$ and $U_\ve^{(M)}$ differ negligibly, but the form of $\ve U$ is 
simpler and easier to work with going forward. 

Based on Eqs.~\eqref{eqn:boldProp} 
and \eqref{eqn:boldProp2}, one then hopes that $G_{A^{(M)}(\ve)}(\ve)$ is close to $G$ 
and $\sigma_{A^{(M)}(\ve)}(\ve)$ is close to $\Sigma_G^{(\leq M)}(\ve)$. This is the content 
of the next two lemmas.

\vspace{2mm}

\begin{lem}
\label{lem:Gsim}
$G_{A^{(M)}(\ve)}(\ve) \sim G$.
\end{lem}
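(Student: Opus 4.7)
The plan is to exploit Lemma~\ref{lem:SEpartialSum} in order to rewrite $G$ as a Green's function that differs from $G_{A^{(M)}(\ve)}(\ve)$ only by a small perturbation of the quadratic form, and then invoke a stability estimate. Concretely, from Lemma~\ref{lem:SEpartialSum} we have $G = G[A^{(M)}(\ve), U_\ve^{(M)}]$, and the modified interaction differs from $\ve U$ only by the quadratic correction $\frac12 x^T \Delta(\ve) x$ with $\Delta(\ve) := \Sigma_G(\ve) - \Sigma^{(\le M)}_G(\ve)$. Absorbing this correction into the quadratic form yields
\[
G \;=\; G\bigl[A^{(M)}(\ve) + \Delta(\ve),\,\ve U\bigr] \;=\; G_{A^{(M)}(\ve) + \Delta(\ve)}(\ve),
\]
so that the quantity we need to bound is
\[
G - G_{A^{(M)}(\ve)}(\ve) \;=\; G_{A^{(M)}(\ve) + \Delta(\ve)}(\ve) - G_{A^{(M)}(\ve)}(\ve).
\]

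By Proposition~\ref{prop:asymptoticSeries} applied to the asymptotic series for $\Sigma_G$, we have $\Delta(\ve) = O(\ve^{M+1})$. Moreover, since $\Sigma^{(\le M)}_G(\ve)\to 0$ as $\ve\to 0^+$, we have $A^{(M)}(\ve) \to G^{-1}\in\pd$, and hence, for all sufficiently small $\ve>0$, both $A^{(M)}(\ve)$ and $A^{(M)}(\ve) + \Delta(\ve)$ lie in a fixed compact neighborhood $\mc{N}\subset\pd$ of $G^{-1}$. On such a neighborhood the map $A\mapsto G_A(\ve) = \nabla\Omega[A,\ve U]$ is smooth (Lemma~\ref{lem:OmegaConcave}), so by the fundamental theorem of calculus,
\[
G_{A^{(M)}(\ve) + \Delta(\ve)}(\ve) - G_{A^{(M)}(\ve)}(\ve)
= \int_0^1 D_A G_A(\ve)\Big|_{A = A^{(M)}(\ve) + t\Delta(\ve)}\cdot \Delta(\ve)\,\ud t.
\]

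The main step is therefore to establish that $\sup_{(A,\ve)\in\mc{N}\times[0,\tau]}\|D_A G_A(\ve)\|$ is finite for some $\tau>0$. This follows from the joint continuity of $D_A G_A(\ve)$ on $\pd\times[0,\infty)$, which in turn is proved by the same differentiation-under-the-integral and dominated-convergence arguments used for Lemma~\ref{lem:bareSeries}, together with compactness of $\mc{N}\times[0,\tau]$. Granting this, one concludes
\[
\bigl\| G - G_{A^{(M)}(\ve)}(\ve) \bigr\| \;\le\; C\,\|\Delta(\ve)\| \;=\; O(\ve^{M+1}),
\]
which is in particular $\sim 0$ in the sense of the present section, proving the lemma.

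The only genuine technical obstacle is the uniform control of the $A$-derivative of $G_A(\ve)$; once that is in place the rest is a routine Lipschitz estimate. An alternative that avoids extending Lemma~\ref{lem:bareSeries} to $A$-derivatives is to combine its uniform remainder bound (applied at both $A^{(M)}(\ve)$ and $A^{(M)}(\ve)+\Delta(\ve)$) with direct differentiation of the polynomial-in-$A^{-1}$ partial sums $G^{(\le M)}_A(\ve)$; the latter gives an explicit Lipschitz estimate in $A$ over $\mc{N}$, and the uniform remainder absorbs the discrepancy at order $\ve^{M+1}$.
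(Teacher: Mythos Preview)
Your proof is correct and takes a genuinely different route from the paper's. The paper keeps the quadratic perturbation inside the interaction and estimates the integral
\[
\int e^{-\frac12 x^T A^{(M)}(\ve) x - \ve U(x)}\Bigl(1 - e^{-\frac12 x^T \Delta(\ve) x}\Bigr)\,\ud x
\]
directly, splitting into a ball of radius $R(\ve)=\ve^{-p/2}$ (on which $|x^T\Delta(\ve)x|\le C\ve^{M+1-p}$) and its complement (where the Gaussian tail decays super-algebraically). This is why the paper only obtains $O(\ve^{M+1-p})$ for every $p\in(0,1)$ and must introduce the weaker relation ``$\sim$''. By contrast, you absorb $\Delta(\ve)$ into the quadratic form, recognize that $A^{(M)}(\ve)+\Delta(\ve)=A_G(\ve)$, and reduce to a Lipschitz estimate for $A\mapsto G_A(\ve)$ on a compact set in $\pd\times[0,\tau]$. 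The joint continuity of $D_A G_A(\ve)=\nabla_A^2\Omega[A,\ve U]$ up to $\ve=0$ indeed follows by the same dominated-convergence mechanism used elsewhere (fourth moments dominated by a fixed Gaussian envelope once $A$ is confined to a compact subset of $\pd$), so the step you flag as the ``genuine technical obstacle'' goes through. Your argument therefore yields the sharper bound $G_{A^{(M)}(\ve)}(\ve)\approx G$, i.e.\ $O(\ve^{M+1})$, not merely $\sim$. The alternative you sketch at the end---pushing the Lipschitz estimate onto the polynomial partial sums $G_A^{(\le M)}(\ve)$ and absorbing the remainder via the locally uniform bound of Lemma~\ref{lem:bareSeries}---also works and stays closer to what the paper has already established.
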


\begin{proof}
See appendix \ref{app:Gsim}.
\end{proof}

\vspace{2mm}

\begin{lem}
\label{lem:SigmaSim}
$\sigma_{A^{(M)}(\ve)}(\ve) \sim \Sigma^{(\le M)}_{G}(\ve)$.
\end{lem}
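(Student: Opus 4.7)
The plan is to reduce Lemma~\ref{lem:SigmaSim} directly to Lemma~\ref{lem:Gsim} via a purely algebraic identity coming from the Dyson equation, followed by a standard perturbation estimate for matrix inversion. All of the genuine analytic work is already packaged inside Lemma~\ref{lem:Gsim}; the present lemma should require essentially only bookkeeping.

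Concretely, first I would apply the Dyson equation to the pair $(A^{(M)}(\ve), \ve U)$, which by \eqref{eqn:boldProp} gives
\[
\sigma_{A^{(M)}(\ve)}(\ve) = A^{(M)}(\ve) - G_{A^{(M)}(\ve)}(\ve)^{-1}.
\]
On the other hand, the definition $A^{(M)}(\ve) = G^{-1} + \Sigma_G^{(\le M)}(\ve)$ from Lemma~\ref{lem:SEpartialSum} rearranges to
\[
\Sigma_G^{(\le M)}(\ve) = A^{(M)}(\ve) - G^{-1}.
\]
Subtracting these two identities, the $A^{(M)}(\ve)$ terms cancel and one is left with the clean relation
\[
\Sigma_G^{(\le M)}(\ve) - \sigma_{A^{(M)}(\ve)}(\ve) = G_{A^{(M)}(\ve)}(\ve)^{-1} - G^{-1}.
\]

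To control the right-hand side I would invoke the standard resolvent-type identity
\[
G_{A^{(M)}(\ve)}(\ve)^{-1} - G^{-1} = G_{A^{(M)}(\ve)}(\ve)^{-1}\bigl(G - G_{A^{(M)}(\ve)}(\ve)\bigr) G^{-1}.
\]
Since $G \in \pd$ is fixed with $\lambda_{\min}(G) > 0$, and since Lemma~\ref{lem:Gsim} gives in particular $G_{A^{(M)}(\ve)}(\ve) \to G$ as $\ve \to 0^+$, the matrix $G_{A^{(M)}(\ve)}(\ve)$ lies in a compact subset of $\pd$ for all sufficiently small $\ve$. Thus both $\|G_{A^{(M)}(\ve)}(\ve)^{-1}\|$ and $\|G^{-1}\|$ are uniformly bounded for small $\ve$, so the right-hand side of the resolvent identity is bounded, up to a uniform constant, by $\|G - G_{A^{(M)}(\ve)}(\ve)\|$, which is $\sim 0$ by Lemma~\ref{lem:Gsim}. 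This yields $\sigma_{A^{(M)}(\ve)}(\ve) - \Sigma_G^{(\le M)}(\ve) \sim 0$, as desired. No step beyond Lemma~\ref{lem:Gsim} presents a genuine obstacle; that lemma is where the work happens.
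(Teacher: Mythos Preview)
Your proposal is correct and follows essentially the same route as the paper: both proofs subtract the two Dyson-type identities to reduce the claim to $G_{A^{(M)}(\ve)}(\ve)^{-1} - G^{-1} \sim 0$, then invoke Lemma~\ref{lem:Gsim}. You are simply more explicit than the paper about the passage from $G_{A^{(M)}(\ve)}(\ve) \sim G$ to the corresponding statement for inverses, spelling out the resolvent identity and the uniform bound on the inverse norms; the paper leaves this implication implicit.
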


\begin{proof}
Based on Eqs.~\eqref{eqn:boldProp} 
and \eqref{eqn:boldProp2}, we want to show that $ \sigma[A^{(M)}(\ve), U_\ve^{(M)}] \sim  \sigma[A^{(M)}(\ve), \ve U]$. 
We have already shown that $ G = G[A^{(M)}(\ve), U_\ve^{(M)}] \sim  G[A^{(M)}(\ve), \ve U]$, from which it follows that 
\[
A^{(M)}(\ve) - (G[A^{(M)}(\ve), U_\ve^{(M)}])^{-1} \sim  A^{(M)}(\ve) - (G[A^{(M)}(\ve), \ve U])^{-1},
\]
which is exactly what we want to show.
\end{proof}

\vspace{2mm}

Then we can use $\sigma_{A^{(M)}(\ve)}(\ve)$ as a stepping stone 
to relate $\Sigma^{(\leq M)}_G (\ve)$ with the bare diagrammatic expansion 
for the self-energy via the following:

\vspace{2mm}

\begin{lem}
\label{lem:SigmaSim2}
$\sigma_{A^{(M)}(\ve)}(\ve) \approx 
\sigma_{A^{(M)}(\ve)}^{(\leq M)}(\ve)$
\end{lem}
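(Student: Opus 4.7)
The goal is to show that the full bare self-energy $\sigma_{A^{(M)}(\ve)}(\ve)$, evaluated at the quadratic form $A^{(M)}(\ve)$ that depends on $\ve$, agrees with its $M$-th order bare partial sum $\sigma_{A^{(M)}(\ve)}^{(\le M)}(\ve)$ up to $O(\ve^{M+1})$. The plan is to reduce this to a direct application of the uniform remainder bound supplied by Lemma~\ref{lem:bareSeries}, exploiting the fact that $A^{(M)}(\ve)$ is a small perturbation of the fixed positive-definite matrix $G^{-1}$.

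First, observe from the identity $A^{(M)}(\ve) = G^{-1} + \Sigma^{(\le M)}_G(\ve)$ of Lemma~\ref{lem:SEpartialSum} and from the definition $\Sigma^{(\le M)}_G(\ve) = \sum_{k=1}^{M} \Sigma^{(k)}_G\,\ve^k$ that $A^{(M)}(\ve) = G^{-1} + O(\ve)$ as $\ve \to 0^{+}$. Since $G \in \pd$, we have $G^{-1} \in \pd$, and because $\pd$ is open, there exists $\ve_0 > 0$ such that $A^{(M)}(\ve) \in \pd$ for all $\ve \in [0,\ve_0]$, with $A^{(M)}(\ve) \to G^{-1}$ as $\ve \to 0^{+}$.

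Now invoke Lemma~\ref{lem:bareSeries} applied at the point $G^{-1} \in \pd$: there exist a neighborhood $\mathcal{N}$ of $G^{-1}$ in $\pd$ and constants $C,\tau > 0$ such that
\[
\bigl|\sigma_{A}(\ve) - \sigma_A^{(\le M)}(\ve)\bigr| \,\le\, C\,\ve^{M+1}
\]
uniformly in $A \in \mathcal{N}$ and $\ve \in [0,\tau]$. By the continuity established above, we may shrink $\ve_0$ so that $A^{(M)}(\ve) \in \mathcal{N}$ for all $\ve \in [0,\min(\ve_0,\tau)]$. Substituting $A = A^{(M)}(\ve)$ into the uniform bound yields
\[
\bigl|\sigma_{A^{(M)}(\ve)}(\ve) - \sigma_{A^{(M)}(\ve)}^{(\le M)}(\ve)\bigr| \,\le\, C\,\ve^{M+1},
\]
which is precisely the statement $\sigma_{A^{(M)}(\ve)}(\ve) \approx \sigma_{A^{(M)}(\ve)}^{(\le M)}(\ve)$.

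The only mild subtlety, and the reason Lemma~\ref{lem:bareSeries} was formulated with a \emph{uniform} remainder bound over a neighborhood of $A$, is that one cannot simply fix the base point of the bare expansion while letting $\ve$ tend to zero: here the base point $A^{(M)}(\ve)$ itself varies with $\ve$. The uniformity of $C$ over $\mathcal{N}$ is exactly what renders the proof a one-line application once the convergence $A^{(M)}(\ve) \to G^{-1}$ has been noted, so no further work is required.
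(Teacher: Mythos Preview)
Your proof is correct and follows exactly the same approach as the paper: both argue that $A^{(M)}(\ve) = G^{-1} + O(\ve)$ and then invoke the locally uniform truncation bound from Lemma~\ref{lem:bareSeries}. Your version simply spells out the details that the paper compresses into a single sentence.
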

\begin{proof}
Since $A^{(M)}(\ve) = G^{-1} + O(\ve)$, the result 
follows from Lemma \ref{lem:bareSeries} (in particular, 
the locally uniform bound on truncation error of the bare self-energy series).
\end{proof}

\vspace{2mm}

We can prove a similar fact (which will be useful later on) regarding the bare series for the interacting Green's function:

\vspace{2mm}

\begin{lem}
\label{lem:Gsim2}
$G_{A^{(M)}(\ve)} (\ve) \approx G_{A^{(M)}(\ve)}^{(\leq M)} (\ve)$.
\end{lem}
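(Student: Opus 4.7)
The plan is to mimic the proof of Lemma \ref{lem:SigmaSim2} by appealing to the locally uniform error bound provided by Lemma \ref{lem:bareSeries}, applied this time to the interacting Green's function $G_A(\ve)$ rather than the bare self-energy $\sigma_A(\ve)$.

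The starting point is the formula $A^{(M)}(\ve) = G^{-1} + \Sigma_G^{(\leq M)}(\ve)$ from Lemma \ref{lem:SEpartialSum}. Since $\Sigma_G^{(\leq M)}(\ve) = \sum_{k=1}^M \Sigma_G^{(k)} \ve^k$ is a polynomial in $\ve$ with no constant term, we have $\Sigma_G^{(\leq M)}(\ve) = O(\ve)$ as $\ve \to 0^+$, and therefore $A^{(M)}(\ve) \to G^{-1}$ as $\ve \to 0^+$. In particular, since $G \in \pd$, we have $G^{-1} \in \pd$, and continuity implies that $A^{(M)}(\ve) \in \pd$ for all sufficiently small $\ve > 0$.

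Next, I would invoke Lemma \ref{lem:bareSeries} at the point $A = G^{-1} \in \pd$, which produces a neighborhood $\mathcal{N}$ of $G^{-1}$ in $\pd$ together with constants $C, \tau > 0$ such that
\[
\left\vert G_A(\ve) - G^{(\leq M)}_A(\ve) \right\vert \leq C \ve^{M+1}
\]
for all $A \in \mathcal{N}$ and all $\ve \in [0,\tau]$. By the previous paragraph, there exists $\tau' \in (0,\tau]$ such that $A^{(M)}(\ve) \in \mathcal{N}$ whenever $\ve \in [0,\tau']$. Substituting $A = A^{(M)}(\ve)$ into the uniform bound then yields
\[
\left\vert G_{A^{(M)}(\ve)}(\ve) - G^{(\leq M)}_{A^{(M)}(\ve)}(\ve) \right\vert \leq C \ve^{M+1}
\]
for all $\ve \in [0,\tau']$, which is precisely the claim $G_{A^{(M)}(\ve)}(\ve) \approx G^{(\leq M)}_{A^{(M)}(\ve)}(\ve)$.

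There is no real obstacle here: the entire content is the \emph{locally uniform} nature of the truncation error in Lemma \ref{lem:bareSeries}, which is exactly what allows one to substitute the $\ve$-dependent base point $A^{(M)}(\ve)$ while retaining a bound of order $\ve^{M+1}$. If one only had pointwise asymptoticity of the bare series at each fixed $A$, the substitution $A \mapsto A^{(M)}(\ve)$ would not be legitimate, which is why the uniform formulation was built into the statement of Lemma \ref{lem:bareSeries} in the first place.
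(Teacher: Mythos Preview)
Your proof is correct and takes essentially the same approach as the paper, which simply notes that $A^{(M)}(\ve) = G^{-1} + O(\ve)$ and then appeals to the locally uniform truncation-error bound in Lemma~\ref{lem:bareSeries} for the bare Green's function series. Your version is just a more detailed write-up of the same one-line argument.
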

\begin{proof}
Since $A^{(M)}(\ve) = G^{-1} + O(\ve)$, the result 
follows from Lemma \ref{lem:bareSeries} (in particular, 
the locally uniform bound on truncation error of the bare series for the interacting Green's function).
\end{proof}

\vspace{2mm}

From Lemmas \ref{lem:Gsim} and \ref{lem:Gsim2} 
we immediately obtain:

\vspace{2mm}

\begin{lem}
\label{lem:Gsim3}
$G_{A^{(M)}(\ve)}^{(\leq M)} (\ve) \sim G$.
\end{lem}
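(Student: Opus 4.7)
The proof is essentially an immediate triangle-inequality argument combining the two preceding lemmas, so the proposal is brief.

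The plan is to write
\[
G_{A^{(M)}(\ve)}^{(\leq M)} (\ve) - G
= \bigl(G_{A^{(M)}(\ve)}^{(\leq M)} (\ve) - G_{A^{(M)}(\ve)}(\ve)\bigr)
+ \bigl(G_{A^{(M)}(\ve)}(\ve) - G\bigr)
\]
and bound each of the two summands separately. By Lemma \ref{lem:Gsim2}, the first parenthesized difference is $O(\ve^{M+1})$, i.e., it is controlled in the stronger sense `$\approx$'. By Lemma \ref{lem:Gsim}, the second parenthesized difference is $O(\ve^{M+\delta})$ for every $\delta \in (0,1)$, i.e., it is controlled in the weaker sense `$\sim$'.

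Since `$\approx$' is strictly stronger than `$\sim$' (an $O(\ve^{M+1})$ error is in particular $O(\ve^{M+\delta})$ for every $\delta \in (0,1)$), the sum of the two errors is itself $O(\ve^{M+\delta})$ for every $\delta \in (0,1)$. This is exactly the statement $G_{A^{(M)}(\ve)}^{(\leq M)} (\ve) \sim G$, completing the proof.

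There is no real obstacle here; the only conceptual point worth flagging is the (entirely routine) observation that the relations `$\approx$' and `$\sim$' are both preserved under addition, and that the weaker of the two dominates when they are combined, so the resulting bound is of type `$\sim$'. No additional estimates on $A^{(M)}(\ve)$ or on the structure of the bare series are needed at this stage, since all the analytic work has already been absorbed into Lemmas \ref{lem:Gsim} and \ref{lem:Gsim2}.
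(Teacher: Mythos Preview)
The proposal is correct and matches the paper's approach exactly: the paper simply states that the lemma follows ``immediately'' from Lemmas \ref{lem:Gsim} and \ref{lem:Gsim2}, and your triangle-inequality decomposition is precisely the implicit argument.
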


\vspace{2mm}

Finally, we are ready to state and prove the last leg 
of the schematic diagram (Figure \ref{fig:schematic}):

\vspace{2mm}

\begin{lem}
\label{lem:endgame}
$\mathbf{S}^{(\le M)}_G \sim \sigma_{A^{(M)}(\ve)}^{(\leq M)}(\ve)$.
\end{lem}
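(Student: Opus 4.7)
The plan is to follow the triangular path indicated by the schematic in Figure~\ref{fig:schematic}: first pass from $\sigma^{(\le M)}_{A^{(M)}(\ve)}(\ve)$ to the intermediate quantity $\mathbf{S}^{(\le M)}_{G^{(\le M)}_{A^{(M)}(\ve)}(\ve)}(\ve)$ using Theorem~\ref{thm:p1Main} (the main combinatorial result of Part~I, which ties the bare self-energy expansion to the bold skeleton expansion evaluated at the bare Green's function expansion), and then swap $G^{(\le M)}_{A^{(M)}(\ve)}(\ve)$ for $G$ using Lemma~\ref{lem:Gsim3} together with the fact that each coefficient $\mathbf{S}^{(k)}_X$ is polynomial, hence continuous, in $X$.

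For the first step, the essential observation is that Theorem~\ref{thm:p1Main} delivers an \emph{exact} identity of polynomials in the entries of $A^{-1}$ and the scalar $\tilde\ve$, valid modulo monomials whose $\tilde\ve$-degree is $\ge M+1$. Because it is a genuine polynomial identity (and not merely an asymptotic statement at a fixed $A$), I may substitute $A = A^{(M)}(\ve)$ and $\tilde\ve = \ve$, with the same $\ve$ appearing on both sides. Lemma~\ref{lem:Aeps} ensures $(A^{(M)}(\ve))^{-1} \to G$ as $\ve \to 0^+$, so this matrix is uniformly bounded near $\ve = 0$; consequently every monomial of the neglected remainder contributes $O(\ve^{M+1})$, yielding $\sigma^{(\le M)}_{A^{(M)}(\ve)}(\ve) \approx \mathbf{S}^{(\le M)}_{G^{(\le M)}_{A^{(M)}(\ve)}(\ve)}(\ve)$.

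For the second step, Lemma~\ref{lem:Gsim3} gives $G^{(\le M)}_{A^{(M)}(\ve)}(\ve) - G = O(\ve^{M+1-p})$ for every $p \in (0,1)$. Since $\mathbf{S}^{(k)}_X$ is a homogeneous polynomial of degree $2k-1$ in the entries of $X$, it is locally Lipschitz on a bounded neighborhood of $G$, so
\[
\mathbf{S}^{(k)}_{G^{(\le M)}_{A^{(M)}(\ve)}(\ve)} - \mathbf{S}^{(k)}_G = O(\ve^{M+1-p}).
\]
The extra $\ve^k$ prefactor (with $k \ge 1$) then pushes each summand to $O(\ve^{M+2-p}) = O(\ve^{M+1+\delta})$ with $\delta = 1-p > 0$, so the total error in trading $G^{(\le M)}_{A^{(M)}(\ve)}(\ve)$ for $G$ is $\sim 0$. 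Chaining with the first step gives the lemma.

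The main obstacle is conceptual rather than computational: one must be willing to substitute the $\ve$-dependent matrix $A^{(M)}(\ve)$ into the Part~I identity of Theorem~\ref{thm:p1Main}. This is legitimate precisely because that identity is a formal polynomial relation in the coordinates $(A^{-1}, \ve)$ (holding modulo an ideal generated by $\ve^{M+1}$), so it transports along any substitution of admissible values, including ones that themselves depend on $\ve$. Once this point is appreciated, the rest of the argument is routine order-of-magnitude bookkeeping in the $\approx$ / $\sim$ conventions introduced earlier in the section.
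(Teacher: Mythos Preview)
Your proposal is correct and follows essentially the same two-step strategy as the paper: substitute $A^{(M)}(\ve)$ into the polynomial identity of Theorem~\ref{thm:p1Main} (bounding the neglected high-$\ve$-degree remainder by boundedness of $[A^{(M)}(\ve)]^{-1}$), then swap $G^{(\le M)}_{A^{(M)}(\ve)}(\ve)$ for $G$ via Lemma~\ref{lem:Gsim3} and local Lipschitz continuity of the polynomials $\mathbf{S}^{(k)}_X$. One cosmetic remark: you cite Lemma~\ref{lem:Aeps} for $(A^{(M)}(\ve))^{-1}\to G$, but this actually follows more directly from the explicit formula $A^{(M)}(\ve)=G^{-1}+\Sigma_G^{(\le M)}(\ve)=G^{-1}+O(\ve)$ in Lemma~\ref{lem:SEpartialSum}; Lemma~\ref{lem:Aeps} concerns the distinct object $A_G(\ve)=A[G,\ve U]$.
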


\begin{proof}
Consider $\mathbf{S}_{G_{A}^{(\le M)}}^{(\le M)}$ as a polynomial in
$(A^{-1},\ve)$, 
and let $P(A^{-1},\ve)$ be the contribution of terms in which $\ve$ appears with degree at 
least $M+1$. By Theorem \ref{thm:p1Main} we have the equality 
\[
\mathbf{S}_{G_{A}^{(\le M)}(\ve)}^{(\le M)}(\ve) - P(A^{-1},\ve) 
= \sigma_A^{(\leq M)} (\ve)
\]
of polynomials in $(A^{-1},\ve)$.  Then substituting
$A\leftarrow A^{(M)}(\ve)$ we obtain
\begin{equation}
\label{eqn:polynomial}
\mathbf{S}_{G_{A^{(M)}(\ve)}^{(\le M)}(\ve)}^{(\le M)}(\ve) - P([A^{(M)}(\ve)]^{-1},\ve) 
= \sigma_{A^{(M)}(\ve)}^{(\leq M)} (\ve).
\end{equation}
Although the first term on the left-hand side of 
Eq.~\eqref{eqn:polynomial} looks quite intimidating, we 
can recognize it as $\mathbf{S}_{ \mathbf{G}(\ve)  }^{(\leq M)}(\ve)$, where  
\[
\mathbf{G}(\ve) := G_{A^{(M)}(\ve)}^{(\le M)}(\ve) \sim G
\]
is the expression from Lemma \ref{lem:Gsim3}.
Since 
$\mathbf{S}_{ [\,\cdot\,]  }^{(\leq M)}(\ve) = 
\sum_{k=1}^M \mathbf{S}_{ [\,\cdot\,]  }^{(k)} \ve^k
$, where each $\mathbf{S}_{ [\,\cdot\,]  }^{(k)}$ is a polynomial (homogeneous of positive degree) in the subscript 
slot, it follows that 
\[
\mathbf{S}_{ \mathbf{G}(\ve)  }^{(\leq M)}(\ve) \sim \mathbf{S}_{ G }^{(\leq M)}(\ve).
\]
Then from Eq.~\eqref{eqn:polynomial} we obtain 
\[
\mathbf{S}_{G}^{(\le M)}(\ve) - P([A^{(M)}(\ve)]^{-1},\ve) 
\sim \sigma_{A^{(M)}(\ve)}^{(\leq M)} (\ve).
\]
But since $[A^{(M)}(\ve)]^{-1} = G + O(\ve)$ and since $P$ only includes 
terms of degree at least $M+1$ in the second slot, it 
follows that $P([A^{(M)}(\ve)]^{-1},\ve) \approx 0$, and 
the desired result follows.
\end{proof}

Taken together (as indicated in Figure \ref{fig:schematic}), 
Lemmas \ref{lem:SigmaSim}, \ref{lem:SigmaSim2}, and 
\ref{lem:endgame} 
imply that
$\Sigma_{G}^{(\le M)}(\ve) \sim \mathbf{S}^{(\le M)}_G (\ve)$
as desired, and the proof of Theorem \ref{thm:boldDiag} is complete.

\subsection{Caveat concerning truncation of the bold diagrammatic expansion}\label{sec:failbold}
Although the LW and self-energy functionals are defined even for $G$ such that
the corresponding quadratic form $A = A[G]$ is indefinite (and hence there is no physical 
bare non-interacting Green's function), Green's function methods 
(as discussed in section 4.7 of Part I) based on truncation of the bold diagrammatic 
expansion can fail dramatically in the case of indefinite $A$. 
One can encounter divergent behavior as the interaction becomes small, or the 
Green's function method may fail to admit a solution.
Both failure modes can demonstrated by simple one-dimensional
examples. The relevance of these to the solution of the quantum many-body problem is 
at this point unclear.

Consider
the one-dimensional example of
\begin{equation}
  Z = \int_{\RR} e^{\frac12  x^2 - \frac18 \lambda x^4}\ud x, 
  \label{eqn:partition1Dneg}
\end{equation}
where $a = -1$.
The corresponding non-interacting Green's function is $G^{0}=-1<0$ and
hence is not even a physical Green's function. 

Nonetheless with
$\lambda>0$ the true Green's function is still well-defined via
\[
  G = \frac{1}{Z} \int_{\RR} x^2 e^{\frac12  x^2 - \frac18 \lambda x^4}\ud x.
  \label{}
\]

We now compute $G$ via the Hartree-Fock method (cf. section 4.7 of Part I), 
i.e., we approximate 
the self-energy as
\[
  \Sigma^{(1)} = -\frac12 \lambda G - \lambda G = -\frac32 \lambda G. 
  \label{}
\]
Hence the self-consistent solution $G^{(1)}$ of the Dyson equation solves 
\[
  \frac{1}{G^{(1)}} = -1 + \frac32 \lambda G^{(1)}.
  \label{}
\]
There is only one positive (physical) solution to this equation, namely
\[
  G^{(1)} = \frac{1 + \sqrt{1+6\lambda}}{3\lambda}.
  \label{eqn:dyson1d1}
\]

In the spirit of perturbation theory, one might hope that
$G^{(1)}$ is a good approximation to $G$ at least when $\lambda\to
0$.
However we see just the opposite. 
This is perhaps not surprising because the exact 
Green's function $G$ itself blows up in this limit.

The failure of the method as $\lambda\to 0$
can be understood more precisely as follows. Rewrite the
Hamiltonian from~\eqref{eqn:partition1Dneg} as
\[
  \frac18 \lambda
  \left(x^2-\frac{2}{\lambda}\right)^2 - \frac{1}{2\lambda}.
  \label{}
\]
The corresponding Gibbs measure (which is unaffected by the additive 
constant) then concentrates about two peaks at 
$x=\pm \sqrt{\frac{2}{\lambda}}$ as $\lambda \rightarrow 0$.
Hence we expect 
\[
  G \sim 2 \lambda^{-1}.
\]
We note that, in contrast with the statement of 
Lemma~\ref{lem:bareSeries}, the limit $\lim_{\lambda\to 0+} G(\lambda)$
does not exist. 
According to Eq.~\eqref{eqn:dyson1d1} 
\[
G^{(1)} \sim \frac{2}{3} \lambda^{-1}.
\]
We find that as $\lambda\to 0+$, $G$ and its first order approximation
$G^{(1)}$ do not agree.

If we include the second-order terms of the bold diagrammatic expansion 
\begin{equation}
  \Sigma^{(2)} = \frac12 \lambda^2 G^3 + \lambda^2 G^3 = \frac32
  \lambda^2 G^3. 
  \label{}
\end{equation}
Then the self-consistent solution $G^{(2)}$ of the Dyson equation solves
\begin{equation*}
  \frac{1}{G^{(2)}} = -1 + \frac32 \lambda G^{(2)} - \frac32 \lambda^2
  \left(G^{(2)}\right)^3.
  \label{eqn:dyson1d2}
\end{equation*}
This yields a quartic equation in the scalar $G^{(2)}$, which in fact has no solution for physical $G^{(2)}$, i.e., $G^{(2)} > 0$.

To see this, first ease the notation by substituting $x \leftarrow G^{(2)}$, so we are interested 
in the solutions $x>0$ of 
\[
\frac32 \left[ (\lambda^{1/2}x)^4 - (\lambda^{1/2} x)^2 \right] + x + 1 = 0. 
\]
But $y^4 - y^2 \geq -\frac14$ for all $y$, 
so the first term is at least $-\frac38$, which evidently implies that no solutions exist for $x>0$.

\section{Proof of the continuous extension of the LW functional}
\label{sec:boundary}
In section \ref{sec:continuousState} we motivated the
continuous extension of the LW functional to the boundary of $\pd$ and 
stated this result in two equivalent forms (Theorems \ref{thm:contUpTo} and 
\ref{thm:contUpToSpecialCase}).
In this section we prove the continuous extension property (for interactions of
 strong growth). We also develop  
the counterexample promised earlier, an interaction of weak but not strong growth 
for which the continuous extension property fails.

The section is outlined as follows.
In section \ref{subsec:setup}, 
we describe some preliminary reductions in the proof of the continuous 
extension property, after which the proof can be divided into two parts: 
lower-bounding the limit inferior of the LW functional as 
the argument approaches the boundary and upper-bounding the limit supremum. In section \ref{subsec:lowerbound}, we prove the lower bound, and in section \ref{subsec:upperbound} we prove the upper bound. In section 
\ref{subsec:alternateView} we provide an alternate view on the continuous extension property from the 
Legendre dual side, and in section \ref{subsec:counterexample} we use this perspective to exhibit the 
aforementioned counterexample to the continuous extension property, which satisfies the weak growth condition but not the strong one.

\subsection{Proof setup}
\label{subsec:setup}
We are going to prove Theorem \ref{thm:contUpToSpecialCase},
which as we have remarked suffices to prove Theorem \ref{thm:contUpTo}
by changing coordinates via Proposition
\ref{prop:LWtransformation}.

Suppose $G\in \psd$ is of the form
\[
G=\left(\begin{array}{cc}
G_{p} & 0\\
0 & 0
\end{array}\right),
\]
where $G_{p}\in\mathcal{S}_{++}^{p}$, and suppose that $G^{(j)} \in \pd$ with $G^{(j)}\ra G$ as $j\ra\infty$. For each $j$, diagonalize $G^{(j)}=\sum_{i=1}^{N}\lambda_{i}^{(j)} v_{i}^{(j)} \left( v_{i}^{(j)} \right) ^{T}$,
where the $v_{i}^{(j)}$ are orthonormal, $\lambda_{i}^{(j)}>0$ for $i=1,\ldots,N$.

We want to show that 
\[
\Phi_{n}[G^{(j)},U] \ra \Phi_{p}[G_p,U(\,\cdot\,,0)].
\]
It suffices to show that every subsequence has a convergent subsequence
with its limit being $\Phi_{p}[G_p,U(\,\cdot\,,0)]$. The $G^{(j)}$ are
convergent, hence bounded (in the $\Vert \cdot \Vert_2$ norm), so the
$\lambda_i^{(j)}$ are bounded. Moreover, the $v_i^{(j)}$ are all of unit
length, hence bounded, so by passing to a subsequence if necessary we
can assume that, for each $i$, there exist $\lambda_i, v_i$ such that
$\lambda_i^{(j)} \ra \lambda_i$ and $v_i^{(j)} \ra v_i$ as $j\ra
\infty$. It follows that the $v_i$ are orthonormal and that $G$ can be
diagonalized as $G=\sum_{i=1}^{N}\lambda_{i}v_{i}v_{i}^{T}$. Since $G_p$
is positive definite, we must have $\lambda_{i}>0$ for $i=1,\ldots,p$,
and moreover $\lambda_{i}=0$ for $i=p+1,\ldots,N$. Evidently, the
eigenvectors of $G$ with strictly positive eigenvalues must be precisely
the eigenvectors of $G_p$, concatenated with $N-p$ zero entries, i.e.,
for $i=1,\ldots,p$, $v_i$ must be of the form $(*,0)$. By orthogonality,
for $i=p+1,\ldots, n$, $v_i$ must be of the form $(0,*)$.

For convenience we also establish the following notation:  
\[V_{G}:= \mathrm{span}\{v_1 ,\ldots, v_p\},\  V_{G^{(j)}}:= \mathrm{span}\{v_1^{(j)} ,\ldots, v_p^{(j)}\}.
\]

Now the proof consists of proving two bounds: a lower bound
\[
\liminf_{j\ra\infty}\Phi_{N}[G^{(j)},U] \geq  \Phi_{p}[G_{p},U(\cdot,0)]
 \]
 and an upper bound
 \[
\limsup_{j\ra\infty}\Phi_{N}[G^{(j)},U] \leq  \Phi_{p}[G_{p},U(\cdot,0)].
 \]
 These bounds will be proved in the next two sections, i.e., sections \ref{subsec:lowerbound} and \ref{subsec:upperbound}, respectively.

\subsection{Lower bound}
\label{subsec:lowerbound}

We want to establish a lower bound on $\Phi_{N}[G_{j},U]$
via our expression for $\mathcal{F}_{N}$ as a supremum: 
\begin{equation}
\label{eq:FNsup}
\mathcal{F}_{N}[G^{(j)},U]=\sup_{\mu\in \mathcal{G}_{N}^{-1}(G^{(j)})}\left[H(\mu)-\int U\,\ud\mu\right].
\end{equation}
This strategy requires us to construct measures $\mu^{(j)} \in
\mathcal{G}_{N}^{-1}(G^{(j)})$. Intuitively, what one hopes to do (though this 
strategy will require some modification) is the
following: consider the measure $\alpha$ on $\R^p$ that attains the
supremum in the analogous expression for $\mathcal{F}_p
[G_p,U(\,\cdot\,,0)]$, identify this measure with a measure on $V_G
\simeq \R^p$, rotate and scale appropriately to obtain a measure
$\alpha^{(j)}$ supported on $V_{G^{(j)}}$ with the correct second-order
moments with respect to this subspace, and finally take the direct sum
with an appropriate Gaussian measure $\beta^{(j)}$ on
$V_{G^{(j)}}^\perp$. Unfortunately, due to difficulties of analysis, it is not clear 
how to then prove
the desired limit as $j\ra \infty$.

However, the analysis of this limit would be 
feasible if the $\mu^{(j)}$
had compact support (which they evidently do not).
Then our approach is to carry out a construction that preserves the spirit of the `ideal' construction just described but instead works with $\mu^{(j)}$ of (uniform) compact support.

For convenience we let $\mathcal{M}_c \subset
\mathcal{M}_2$ denote the subset of measures of compact support.
The acceptability of working with measures of compact support can be motivated by the following lemma, which
will be used below. 
(In the statement we temporarily suppress dependence on the interaction and the
dimension from the notation.)
\begin{lem}
\label{lem:Ftech}
For all $G\in \symm$,
\[
\mathcal{F} [G] = \sup_{\mu \in \mathcal{G}^{-1}(G)\cap \mathcal{M}_c} \left[H(\mu) - \int U\,\ud\mu  \right].
\]
\end{lem}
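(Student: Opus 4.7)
Plan: The inequality $\mathcal{F}[G] \ge \sup_{\mu \in \mathcal{G}^{-1}(G) \cap \mathcal{M}_c}[\,\cdot\,]$ is immediate from $\mathcal{M}_c \subset \mathcal{M}_2$. For the reverse inequality, if $G \notin \pd$ then Lemma~\ref{lem:Ffinite} gives $\mathcal{F}[G] = -\infty$, and the right-hand side is also $-\infty$ since any $\nu$ with singular $\mathcal{G}(\nu)=G$ must concentrate on a proper affine subspace and hence have $H(\nu) = -\infty$; so I reduce to $G \in \pd$. Fix $\mu \in \mathcal{G}^{-1}(G)$ with $\Psi(\mu) := H(\mu) - \int U\,\ud\mu > -\infty$, which is the only nontrivial case. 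Then $\mu$ has a density $\rho$, $H(\mu)$ is finite by Lemma~\ref{lem:entropyMoment}, and $\int U\,\ud\mu$ is finite (bounded below by $-C_U(1 + \Tr G)$ via weak growth, above by $H(\mu) - \Psi(\mu)$). The goal is to construct $\nu_R \in \mathcal{G}^{-1}(G) \cap \mathcal{M}_c$ with $\liminf_R \Psi(\nu_R) \ge \Psi(\mu)$.

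I would construct $\nu_R$ in two steps: truncate, then rescale to restore the second moment exactly. Let $\tilde\mu_R := \mu|_{B_R}/\mu(B_R)$, with density $\rho_R := \rho \mathbf{1}_{B_R}/\mu(B_R)$. By dominated convergence $G_R := \mathcal{G}(\tilde\mu_R) \to G$, so for $R$ large the symmetric positive definite matrix $T_R := G^{1/2} G_R^{-1/2}$ is well-defined, satisfies $T_R G_R T_R^T = G$, and tends to $I$. Take $\nu_R := T_R \# \tilde\mu_R$, which is supported in $T_R(B_R)$ (compact) with $\mathcal{G}(\nu_R) = G$ exactly. The change-of-variables formula for linear pushforwards gives $H(\nu_R) = H(\tilde\mu_R) + \log|\det T_R|$, and a direct computation yields $H(\tilde\mu_R) = -\mu(B_R)^{-1} \int_{B_R} \rho \log \rho\,\ud x + \log \mu(B_R)$. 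Finiteness of $H(\mu)$ implies $\rho \log \rho \in L^1$ (Lemma~\ref{lem:entropyMoment} bounds its positive part and $H(\mu)>-\infty$ controls the negative part), so dominated convergence gives $H(\nu_R) \to H(\mu)$.

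The core step is showing $\limsup_R \int U\,\ud\nu_R \le \int U\,\ud\mu$. The density of $\nu_R$ is $g_R(z) = \rho(T_R^{-1} z)\mathbf{1}_{B_R}(T_R^{-1}z)/(\mu(B_R)|\det T_R|)$; by continuity of linear dilations on $L^1(\Rn)$ we have $g_R \to \rho$ in $L^1$, and passing to a subsequence I also obtain a.e.\ convergence. I then apply Fatou's lemma to the nonnegative sequence $f_R(z) := [-U(z) + C_U(1 + \|z\|^2)]\, g_R(z)$, nonnegative by the weak growth condition. Since the bracket does not depend on $R$, $\lim_R f_R = [-U + C_U(1 + \|\cdot\|^2)]\rho$ a.e.\ along the subsequence, and Fatou yields $-\int U\,\ud\mu + C_U(1 + \Tr G) \le \liminf_R \bigl[-\int U\,\ud\nu_R + C_U(1 + \Tr \mathcal{G}(\nu_R))\bigr]$. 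Because the rescaling enforces $\mathcal{G}(\nu_R) = G$ exactly, the $\Tr$ terms cancel and I obtain $\limsup_R \int U\,\ud\nu_R \le \int U\,\ud\mu$ along the subsequence; a standard subsequence argument transfers this to the original sequence. Combined with $H(\nu_R) \to H(\mu)$, this yields $\liminf_R \Psi(\nu_R) \ge \Psi(\mu)$.

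The main obstacle is the interaction integral: since $U$ is only measurable with weak growth, dominated convergence is unavailable and one must use Fatou on the shifted nonnegative integrand. The weak growth condition is precisely what makes this shift possible, and the exact (not merely asymptotic) second-moment correction in the construction is what allows the shifted $\Tr$ terms to cancel rather than to contribute an error.
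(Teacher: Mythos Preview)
Your argument has a genuine gap at the Fatou step. You claim that $f_R(z) := [-U(z) + C_U(1+\|z\|^2)]\,g_R(z)$ is nonnegative ``by the weak growth condition,'' but Definition~\ref{def:Ugrowth} says $U(x) + C_U(1+\|x\|^2) \ge 0$, i.e.\ $U$ is bounded \emph{below} by a quadratic, not above. The quantity $-U(z)+C_U(1+\|z\|^2)$ has no sign in general (take $U(x)=\|x\|^4$). With the correct nonnegative shift $[U+C_U(1+\|\cdot\|^2)]g_R$, Fatou yields $\int U\,\ud\mu \le \liminf_R \int U\,\ud\nu_R$, which after cancelling the trace terms gives exactly the \emph{opposite} inequality to the one you need. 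So the core step---controlling $\int U\,\ud\nu_R$ from above---is not established, and the linear rescaling $\nu_R = T_R\#\tilde\mu_R$ offers no obvious substitute: since truncation typically lowers second moments, $T_R$ dilates outward, potentially pushing mass to regions where $U$ is large, and without an upper growth bound on $U$ there is no integrable majorant for dominated convergence either. (A smaller issue: $T_R = G^{1/2}G_R^{-1/2}$ is not symmetric unless $G$ and $G_R$ commute, though this does not affect the identity $T_R G_R T_R^{T}=G$.)

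The paper's proof avoids this difficulty by correcting the second moments additively rather than multiplicatively: it forms a convex combination $\nu_R = \tau_R^{-1}\pi_R + (1-\tau_R^{-1})\mu_R$, where $\mu_R$ is your truncation and $\pi_R$ is an auxiliary compactly supported measure with $\mathcal{G}(\pi_R)=G_R'$ chosen so that the mixture has second moment exactly $G$. The point is that $\int U\,\ud\mu_R$ is then handled by \emph{monotone} convergence (after shifting $U$ to be nonnegative via the weak growth condition, $\int_{B_R}(U+C_U(1+\|\cdot\|^2))\rho$ increases to its limit), while the correction term $\tau_R^{-1}\int U\,\ud\pi_R$ vanishes because $\tau_R\to\infty$ and the $\pi_R$ have uniformly bounded support. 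The entropy limit then requires both Lemma~\ref{lem:entropyUSC} (for the upper bound) and concavity of $H$ (for the lower bound). Your entropy argument via the change-of-variables formula is cleaner than the paper's on that side, but the $U$-integral is where the construction matters, and the convex-combination route is what makes the weak growth condition suffice.
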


Now we outline our \emph{actual} construction of the $\mu^{(j)}$. Consider an
\textit{arbitrary} measure $\alpha\in \mathcal{G}_{p}^{-1}(G_{p})$ with
compact support on $\R^{p}\simeq V_{G}$. (We abuse notation slightly by
considering $\alpha$ as a measure on both $\R^{p}$ and $V_{G}$.) The
idea now is to construct a measure in $\mu^{(j)} \in
\mathcal{G}_{N}^{-1}(G^{(j)})$ by rotating $\alpha$ and scaling
appropriately to obtain a measure $\alpha^{(j)}$  supported on
$V_{G^{(j)}}$ and then taking the direct sum with a compactly supported
measure $\beta^{(j)}$ on $V_{G^{(j)}}^\perp$ (the details of which will
be discussed later). In fact the supremum in \eqref{eq:FNsup} will be approximately 
attained by a measure of this form as $j\ra \infty$, i.e., our lower bound will be tight as $j\ra \infty$.

 Accordingly, for the construction of $\alpha^{(j)}$, let $O^{(j)}$ be the orthogonal linear transformation sending $v_i \mapsto v_i^{(j)}$, and let $D^{(j)}$ be the linear transformation with matrix (in the $v_i^{(j)}$ basis) given by
\[
\mathrm{diag}\left( \sqrt{\lambda_{1}^{(j)}/\lambda_{1}},\ldots,\sqrt{\lambda_{p}^{(j)}/\lambda_{p}}, 1,\ldots,1 \right).
\]
Then define $T^{(j)} := D^{(j)} O^{(j)}$ and $\alpha^{(j)} :=  T^{(j)}
\# \alpha$. Note that $T^{(j)} \ra I_n$ as $j \ra \infty$.
Moreover, observe that $\alpha^{(j)}$ is a measure supported on $V_{G^{(j)}}$ with second-order moment matrix 
given by $\mathrm{diag}(\lambda_{1}^{(j)},\ldots,\lambda_{p}^{(j)})$ with
respect to the coordinates on $V_{G^{(j)}}$ induced by the orthonormal
basis $v_{1}^{(j)},\ldots,v_{p}^{(j)}$.

Now we turn to the construction of $\beta^{(j)}$. Let $R>1$ and let $\gamma$ be a measure supported on $[-R,R]$ with $\int x^2 \,d\gamma = 1$. The parameter $R$ will control the size of the support of $\beta^{(j)}$ and will be sent to $+\infty$ at the very end of the proof of the lower bound (after the limit in $j$ has been taken).
Then define
\[
\Lambda^{(j)} :=
\mathrm{diag}\left( \sqrt{\lambda_{p+1}^{(j)}},\ldots,\sqrt{\lambda_{N}^{(j)}} \right),
\]
and define a measure $\beta^{(j)}$ on $\R^{N-p}$ by $\beta^{(j)} := \Lambda^{(j)} \# (\gamma \times \cdots \times \gamma)$. Note that $\Lambda^{(j)}\ra 0$ as $j\ra\infty$. Abusing notation slightly, we will also identify $\beta^{(j)}$ with a measure supported on $V_{G^{(j)}}^{\perp}\simeq\R^{N-p}$ via the identification of the orthonormal basis $v_{p+1}^{(j)},\ldots,v_{N}^{(j)}$ for $V_{G^{(j)}}^\perp$ with the standard basis of $\R^{N-p}$.

Finally, define the product measure $\mu^{(j)}:=\alpha^{(j)}\times\beta^{(j)}$
with respect to the product structure $\Rn=V_{G^{(j)}}\times V_{G^{(j)}}^{\perp}$,
and note that $\mu^{(j)}\in \mathcal{G}_N^{-1}(G^{(j)})$, so by \eqref{eq:FNsup}, 
\begin{eqnarray*}
\mathcal{F}_{N}[G^{(j)},U] & \geq & H(\alpha^{(j)}\times\beta^{(j)})-\int U\,\ud\mu^{(j)}\\
 & = & H(\alpha^{(j)})+H(\beta^{(j)})-\int U\,d \mu^{(j)}\\
 & = & H(\alpha)-\int U\,\ud\mu^{(j)}+\frac{1}{2}\sum_{i=p+1}^{N}\log\lambda_{i}^{(j)}+(N-p)H(\gamma),
\end{eqnarray*}
 where $H(\alpha^{(j)})$ and $H(\beta^{(j)})$ are the entropies
of $\alpha^{(j)}$ and $\beta^{(j)}$ on the probability spaces $V_{G^{(j)}}$
and $V_{G^{(j)}}^{\perp}$, respectively.

Notice that there is a compact set on which $\emph{all}$ of the measures $\mu^{(j)}$ are supported. It is then not difficult to see that $\mu^{(j)}$ converges weakly to the measure $\alpha \times \delta_0$, where the product is with respect to the product structure $\Rn = V_G \times V_G^\perp$ and $\delta_0$ is the Dirac delta measure localized at the origin. By the continuity of $U$ and the uniform boundedness of the supports of $\mu^{(j)}$, this is enough to guarantee that 
\[
\int U\,\ud\mu^{(j)} \ra \int U\,d(\alpha\times\delta_0) = \int U(\cdot,0)\,d\alpha
\]
as $j\ra\infty$.

Next we write the Luttinger-Ward functional in terms of $\mathcal{F}_N$:
\begin{eqnarray*}
\frac{1}{2}\Phi_{N}[G^{(j)},U] & = & \mathcal{\mathcal{F}}_{N}[G^{(j)},U]-\frac{1}{2}\mathrm{Tr}[\log(G^{(j)})]-\frac{N}{2}\log(2\pi e)\\
 & = & \mathcal{\mathcal{F}}_{N}[G^{(j)},U]-\frac{1}{2}\sum_{i=1}^{N}\log\lambda_{i}^{(j)}-\frac{N}{2}\log(2\pi e).
\end{eqnarray*}
 Then combining the preceding observations yields
\begin{eqnarray*}
\liminf_{j\ra\infty}\frac{1}{2}\Phi_{N}[G^{(j)},U] & \geq & \liminf_{j\ra\infty}\left[H(\alpha)-\int U\,\ud\mu^{(j)}-\frac{1}{2}\sum_{i=1}^{p}\log\lambda_{i}^{(j)}-\frac{N}{2}\log(2\pi e)+(N-p)H(\gamma) \right]\\
 & = & H(\alpha)-\int U(\cdot,0)\,d\alpha -\frac{1}{2}\sum_{i=1}^{p}\log\lambda_{i}-\frac{N}{2}\log(2\pi e)+(N-p)H(\gamma)\\
 & = & H(\alpha)-\int U(\cdot,0)\,d\alpha-\frac{1}{2}\mathrm{Tr}\left[\log(G_{p})\right]-\frac{N}{2}\log(2\pi e)+(N-p)H(\gamma).
\end{eqnarray*}

Now for any $\ve > 0$, we can choose $R$ sufficiently large and $\gamma$ supported on $[-R,R]$ such that $H(\gamma) \geq \frac{1}{2} \log(2\pi e) - \ve$. Indeed, note that $\frac{1}{2} \log(2\pi e)$ is the entropy of the standard normal distribution, i.e., the maximal entropy over measures of unit variance. By restricting the normal distribution to $[-R,R]$ for $R$ sufficiently large, we can become arbitrarily close to saturating this bound. Therefore we have that 
\[
\liminf_{j\ra\infty}\frac{1}{2}\Phi_{N}[G^{(j)},U]  \geq 
H(\alpha)-\int U(\cdot,0)\,d\alpha-\frac{1}{2}\mathrm{Tr}\left[\log(G_{p})\right]-\frac{p}{2}\log(2\pi e).
\]
Since $\alpha$ was arbitrary in $\mathcal{G}_{p}^{-1}(G_{p})\cap \mathcal{M}_c$, this establishes
the desired upper bound 
\begin{eqnarray*}
\frac{1}{2}\liminf_{j\ra\infty}\Phi_{N}[G^{(j)},U] & \geq &
\sup_{\alpha\in \mathcal{G}_{p}^{-1}(G_{p})\cap\mathcal{M}_c}\left[H(\alpha)-\int U(\cdot,0)\,d\alpha\right]-\frac{1}{2}\mathrm{Tr}\left[\log(G_{p})\right]-\frac{p}{2}\log(2\pi e)\\
 & = & \frac{1}{2} \Phi_{p}[G_{p},U(\cdot,0)],
\end{eqnarray*}
where we have used Lemma \ref{lem:Ftech}, which allows us to look at the supremum over compactly supported measures.
 
Observe that the proof of the lower bound did not require the strong growth assumption, hence the semi-continuity claim of Remark \ref{rem:growthSC}.

\subsection{Upper bound}
\label{subsec:upperbound}

Next we turn to establishing an upper bound. The basic strategy is to select measures $\mu^{(j)}$ that (approximately) attain the supremum in \eqref{eq:FNsup} and take a limit as $j\ra \infty$.

Before proceeding, let $\ve >0$. Moreover, define $\pi_1$ to be the orthogonal projection onto $V_{G} \simeq \R^p$, and define $\pi_2$ to be the orthogonal projection onto $V_{G}^\perp \simeq \R^{N-p}$.

Now for every $j$, as suggested above choose $\mu^{(j)} \in
\mathcal{G}_N^{-1}(G^{(j)})$ such that 
\[
\mathcal{F}_N [G^{(j)},U] \leq H( \mu^{(j)} ) - \int U \,\ud\mu^{(j)} + \ve.
\]
Therefore
\begin{equation}
\label{eq:limsupPhi}
\Phi_N [G^{(j)},U] \leq \underbrace{H( \mu^{(j)} ) - \int U \,\ud\mu^{(j)} - \frac{1}{2} \sum_{i=1}^N \log(2\pi e \lambda_i^{(j)})}_{=: a_j} \,+ \ \ve.
\end{equation}
Then choose a subsequence $j_k$ such that $\lim_{k\ra\infty} a_{j_k} = \limsup_{j\ra\infty} a_j$.

Now the $\mu^{(j)}$ have uniformly bounded second moments, so by
Markov's inequality, the sequence $\mu^{(j)}$ is tight. Then by
Prokhorov's theorem (Theorem \ref{thm:prokhorov}), we can assume, by extracting a further subsequence if necessary, that $\mu^{(j_k)}$ converges weakly to some measure $\mu$.

We claim that $\mathcal{G}_N(\mu) \preceq G$ (so in particular, $\mu \in \mathcal{M}_2$). Indeed, for any $z \in \Rn$, by the Portmanteau theorem for weak convergence of measures (Theorem \ref{thm:portmanteau}) we have
\begin{eqnarray*}
\int(z^T x)^2 \,\ud\mu & \leq & \liminf_{k\ra \infty} \int(z^T x)^2 \,\ud\mu^{(j_k)} \\
& = & \liminf_{k\ra \infty} \int z^T xx^T z \,\ud\mu^{(j_k)}
 =  \liminf_{k\ra \infty} z^T G^{(j_k)} z = z^T G z.
\end{eqnarray*}
It follows that $\mu \in \mathcal{M}_2$ and moreover $z^T \mathcal{G}_n(\mu) z \leq z^T G z$ for all $z$, i.e., $\mc{G}_n(\mu) \preceq G$. In particular, $\mu$ is supported on $V_G$.

Define $T^{(j)}$ to be the \textit{orthogonal} transformation that sends $v^{(j)}_i \mapsto v_i$, so $T^{(j)} \ra I_n$ as $j\ra \infty$. Define $\nu^{(j)}:= T^{(j)}\#\mu^{(j)}$. Again by Prokhorov's theorem, we can assume that $\nu^{(j_k)}$ converges weakly to some measure $\nu$. In fact, we must have $\nu = \mu$. To see this, note that for any continuous compactly supported function  $\phi$ on $\Rn$, we have that $\phi \circ T^{(j)} \ra \phi$ uniformly as $j\ra\infty$. Therefore 
\[
\lim_{j\ra\infty} \int \left\vert \phi - \phi\circ T^{(j)} \right\vert\,\ud\mu^{(j)} \ra 0.
\]
Consequently 
\[
\int \phi\,\ud\mu = \lim_{k\ra\infty} \int\phi \,\ud\mu^{(j_k)} = \lim_{k\ra\infty} \int\phi\circ T^{(j_k)} \,\ud\mu^{(j_k)} = \lim_{k\ra\infty} \int\phi \, d\nu^{(j_k)} = \int\phi\,d\nu.
\]
Since $\mu$ and $\nu$ agree on all continuous compactly supported functions, they must be equal (Riesz representation theorem), and $\nu^{(j_k)} \ra \mu$ weakly.

Define $\mu^{(j)}_i := \pi_i \# \nu^{(j)} = \left(\pi_i \circ T^{(j)}\right)\# \mu^{(j)}$ and $\mu_i :=  \pi_i \#\mu$ for $i=1,2$. It follows  that $\mu^{(j_k)}_i \ra \mu_i$ weakly. Notice (using Fact \ref{fact:productEntropy}) that 
\[
H(\mu^{(j)}) = H(\nu^{(j)}) \leq H(\mu^{(j)}_1) + H(\mu^{(j)}_2) \leq  H(\mu^{(j)}_1) + \frac{1}{2}\sum_{i=p+1}^N \log (2\pi e \lambda_i ^{(j)}).
\]
Therefore, using Lemma \ref{lem:entropyUSC} with the weak convergence $\mu_1^{(j_k)} \ra \mu_1$, we obtain
\begin{eqnarray*}
\lim_{k\ra\infty} a_{j_k} & = & \lim_{k\ra \infty } \left[ H(\mu^{(j_k)}) - \int U \,\ud\mu^{(j_k)} - \frac{1}{2} \sum_{i=1}^N \log(2\pi e \lambda_i^{(j_k)}) \right] \\
&\leq & \limsup_{k\ra \infty } \left[ H(\mu_1^{(j_k)}) - \frac{1}{2}\sum_{i=1}^p \log (2\pi e \lambda_i ^{(j)}) \right] - \liminf_{k\ra \infty } \left[ \int U \,\ud\mu^{(j_k)} \right] \\
& \leq & H(\mu_1) - \liminf_{k\ra \infty } \left[ \int U \,\ud\mu^{(j_k)} \right] - \frac{1}{2} \log( (2\pi e)^p \det G_p).
\end{eqnarray*}

Now for any $\alpha\in \R$, define $U_\alpha(x) = U(x) - \alpha\Vert x\Vert ^2$. Then
\[
\int U \,\ud\mu^{(j)} = \int U_\alpha \,\ud\mu^{(j)} + \alpha \Tr[G^{(j)}].
\]
The utility of this manipulation will be made clear later. By the strong growth condition, $U_\alpha$ is bounded below. Therefore, by the Portmanteau theorem for weak convergence of measures,
\[
\liminf_{k\ra \infty } \left[ \int U \,\ud\mu^{(j_k)} \right]  =  \alpha \Tr[G] +  \liminf_{k\ra \infty } \left[ \int U_\alpha \,\ud\mu^{(j_k)} \right] 
 \geq  \alpha \Tr[G_p] +   \int U_\alpha \,\ud\mu.
\]
Since $\mu$ is supported on $V_G$, in fact 
\[
\int U_\alpha\,\ud\mu = \int U_\alpha(\,\cdot\,,0)\,\ud\mu_1 = \int U(\,\cdot\,,0)\,\ud\mu_1 - \alpha \Tr[\mc{G}_p(\mu_1)],
\]
 and therefore 
\begin{eqnarray*}
\lim_{k\ra\infty} a_{j_k} & \leq & H(\mu_1)-  \int U(\,\cdot\,,0) \,\ud\mu_1 - \frac{1}{2} \log( (2\pi e)^p \det G_p) + \alpha \Tr[\mc{G}_p(\mu_1) - G_p] \\
& \leq & \mathcal{F}_p[\mathcal{G}_p (\mu_1), U(\,\cdot\,,0)] - \frac{1}{2} \log( (2\pi e)^p \det G_p) + \alpha \Tr[\mc{G}_p(\mu_1) - G_p].
\end{eqnarray*}
Recall from \eqref{eq:limsupPhi} that 
\[
\limsup_{j\ra\infty} \Phi[G^{(j)},U] \leq \lim_{k\ra\infty} a_{j_k} + \ve.
\]
Since $\ve>0$ was arbitrary, this means that
\[
\limsup_{j\ra\infty} \Phi[G^{(j)},U] \leq \mathcal{F}_p[\mathcal{G}_p (\mu_1), U(\,\cdot\,,0)] - \frac{1}{2} \log( (2\pi e)^p \det G_p) + \alpha \Tr[\mc{G}_p(\mu_1) - G_p].
\]
If we had $\mathcal{G}_N(\mu) = G$, i.e., $\mc{G}_p(\mu_1)=G_p$, then we would be done. We have $\mc{G}_p(\mu_1) \preceq G_p$, so it will suffice to show that $\Tr[\mc{G}_p(\mu_1) - G_p]=0$. Suppose for contradiction that $\Tr[\mc{G}_p(\mu_1) - G_1] < 0$. But then, by taking $\alpha$ arbitrarily large we see that $\limsup_{j\ra\infty} \Phi[G^{(j)},U] = -\infty$, which is impossible because we already have a lower bound on $\liminf_{j\ra\infty} \Phi[G^{(j)},U]$. Therefore $\mc{G}_p(\mu_1) = G_p$, as desired, and we have
\[
\limsup_{j\ra\infty} \Phi[G^{(j)},U] \leq \Phi_p[G_p, U(\,\cdot\,,0)],
\]
which completes the proof.

Notice the strong growth assumption was only used 
in this part of the proof (i.e., the proof of the upper bound). In particular, it 
was only used to ensure that the measure $\mu^{(j)}$ of maximum entropy
relative to $\nu_U$ (as in Remark \ref{rem:Fdomain}) subject to the
moment constraint $\mathcal{G}(\mu^{(j)}) = G^{(j)}$ cannot weakly
converge to a measure $\mu$ with $\mathcal{G}(\mu) \neq G = \lim_{j\ra
\infty} G^{(j)}$.

\subsection{Dual perspective on continuous extension}
\label{subsec:alternateView}
We now outline how Theorem \ref{thm:contUpTo} can be
reinterpreted via the transformation rule. 
This perspective provides another way of understanding Theorem
\ref{thm:contUpTo}
and allows us to present a counterexample that illustrates the necessity of the
strong growth condition of Definition \ref{def:Ugrowth2}.

Suppose that $T_j$ are linear transformations such that $T_j \ra P$, where $P = I_p \oplus 0_{N-p}$ is the orthogonal projection onto $\mathrm{span}\{e_1^{(n)},\ldots,e_p^{(n)}\}$. Let $G \in \pd$ with upper-left block given by $G_{p}$. Then, using the transformation rule, Theorem \ref{thm:contUpTo}, and the projection rule, we obtain
\[
\Phi_N [G, U\circ T_j] = \Phi_N [T_j G T_j^*, U] \ra \Phi_p[G_{p}, U(\,\cdot\,,0)] = \Phi_N [G, U \circ P].
\]

This manipulation suggests that Theorem \ref{thm:contUpTo} is equivalent
to the pointwise convergence
\begin{equation}
\label{eq:alternateConv}
\Phi_N [\,\cdot\,, U\circ T_j] \ra \Phi_N [\,\cdot\,, U \circ P]
\end{equation}
for all $T_j \ra P$. 
To see the equivalence, consider an arbitrary sequence $G^{(j)} \in \pd$ 
converging, as before, to the block-diagonal matrix $G = G_p \oplus 0_{N-p} \in \psd$, 
where $G_p \in \mathcal{S}^p_{++}$. Then we want to show, using Eq.~\eqref{eq:alternateConv}, 
that $\Phi_N [ G^{(j)}, U] 
\ra \Phi_p[G_{p}, U(\,\cdot\,,0)]$.

To this end, let $T_j = [G^{(j)}]^{1/2} [G_p \oplus I_{N-p}]^{-1/2}$, so 
$G^{(j)} = T_j (G_p \oplus I_{N-p}) T_j^*$, and $T_j \ra P$. 
Then \eqref{eq:alternateConv} implies that 
$\Phi_N [G_p \oplus I_{N-p}, U\circ T_j] \ra \Phi_N [G_p \oplus I_{N-p}, U \circ P]$, 
and combining with the transformation and projection rules yields
Theorem~\ref{thm:contUpTo}.

Note that \eqref{eq:alternateConv} is equivalent to the pointwise
convergence of concave functions $\mathcal{F}_N[\,\cdot\,,U\circ T_j]
\ra \mathcal{F}_N[\,\cdot\,,U\circ P]$ as $T_j \ra P$. Since the domains
of these concave functions are open (namely, $\pd$), by Theorem \ref{thm:pointwiseUniform} 
this is actually equivalent to uniform convergence on all compact subsets of 
$\pd$. 
Furthermore, since
$\mathcal{F}_N[\,\cdot\,,U\circ T_j]$ and $\mathcal{F}_N[\,\cdot\,,U\circ P]$ 
are both uniformly $-\infty$ on $\symm \backslash \pd$, this is in turn 
equivalent to uniform convergence on all compact subsets of $\symm$ 
that do not contain a boundary point of $\pd$, which by Theorem \ref{epiConvThm} 
is equivalent to the hypo-convergence (see Definition \ref{def:epiConv}) 
$\mathcal{F}_N[\,\cdot\,,U\circ T_j]
\overset{\mr{h}}{\ra} \mathcal{F}_N[\,\cdot\,,U\circ P]$. (Note that the role of 
epi-convergence for convex functions is assumed by hypo-convergence 
for concave functions.)
But then hypo-convergence is equivalent to
hypo-convergence of the concave conjugates (Theorem \ref{conjConvThm}), i.e., 
of 
 $\Omega[\,\cdot\,,U\circ T_j]$ to 
$\Omega[\,\cdot\,,U\circ P]$ as $j \ra
\infty$.

In summary, the continuous extension property is equivalent to the hypo-convergence 
$\Omega[\,\cdot\,,U\circ T_j] \overset{\mr{e}}{\ra} \Omega[\,\cdot\,,U\circ P]$.

\subsection{Counterexample of weak but not strong growth}
\label{subsec:counterexample}

Here we give a counter example to show that the weak growth condition is insufficient for
guaranteeing the continuous extension property. 
By the discussion of section \ref{subsec:alternateView}, we need only find $U$ satisfying the weak 
growth condition for which 
$\Omega[\,\cdot\,,U\circ T_j]$ fails to hypo-converge to
$\Omega[\,\cdot\,,U\circ P]$.

For example, consider $N=2$ and
\[
U(x_1, x_2) = 
\begin{cases}
\vert x_1 \vert^4 & \vert x_1 \vert \leq \vert x_2 \vert^{-1} \\
\vert x_2 \vert^{-4} & \vert x_1 \vert \geq \vert x_2 \vert^{-1}.
\end{cases}.
\]
If $x_2 = 0$, then the first case holds for all $x_1$. This interaction
is nonnegative, and hence satisfies the first part of the weak growth
condition of Definition~\ref{def:Ugrowth} with $C_{U}=0$. To see that $U$ 
satisfies the weak growth condition, we need only show that $\dom\Omega$ is
open. Clearly $\dom\Omega \supset \pd$. Moreover, the restriction of $U$ to 
any line except the $x_1$-axis is bounded, and it follows that in fact 
$\dom\Omega = \pd$, hence $\dom\Omega$ is open, as desired.

Now let
\[
T_j := \left(\begin{array}{cc}
1 & 0\\
0 & j^{-1}
\end{array}\right) \ra P := \left(\begin{array}{cc}
1 & 0\\
0 & 0
\end{array}\right).
\]
Since $\Omega[\,\cdot\,,U\circ P]$ has an open domain, namely,
\[
\dom\left(\Omega[\,\cdot\,,U\circ P]\right)
= \left\{ A = (a_{ij}) \in \mathcal{S}^2 : a_{22} > 0  \right\},
\]
the hypo-convergence of $\Omega[\,\cdot\,,U\circ T_j]$ to $\Omega[\,\cdot\,,U\circ P]$ is equivalent to pointwise convergence (by Theorems \ref{epiConvThm} and \ref{thm:pointwiseUniform}), which is the same as the pointwise convergence $Z[\,\cdot\,,U\circ T_j] \ra Z[\,\cdot\,,U\circ P]$.

Set $A = (a_{ij})$ via $a_{11} = a_{12} = 0$, $a_{22} = 1$, so $A$ is in the domain of $\Omega[\,\cdot\,,U\circ P]$, i.e., $Z[A,U\circ P] < +\infty$. However, 
\[
Z[A,U\circ T_j] = \int e^{- \frac12 \vert x_2 \vert^2 - U(x_1,j^{-1} x_2) }\,\ud x_1\,\ud x_2 = j\cdot \int e^{-j^2 \frac12 \vert x_2 \vert^2 - U(x_1, x_2) }\,\ud x_1\,\ud x_2.
\]
Now the restriction of the last integrand to any line of constant $x_2 \neq 0$ is asymptotically equal to $e^{-j^2 \vert x_2\vert^2 - \vert x_2 \vert^{-4}} > 0$, so the integral along any such line is $+\infty$, and by Fubini's theorem, $Z[A,U\circ T_j] = +\infty$. Thus convergence fails at $A$, and we have a counterexample as claimed.

%%%%%%%%%%%%%%%%%%%%%%%%%%%%%%%%%%%%%%%%%%%%%%%%%%%%%%%%%%%%%%%%%%%%%%%%%%%%%%%%

\section*{Acknowledgments} 

This work was partially supported by the Department of Energy under
grant DE-AC02-05CH11231 (L.L., M.L.), by the Department of Energy under
grant No. DE-SC0017867 and by the Air Force Office of Scientific
Research under award number FA9550-18-1-0095 (L.L.), and by the NSF
Graduate Research Fellowship Program under Grant DGE-1106400 (M.L.).  We
thank Fabien Bruneval, Garnet Chan, Alexandre Chorin, Lek-Heng Lim,
Nicolai Reshetikhin, Chao Yang and Lexing Ying for helpful discussions.

%%%%%%%%%%%%%%%%%%%%%%%%%%%%%%%%%%%%%%%%%%%%%%%%%%%%%%%%%%%%%%%%%%%%%%%%%%%%%%%%

\appendix

\section{Definitions and results from convex analysis}\label{sec:convex}

In this section we review some definitions and results from convex analysis. In this paper 
many results are stated for concave functions, i.e., functions $f$ such that 
$-f$ are convex. The standard results of 
convex analysis can always be applied by considering negations. We state results 
below for convex functions to maintain consistency with the literature. Many results are 
stated in somewhat more generality than is needed 
for the purposes of this paper
(e.g., we do not simply conflate proper and non-proper convex functions). 
This is done to make sure that the reader can refer to the cited references. The
discussion follows developments from Rockafellar \cite{Rock} and 
Rockafellar and Wets \cite{RockWets}.

\subsection{Convex sets and functions}

We begin with the definition of convex sets and functions.

\begin{defn}
A set $C\subset\mathbb{R}^{n}$ is \emph{convex} if $(1-t)x+ty\in C$
for every $x,y\in C$ and all $t\in[0,1]$.
\end{defn}

\begin{defn}
An extended real-valued function $f$ on a convex set $C$, i.e., a function
$f:C \ra [-\infty,\infty] = \R \cup\{ -\infty,+\infty\}$, is \emph{convex} if 
\[
f\left((1-t)x+ty\right)\leq(1-t)f(x)+tf(y)
\]
for all $x,y\in C$
and all $t\in(0,1)$, where we interpret $\infty - \infty = +\infty$ if necessary.
We say that $f$ is 
\emph{strictly convex} on the convex set $C$ if this inequality  
holds strictly whenever $x \neq y$.
\end{defn}

\begin{defn}
The \emph{(effective) domain} of
a convex function $f$ on $S$, denoted $\mathrm{dom}\, f$, is the
set 
$ \dom f=\{x\in S\,:\, f(x)<+\infty\}$.
\end{defn}

The following is an immediate consequence of the preceding definitions:
\begin{lem}
Let $f$ be convex on $S\subset\mathbb{R}^{n}$. Then $\mathrm{dom}\, f$
is convex.
\end{lem}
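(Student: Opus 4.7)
The plan is to prove this directly from the definitions, using the defining convexity inequality on two points of $\mathrm{dom}\, f$. There is no real hidden difficulty; the only subtlety worth flagging is the bookkeeping with extended real values, since $f$ is allowed to take values in $[-\infty,+\infty]$.

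First I would fix $x, y \in \mathrm{dom}\, f$ and $t \in [0,1]$, and set $z := (1-t) x + t y$. By hypothesis $f$ is a convex function on the convex set $S$, so in particular $z \in S$ and $f(z)$ is defined as an element of $[-\infty,+\infty]$. The convexity inequality gives
\[
f(z) \;\leq\; (1-t)\, f(x) + t\, f(y).
\]
Since $x, y \in \mathrm{dom}\, f$, both $f(x)$ and $f(y)$ lie in $[-\infty, +\infty)$, so the right-hand side is a sum of elements of $[-\infty,+\infty)$ (no $+\infty - \infty$ indeterminacy arises, even under the excerpt's convention). Hence the right-hand side is strictly less than $+\infty$, which forces $f(z) < +\infty$, i.e., $z \in \mathrm{dom}\, f$.

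Since $x, y, t$ were arbitrary, this shows $(1-t) x + t y \in \mathrm{dom}\, f$ whenever $x, y \in \mathrm{dom}\, f$ and $t \in [0,1]$, which is exactly the convexity of $\mathrm{dom}\, f$. The main obstacle, to the extent there is one, is purely notational: making sure to invoke the convention on $\pm \infty$ correctly so that the convex combination on the right-hand side remains strictly below $+\infty$. No additional results from the excerpt are needed beyond the definitions of convex set, convex function, and effective domain.
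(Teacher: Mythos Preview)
Your proof is correct and matches the paper's approach: the paper does not give an explicit proof but simply notes that the lemma is ``an immediate consequence of the preceding definitions,'' which is exactly the direct argument from the convexity inequality that you spell out.
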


We note that when $f\in C^{2}(C)$, our definition of convexity coincides
with the definition from multivariate calculus:
\begin{thm}
Let $f\in C^{2}(C)$, where $C\subset\mathbb{R}^{n}$ is open and
convex. Then $f$ is convex on $C$ if and only if the Hessian matrix
$\nabla^{2}f(x)$ is positive semi-definite for all $x\in C$.\end{thm}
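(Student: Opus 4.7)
The plan is to reduce the multivariate statement to the well-known one-variable characterization: a $C^2$ function $g$ on an open interval $I \subset \R$ is convex on $I$ if and only if $g''(t) \geq 0$ for all $t \in I$. I will take this one-variable fact as known (it follows from the mean value theorem applied to $g'$, together with the standard equivalence between convexity and monotonicity of the derivative). With this in hand, the key observation is that the convexity of $f$ on $C$ is equivalent to the convexity, on its (open) domain, of each slice $g_{x,v}(t) := f(x + tv)$ for $x \in C$ and $v \in \R^n$, and the chain rule gives $g_{x,v}''(t) = v^T \nabla^2 f(x + tv)\, v$.

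For the forward direction, I would fix $x \in C$ and $v \in \R^n$, choose $\delta > 0$ so small that $x + tv \in C$ for $\vert t\vert < \delta$ (possible since $C$ is open), and note that the restriction of $f$ to the chord $t \mapsto x + tv$ inherits convexity from $f$. Applying the one-variable fact to $g_{x,v}$ at $t = 0$ gives $v^T \nabla^2 f(x) v = g_{x,v}''(0) \geq 0$. Since $v$ is arbitrary, $\nabla^2 f(x) \succeq 0$.

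For the reverse direction, I would fix $x,y \in C$ and set $h(t) := f((1-t)x + ty)$ for $t \in [0,1]$; since $C$ is convex, the segment lies in $C$, and by openness the definition of $h$ can be extended slightly beyond $[0,1]$. The chain rule yields $h''(t) = (y-x)^T \nabla^2 f((1-t)x + ty) (y-x) \geq 0$ by the hypothesis on the Hessian. Hence $h$ is convex on $[0,1]$ by the one-variable characterization, and in particular $h(t) \leq (1-t) h(0) + t h(1)$, which is precisely the convexity inequality for $f$ at $x,y$.

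The argument is essentially routine, with no substantive obstacles; the only point requiring a moment of care is justifying the openness of the domain of each slice $g_{x,v}$, which is immediate because $C$ is open in $\R^n$, so that the one-variable fact can be invoked cleanly. No use is made of completeness beyond $C^2$-regularity and the convexity of $C$.
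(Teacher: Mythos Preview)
Your argument is correct and is in fact the standard textbook proof of this classical result. The paper itself does not supply a proof at all: it simply refers the reader to Theorem 4.5 of Rockafellar's \emph{Convex Analysis}. Your reduction to the one-variable characterization via slices $g_{x,v}(t) = f(x+tv)$ is precisely the argument one finds in Rockafellar (and most other references), so there is no substantive difference in method---you have just filled in what the paper chose to outsource. The only minor point worth noting is that in the reverse direction the extension of $h$ slightly beyond $[0,1]$ is not strictly necessary (one can state the one-variable fact on closed intervals with $h'' \geq 0$ on the interior), but your handling via openness of $C$ is clean and correct.
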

\begin{proof}
See Theorem 4.5 of Rockafellar \cite{Rock}.
\end{proof}

Notice that for $f$ convex on a convex set $C\subset\mathbb{R}^{n}$,
we can extend to $\tilde{f}$ defined on $\mathbb{R}^{n}$ by taking $\tilde{f}\vert_{\mathbb{R}^{n}\backslash C}\equiv+\infty$.
It is immediate that $\tilde{f}$
is convex on $\mathbb{R}^{n}$. Thus one loses no generality by
considering only functions that are convex on $\mathbb{R}^{n}$.

The following definitions are helpful for ruling out pathologies:
\begin{defn}
A convex function $f$ is called \emph{proper} if $\mathrm{dom}\, f\neq\emptyset$
and $f(x)>-\infty$ for all $x$.
\end{defn}

We will only ever need to consider proper convex functions.

\begin{defn}
If $f$ is a proper convex function, then $f$ is called \emph{closed} 
if it is also lower semi-continuous. (If $f$ is a non-proper convex function, then 
$f$ is called \emph{closed} if it is either $f\equiv+\infty$
or $f\equiv-\infty$.)
\end{defn}
\begin{rem}
For the fact that this can be taken as the definition, see Theorem 7.1 of \cite{Rock}.
\end{rem}

The convexity of a function guarantees its continuity in a certain
sense:
\begin{thm}
\label{convLipThm}
A convex function $f$ on $\mathbb{R}^{n}$ is continuous relative
to any relatively open convex set in $\mathrm{dom}\, f$. In particular,
$f$ is continuous on $\mathrm{int\, dom}\, f$. In fact, it holds
that a proper convex function $f$ is locally Lipschitz on $\mathrm{int\, dom}\, f$.\end{thm}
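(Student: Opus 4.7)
The plan is to prove the local Lipschitz statement on $\mathrm{int\,dom}\,f$ first, from which continuity on $\mathrm{int\,dom}\,f$ follows immediately; the relative version then reduces to this by restricting $f$ to the affine hull of a given relatively open convex set $C \subset \mathrm{dom}\,f$, since the restriction is itself a convex function on the (lower-dimensional) Euclidean space $\mathrm{aff}(C)$ whose (relative) interior contains $C$.

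The first step I would carry out is establishing \emph{local upper boundedness} on $\mathrm{int\,dom}\,f$. Given $x_0 \in \mathrm{int\,dom}\,f$, choose $n+1$ affinely independent points $x_0, x_1, \ldots, x_n \in \mathrm{int\,dom}\,f$ whose convex hull $S$ is a simplex containing an open neighborhood $U$ of $x_0$. Any $x \in U$ can be written as $x = \sum_{i=0}^n \lambda_i x_i$ with $\lambda_i \geq 0$ and $\sum \lambda_i = 1$, so convexity (extended inductively from the two-variable definition) gives $f(x) \leq \sum \lambda_i f(x_i) \leq M := \max_i f(x_i) < +\infty$.

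Next I would derive \emph{local lower boundedness} by a reflection trick. Fix a ball $B(x_0, r) \subset U$. For any $x \in B(x_0, r)$, the reflected point $\tilde{x} := 2x_0 - x$ lies in $B(x_0, r) \subset U$, and $x_0 = \tfrac12 x + \tfrac12 \tilde{x}$, so by convexity $f(x_0) \leq \tfrac12 f(x) + \tfrac12 f(\tilde{x}) \leq \tfrac12 f(x) + \tfrac12 M$. Since $f$ is proper we have $f(x_0) > -\infty$, hence $f(x) \geq 2 f(x_0) - M$ on $B(x_0, r)$. Thus $|f|$ is uniformly bounded by some constant $K$ on $B(x_0, r)$.

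Finally I would upgrade local boundedness to local Lipschitz continuity via the standard secant-extension argument on $B(x_0, r/2)$. Given distinct $x, y$ in this smaller ball, extend the ray from $x$ through $y$ to the point $z := y + \tfrac{r/2}{\|y-x\|}(y-x)$, which lies in $B(x_0, r)$. Writing $y = (1-t) x + t z$ with $t = \|y-x\|/(\|y-x\| + r/2) \leq 2\|y-x\|/r$, convexity yields $f(y) - f(x) \leq t(f(z) - f(x)) \leq (4K/r)\|y-x\|$. Swapping the roles of $x$ and $y$ gives the reverse bound, establishing the Lipschitz constant $4K/r$ on $B(x_0, r/2)$. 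The main obstacle here is the initial simplex construction giving upper boundedness; once $f$ is known to be locally bounded, the Lipschitz estimate is the standard one-line convexity manipulation, and continuity (the weaker claim) is an immediate consequence. For the relative version, the identical argument is executed inside $\mathrm{aff}(C)$, using that $C \subset \mathrm{ri}(\mathrm{dom}\,f\cap\mathrm{aff}(C))$ is relatively open.
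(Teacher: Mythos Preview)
Your argument is correct and is essentially the standard proof found in Rockafellar (Theorems~10.1 and~10.4), which is precisely what the paper cites in lieu of a proof; the paper does not supply its own argument for this result. One minor slip: you label the simplex vertices $x_0,x_1,\ldots,x_n$ with $x_0$ already denoting the given point, but a simplex cannot contain an open neighborhood of one of its vertices---just relabel the vertices (or translate so that $x_0$ lies in the interior of the simplex) and the step goes through.
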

\begin{proof}
See Theorems 10.1 and 10.4 of Rockafellar \cite{Rock}.
\end{proof}

\subsection{First-order properties of convex functions}

There is an extension of the notion of differentiability that is fundamental
to the analysis of convex functions.
\begin{defn}
Let $f$ be a convex function on $\mathbb{R}^{n}$.
$y\in\mathbb{R}^{n}$
is called a \emph{subgradient} of $f$ at $x\in \dom f$ if $f(z)\geq f(x)+\left\langle y,z-x\right\rangle $
for all $z\in\mathbb{R}^{n}$. The \emph{subdifferential} of $f$ at
$x\in \dom f$, denoted $\partial f(x)$, is the set of all subgradients of
$f$ at $x$. By convention $\partial f (x) = \emptyset$ for $x \notin \dom f$.
\end{defn}
\begin{thm}
\label{subdiffNonempty}
Let $f$ be a proper convex function. $\partial f(x)$ is a non-empty
bounded set if and only if $x\in\mathrm{int\, dom}\, f$.\end{thm}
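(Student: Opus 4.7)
The plan is to prove the two implications separately, relying on the supporting hyperplane theorem together with Theorem \ref{convLipThm}.

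For the forward direction, assume $x \in \mathrm{int\, dom}\, f$. For non-emptiness, I would apply the supporting hyperplane theorem to the convex set $\mathrm{epi}\, f \subset \R^{n+1}$ at the boundary point $(x, f(x))$. This yields a non-zero vector $(y, -\lambda) \in \R^n \times \R$ such that $\langle y, z - x\rangle - \lambda(\alpha - f(x)) \leq 0$ for every $(z,\alpha) \in \mathrm{epi}\, f$. Letting $\alpha \to +\infty$ with $z = x$ forces $\lambda \geq 0$, and the key observation is that in fact $\lambda > 0$: otherwise one would have $\langle y, z - x\rangle \leq 0$ for all $z \in \dom f$, and since $x \in \mathrm{int\, dom}\, f$ this forces $y = 0$, contradicting $(y, -\lambda) \neq 0$. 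Dividing by $\lambda$ then produces a subgradient $y/\lambda \in \partial f(x)$. For boundedness, I would invoke Theorem \ref{convLipThm} to obtain $L, r > 0$ such that $f$ is $L$-Lipschitz on the closed ball $\overline{B_r(x)} \subset \mathrm{int\, dom}\, f$. For any $y \in \partial f(x)$ and any unit vector $u$, the subgradient inequality at $z = x + ru$ gives $r\langle y, u\rangle \leq f(x + ru) - f(x) \leq Lr$, so $\|y\| \leq L$.

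For the reverse direction, suppose $x \notin \mathrm{int\, dom}\, f$. If $x \notin \dom f$ then $\partial f(x) = \emptyset$ by convention, so it suffices to handle $x \in (\dom f) \setminus \mathrm{int\, dom}\, f$. In this case $x$ is a boundary point of the convex set $\dom f$, and a standard separation argument (separating $x$ from the convex set $\mathrm{int\, dom}\, f$) produces a non-zero $v \in \R^n$ satisfying $\langle v, z - x\rangle \leq 0$ for all $z \in \dom f$. Then for any $y \in \partial f(x)$ and any $t \geq 0$, the vector $y + tv$ is again a subgradient: the subgradient inequality $f(z) \geq f(x) + \langle y + tv, z - x\rangle$ is trivial for $z \notin \dom f$ since the left-hand side equals $+\infty$, while for $z \in \dom f$ we have $f(z) \geq f(x) + \langle y, z - x\rangle \geq f(x) + \langle y + tv, z - x\rangle$ because the term $t \langle v, z-x\rangle$ is non-positive. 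Hence $\partial f(x)$ is either empty or contains the unbounded half-line $\{y + tv : t \geq 0\}$.

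The main technical point will be the non-vertical hyperplane argument in the forward direction: one must carefully rule out $\lambda = 0$, using precisely the hypothesis that $x$ lies in the interior of $\dom f$. The remaining steps are essentially routine consequences of convexity of the epigraph (non-emptiness), local Lipschitz continuity from Theorem \ref{convLipThm} (boundedness), and the direct separation construction producing an unbounded ray of subgradients (reverse direction).
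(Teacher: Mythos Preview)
Your proof is correct and follows the standard route found in Rockafellar's text; the paper itself does not prove this result but simply refers the reader to Theorem 23.4 of \cite{Rock}. One minor remark on the reverse direction: phrasing the argument as ``separating $x$ from $\mathrm{int\,dom}\,f$'' presupposes that $\mathrm{int\,dom}\,f$ is non-empty. When it is empty, $\dom f$ lies in a proper affine subspace and the required nonzero $v$ can be taken orthogonal to that subspace (equivalently, obtained from a supporting hyperplane to $\dom f$ at the boundary point $x$), so the unbounded-ray construction $\{y+tv:t\geq 0\}\subset\partial f(x)$ still goes through unchanged.
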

\begin{proof}
See Theorem 23.4 of Rockafellar \cite{Rock}.
\end{proof}
 
It is perhaps no surprise that the derivative and the subdifferential of a convex function coincide wherever it is differentiable.
 
\begin{thm}
Let $f$ be a convex function, and let $x\in\mathbb{R}^{n}$ such
that $f(x)$ is finite. If $f$ is differentiable at $x$, then $\nabla f(x)$
is the unique subgradient of $f$ at $x$, where $\nabla$ is the gradient 
defined with respect to the inner product used to define the subgradient. 
Conversely, if $f$ has
a unique subgradient at $x$, then $f$ is differentiable at $x$.\end{thm}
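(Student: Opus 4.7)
I would split the proof into the two implications; only the converse requires real work.

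For the forward direction, assume $f$ is differentiable at $x$ (so in particular $x \in \mathrm{int}\,\mathrm{dom}\, f$). To see that $\nabla f(x) \in \partial f(x)$, fix any $z \in \mathbb{R}^n$ and apply convexity to $(1-t)x + tz$ for $t \in (0,1)$, then rearrange and send $t \to 0^+$; differentiability yields the subgradient inequality $f(z) \geq f(x) + \langle \nabla f(x), z-x\rangle$. For uniqueness, suppose $y \in \partial f(x)$. Applied to $z = x + tv$ and $z = x - tv$ for $t>0$ small and $v$ arbitrary, the subgradient inequality gives
\[
\frac{f(x+tv)-f(x)}{t} \geq \langle y, v\rangle, \qquad \frac{f(x-tv)-f(x)}{t} \geq -\langle y, v\rangle.
\]
Passing to the limit $t \to 0^+$ using differentiability of $f$ at $x$ forces $\langle \nabla f(x), v\rangle = \langle y, v\rangle$ for every $v$, hence $y = \nabla f(x)$.

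For the converse, suppose $\partial f(x) = \{y\}$. By Theorem \ref{subdiffNonempty}, the non-emptiness and boundedness of the subdifferential together imply $x \in \mathrm{int}\,\mathrm{dom}\, f$, and by Theorem \ref{convLipThm}, $f$ is locally Lipschitz on a neighborhood $U$ of $x$. I first establish Gateaux differentiability. For each direction $v$, the map $t \mapsto \frac{f(x+tv)-f(x)}{t}$ is non-decreasing in $t>0$ by convexity, so the one-sided directional derivative $f'(x;v) := \lim_{t\to 0^+} \frac{f(x+tv)-f(x)}{t}$ exists (in $\mathbb{R}$, by the local Lipschitz bound). The standard support-function identity $f'(x;v) = \sup_{z \in \partial f(x)} \langle z, v\rangle$, which follows from the fact that $z \in \partial f(x)$ iff $\langle z, v\rangle \leq f'(x;v)$ for all $v$ (a consequence of the monotonicity of the difference quotient), collapses to $f'(x;v) = \langle y, v\rangle$ under the uniqueness assumption. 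Applying the same identity to the direction $-v$ and using linearity yields $f'(x;-v) = -f'(x;v)$, so the two-sided directional derivative exists and is the linear map $v \mapsto \langle y, v\rangle$.

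The main obstacle is upgrading from Gateaux to Fréchet differentiability, i.e., showing
\[
\lim_{h \to 0}\ \frac{f(x+h) - f(x) - \langle y, h\rangle}{\|h\|} = 0.
\]
This is where the local Lipschitz property, combined with convexity, does the real work. The plan is to use the fact that the family of functions $\varphi_t(v) := \frac{f(x+tv) - f(x)}{t}$, for $t \in (0, t_0]$, is uniformly Lipschitz in $v$ on the closed unit ball (since $f$ is $L$-Lipschitz on $U$), and each $\varphi_t$ is convex. Pointwise convergence $\varphi_t(v) \to \langle y, v\rangle$ of a uniformly equicontinuous family of convex functions on a compact set upgrades to uniform convergence (by the Arzelà–Ascoli theorem, or equivalently by the standard fact that pointwise convergence of finite convex functions on an open set implies local uniform convergence). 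Uniform convergence of $\varphi_t$ to $\langle y, \cdot\rangle$ on the unit sphere is precisely Fréchet differentiability at $x$ with gradient $y$. Combining with the forward direction identifies $\nabla f(x) = y$, completing the proof.
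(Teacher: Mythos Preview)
The paper does not actually give a proof of this theorem; it simply cites Theorem 25.1 of Rockafellar \cite{Rock}. Your argument is correct and is essentially the standard one found there: the forward direction via difference quotients, and the converse via the support-function identity $f'(x;v)=\sup_{z\in\partial f(x)}\langle z,v\rangle$ followed by an equicontinuity upgrade from Gateaux to Fr\'echet differentiability.

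One small remark on your justification of the support-function identity: the characterization ``$z\in\partial f(x)$ iff $\langle z,v\rangle\le f'(x;v)$ for all $v$'' immediately yields $\sup_{z\in\partial f(x)}\langle z,v\rangle\le f'(x;v)$, but the reverse inequality (that the supremum is actually attained) requires the sublinearity of $f'(x;\cdot)$ together with a Hahn--Banach/separation step to produce, for each fixed direction $v_0$, a linear functional dominated by $f'(x;\cdot)$ and agreeing with it at $v_0$. In the present situation this is harmless because $\partial f(x)$ is a singleton, so once you know $\langle y,v\rangle\le f'(x;v)$ and $\langle y,-v\rangle\le f'(x;-v)$ you get $f'(x;v)+f'(x;-v)\ge0$, while convexity of $t\mapsto f(x+tv)$ gives $f'(x;v)+f'(x;-v)\le0$; hence equality and linearity follow without invoking the full identity. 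Either route closes the argument.
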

\begin{proof}
See Theorem 25.1 of Rockafellar \cite{Rock}.
\end{proof}

\subsection{The convex conjugate}
A fundamental notion of convex analysis is convex conjugation, which 
extends the older notion of Legendre transformation.
\begin{defn}
Let $f$ be a function $\mathbb{R}^{n}\rightarrow[-\infty,+\infty]$.
Then the \emph{convex conjugate }(or, \emph{Legendre-Fenchel transform})
$f^{*}:\mathbb{R}^{n}\rightarrow[-\infty,+\infty]$ with respect to an inner product 
$\langle \, \cdot\,,\,\cdot\,\rangle$ on $\R^n$ is defined by
\[
f^{*}(y)=\sup_{x}\left\{ \left\langle x,y\right\rangle -f(x)\right\} =-\inf_{x}\left\{ f(x)-\left\langle x,y\right\rangle \right\} .
\]
\end{defn}
\begin{thm}
\label{doubleDual}
Let $f$ be a convex function. Then $f^{*}$ is a closed convex function,
proper if and only if $f$ is proper. Furthermore, if $f$ is closed,
then $f^{**}=f.$\end{thm}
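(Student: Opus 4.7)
The plan is to address the three claims in turn, using the supremum-of-affine-functions structure of $f^*$ for the convexity/closedness and propriety claims, and a separating hyperplane argument on the epigraph for the biconjugate identity $f^{**} = f$.

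First I would observe that $f^*(y) = \sup_x \{\langle x, y\rangle - f(x)\}$ is, as a function of $y$, a pointwise supremum of affine (hence continuous and convex) functions indexed by $x$; consequently $f^*$ is convex and lower semi-continuous on $\R^n$, so it is closed whether or not it is proper. For the propriety equivalence, suppose $f$ is proper. Picking any $x_0 \in \dom f$ gives $f^*(y) \geq \langle x_0, y\rangle - f(x_0) > -\infty$ for every $y$, so $f^*$ is nowhere $-\infty$. To show $f^*$ is not identically $+\infty$, I would invoke the standard fact (a consequence of Theorem \ref{subdiffNonempty} applied at any point of $\mathrm{int}\,\dom f$, which is nonempty by properness, or equivalently of Hahn--Banach separation of $\mathrm{epi}\,f$ from a point strictly below it) that a proper convex function admits an affine minorant $f(\cdot) \geq \langle a, \cdot\rangle - b$, which rearranges to $f^*(a) \leq b < +\infty$. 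Conversely, if $f$ is non-proper, then either $f \equiv +\infty$ and $f^* \equiv -\infty$, or $f$ takes the value $-\infty$ somewhere and $f^* \equiv +\infty$; in either degenerate case $f^*$ is non-proper but closed by the convention in the definition of closedness.

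For the biconjugate identity, one direction is immediate: the Fenchel--Young inequality $\langle x, y\rangle - f(x) \leq f^*(y)$, which follows directly from the definition of $f^*$, rearranges to $\langle x, y\rangle - f^*(y) \leq f(x)$, and taking the supremum over $y$ yields $f^{**}(x) \leq f(x)$ for every $x$. The hard part will be the reverse inequality $f \leq f^{**}$, which is where the closedness hypothesis enters essentially. My approach would be separation of $\mathrm{epi}\,f$: if $f^{**}(x_0) < f(x_0)$ for some $x_0$, choose $\alpha$ strictly between them, so that $(x_0, \alpha) \in \R^n \times \R$ lies outside the closed convex set $\mathrm{epi}\,f$ (closed because $f$ is lower semi-continuous). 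Hahn--Banach then produces $(y, \beta) \in \R^n \times \R$ strictly separating $(x_0, \alpha)$ from $\mathrm{epi}\,f$. When $\beta < 0$, one rescales to obtain an affine function $\ell(x) = \langle y', x\rangle - c$ with $\ell \leq f$ pointwise and $\ell(x_0) > \alpha$; this yields $f^*(y') \leq c$ and hence $f^{**}(x_0) \geq \langle y', x_0\rangle - c = \ell(x_0) > \alpha$, contradicting the choice of $\alpha$. The main obstacle, and the only delicate case, is the vertical separation case $\beta = 0$, which can arise only when $x_0$ lies outside the closure of $\dom f$ (so in particular $f(x_0) = +\infty$); I would handle it by adding a small positive multiple of the affine minorant of $f$ produced in the propriety step to the separating functional, thereby tilting the hyperplane out of the vertical regime while preserving strict separation of $(x_0, \alpha)$ from $\mathrm{epi}\,f$ and reducing to the $\beta < 0$ case.
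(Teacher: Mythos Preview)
Your sketch is essentially correct and follows the standard argument; the paper itself does not give a proof at all but simply cites Theorem 12.2 of Rockafellar, whose proof is precisely the epigraph-separation argument you outline. One small slip: properness of $f$ does \emph{not} by itself guarantee that $\mathrm{int}\,\dom f$ is nonempty (consider a proper convex function supported on a lower-dimensional affine set), so your appeal to Theorem \ref{subdiffNonempty} may be vacuous; but your parenthetical fallback to Hahn--Banach separation of $\mathrm{epi}\,f$ from a point strictly below it is exactly the right fix and is what is actually needed.
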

\begin{proof}
See Theorem 12.2 of Rockafellar \cite{Rock}.
\end{proof}

It is an important fact that the subgradients of $f$ and $f^{*}$
are, in a sense, inverse mappings.
\begin{thm}
\label{invSubThm}
If $f$ is a closed proper convex function, then $x\in\partial f^{*}(y)$
if and only if $y\in\partial f(x)$.\end{thm}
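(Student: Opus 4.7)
The plan is to derive both directions from the single equality characterization
\[
y \in \partial f(x) \ \Longleftrightarrow\ f(x) + f^{*}(y) = \langle x, y\rangle,
\]
commonly called the Fenchel--Young equality, and then invoke $f^{**}=f$ to get symmetry in $x$ and $y$.

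First I would establish this characterization. Unwinding the definition of the conjugate, $f^{*}(y) = \sup_{z}\{\langle z,y\rangle - f(z)\}$, so for any $x$ in $\dom f$ we automatically have the Fenchel--Young inequality $f(x) + f^{*}(y) \ge \langle x, y\rangle$, and equality is attained precisely when $x$ is a maximizer of $z \mapsto \langle z, y\rangle - f(z)$. On the other hand, rearranging the subgradient inequality
\[
f(z) \ge f(x) + \langle y, z-x\rangle \quad \text{for all } z
\]
gives $\langle z,y\rangle - f(z) \le \langle x, y\rangle - f(x)$ for all $z$, which says exactly that $x$ attains the supremum defining $f^{*}(y)$. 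Hence the two conditions $y \in \partial f(x)$ and $f(x)+f^{*}(y)=\langle x,y\rangle$ are equivalent. (The case $x \notin \dom f$ is handled trivially since then both sides fail: $\partial f(x)=\emptyset$ by convention, and $f(x)=+\infty$ makes the equality impossible under the convention that $f^{*}(y) < +\infty$ on $\dom f^{*}$; the remaining bookkeeping is straightforward using that $f$ is proper.)

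Next, since Theorem~\ref{doubleDual} tells us that $f^{*}$ is itself a closed proper convex function, the same argument applied to $f^{*}$ in place of $f$ yields
\[
x \in \partial f^{*}(y) \ \Longleftrightarrow\ f^{*}(y) + f^{**}(x) = \langle x, y\rangle.
\]
Invoking once more the closedness and properness of $f$, Theorem~\ref{doubleDual} gives $f^{**}=f$, so this characterization becomes
\[
x \in \partial f^{*}(y) \ \Longleftrightarrow\ f(x) + f^{*}(y) = \langle x, y\rangle,
\]
which is exactly the same equality that characterizes $y \in \partial f(x)$. The two subgradient relations are therefore equivalent.

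This proof is essentially routine once the Fenchel--Young equality is in hand; there is no real obstacle. The only subtle point is that the symmetry between $f$ and $f^{*}$ relies on $f^{**}=f$, which is why the hypothesis that $f$ be closed and proper (rather than merely convex) is necessary — dropping it would in general leave us with $f^{**} \le f$ and break one direction of the implication.
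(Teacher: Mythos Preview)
Your proof is correct and is the standard argument via the Fenchel--Young equality. The paper does not actually prove this statement but simply cites Corollary 23.5.1 of Rockafellar \cite{Rock}; your argument is essentially the proof one finds there.
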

\begin{proof}
See Corollary 23.5.1 of Rockafellar \cite{Rock}.
\end{proof}

Roughly speaking, differentiability of a convex function 
corresponds to the strict convexity of its conjugate. Indeed:
\begin{thm}
\label{strictConvDiff}
If $f$ is a closed proper convex function, then the following are equivalent:
\begin{enumerate}
\item $\intdom f$ is nonempty, $f$ is differentiable on $\intdom f$, 
and $\partial f (x) = \emptyset$ for all $x\in \dom f \,\backslash\, \intdom f$.
\item $f^*$ is strictly convex on all convex subsets of 
$\dom \partial f^* := \{ y \,:\, \partial f^* (y) \neq \emptyset \}$.
\end{enumerate}
\end{thm}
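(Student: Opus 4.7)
The plan is to prove the two implications separately, leveraging Theorem \ref{invSubThm} (the inverse relationship $x \in \partial f^*(y) \iff y \in \partial f(x)$) throughout, together with the convexity of subdifferentials and the fact that $\dom \partial f^* = \bigcup_x \partial f(x)$.

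For the forward direction (1) $\Rightarrow$ (2), first I would observe that assumption (1) identifies $\dom \partial f^* = \nabla f(\intdom f)$, since the only $x$ for which $\partial f(x)$ is nonempty lie in $\intdom f$, where $\partial f(x) = \{\nabla f(x)\}$. Now let $C \subseteq \dom \partial f^*$ be convex and fix distinct $y_1, y_2 \in C$ together with $t \in (0,1)$ and $y_t := (1-t)y_1 + t y_2 \in C$. Pick any $x_t \in \partial f^*(y_t)$; by (1), $x_t \in \intdom f$ with $\partial f(x_t) = \{y_t\}$. For $i = 1,2$, the conjugacy relation gives $f^*(y_i) \geq \langle x_t, y_i\rangle - f(x_t)$, with equality if and only if $x_t \in \partial f^*(y_i)$, i.e. $y_i \in \partial f(x_t) = \{y_t\}$. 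Since $y_1, y_2 \neq y_t$, both inequalities are strict, and convex-combining them yields
\[
(1-t) f^*(y_1) + t f^*(y_2) > \langle x_t, y_t\rangle - f(x_t) = f^*(y_t),
\]
which is strict convexity on $C$.

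For the backward direction (2) $\Rightarrow$ (1), I would argue by contradiction. For $x \in \intdom f$, Theorem \ref{subdiffNonempty} guarantees $\partial f(x) \neq \emptyset$, so differentiability is equivalent to this set being a singleton. Suppose it contains distinct $y_1, y_2$; by convexity of $\partial f(x)$, the whole segment $[y_1,y_2]$ lies in $\partial f(x)$, and by the conjugacy relation $x \in \partial f^*(y)$ for every such $y$, so $[y_1,y_2] \subseteq \dom \partial f^*$. The subgradient inequalities $f^*(y_2) \geq f^*(y_1) + \langle x, y_2 - y_1\rangle$ (from $x \in \partial f^*(y_1)$) and the symmetric one give equality, and the same computation for $y_t = (1-t)y_1 + t y_2$ combined with convexity of $f^*$ forces $f^*$ to be affine on $[y_1,y_2]$, contradicting the hypothesis of strict convexity on the convex set $[y_1,y_2] \subseteq \dom \partial f^*$. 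For a boundary point $x \in \dom f \setminus \intdom f$ with $\partial f(x) \ni y_0$, one produces a second subgradient using the normal cone: any $v \in N_{\dom f}(x)$ satisfies $\langle v, z - x\rangle \leq 0$ for $z \in \dom f$, so $y_0 + v \in \partial f(x)$; since $x \notin \intdom f$, there exists a nonzero such $v$, and the previous argument yields the same contradiction.

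The main obstacle is the boundary case in the backward direction, which hinges on the existence of a nonzero element in $N_{\dom f}(x)$ for $x \in \dom f \setminus \intdom f$. When $\intdom f$ is nonempty this is a standard separation theorem argument (separating $x$ from $\intdom f$), but if $\dom f$ lies in a proper affine subspace, one should phrase things in terms of the relative interior and the lineality of $\dom f$; I would handle this by restricting $f$ to the affine hull of $\dom f$ and applying the preceding argument there, noting that the statement of (2) in terms of $\dom \partial f^*$ is insensitive to this reduction.
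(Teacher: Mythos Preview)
The paper does not actually prove this theorem: its ``proof'' consists solely of the citation ``See Theorem 11.13 of \cite{RockWets}.'' So there is no approach in the paper to compare against; you have gone well beyond the paper by supplying a self-contained argument.

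Your argument is essentially correct. The forward direction is clean: the key point that $x_t \in \partial f^*(y_t)$ forces $\partial f(x_t) = \{y_t\}$ under (1), and hence the Fenchel--Young inequalities for $y_1,y_2$ are strict, is exactly right. The backward direction is also correct in outline, and the use of the normal cone to manufacture a second subgradient at boundary points is the standard mechanism.

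One presentational issue: in the last paragraph you treat the case ``$\dom f$ lies in a proper affine subspace'' as a variant of the boundary argument, to be handled by restricting to the affine hull. But in the implication (2) $\Rightarrow$ (1) you must \emph{establish} that $\intdom f \neq \emptyset$, not assume it; restricting to the affine hull gives you relative-interior differentiability, not (1) as stated. The cleaner way to phrase it is: if $\intdom f = \emptyset$, then $\dom f$ lies in a proper affine subspace, so at any point $x$ of the (nonempty) relative interior one has $\partial f(x) \neq \emptyset$ and the normal cone $N_{\dom f}(x)$ contains a full line through the origin; hence $\partial f(x)$ contains a line, on which $f^*$ is affine, contradicting (2). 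This is really the same normal-cone trick you already use, just deployed to rule out $\intdom f = \emptyset$ up front rather than as an afterthought.
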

\begin{proof}
See Theorem 11.13 of \cite{RockWets}.
\end{proof}

Note that for proper convex $f$, if $\dom f^*$ is open, then $\dom \partial f^* = \dom f^*$
by Theorem \ref{subdiffNonempty}, and 
under the additional assumption that $\dom f$ is open, 
Theorem \ref{strictConvDiff} simplifies to the following:
\begin{thm}
\label{strictConvDiff2}
Let $f$ is a lower semi-continuous, proper convex function, and suppose that $\dom f$ and $\dom f^*$ 
are open. Then the following are equivalent:
\begin{enumerate}
\item $f$ is differentiable on $\dom f$.
\item $f^*$ is strictly convex on $\dom f^*$.

\end{enumerate}
\end{thm}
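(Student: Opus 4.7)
The plan is to deduce Theorem \ref{strictConvDiff2} directly from Theorem \ref{strictConvDiff} by showing that each of the two openness hypotheses collapses one of the conditions there into the simpler form stated here. First I would verify that the hypotheses of Theorem \ref{strictConvDiff} are in force: $f$ is proper by assumption, and $f$ being lower semi-continuous is, for proper convex functions, exactly what it means to be closed. So Theorem \ref{strictConvDiff} applies without further ado.

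Next I would handle the simplification of condition (1). Openness of $\dom f$ gives $\intdom f = \dom f$, so the set $\dom f \setminus \intdom f$ is empty and the auxiliary clause ``$\partial f(x) = \emptyset$ for all $x \in \dom f \setminus \intdom f$'' becomes vacuous. Moreover $\intdom f = \dom f$ is nonempty by properness. Hence condition (1) of Theorem \ref{strictConvDiff} reduces to the single statement that $f$ is differentiable on $\dom f$, which is condition (1) of Theorem \ref{strictConvDiff2}.

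For condition (2), I would first invoke Theorem \ref{doubleDual} to note that $f^*$ is itself a closed proper convex function, so Theorem \ref{subdiffNonempty} applies to $f^*$: $\partial f^*(y)$ is nonempty (and bounded) for every $y \in \intdom f^*$, while by convention $\partial f^*(y) = \emptyset$ for $y \notin \dom f^*$. The hypothesis that $\dom f^*$ is open gives $\intdom f^* = \dom f^*$, and therefore $\dom \partial f^* = \dom f^*$. Since $\dom f^*$ is itself convex (as the effective domain of a convex function), ``strict convexity on every convex subset of $\dom \partial f^*$'' is equivalent to ``strict convexity on $\dom f^*$''---the forward direction is immediate, since $\dom f^*$ is a convex subset of itself, and the reverse direction follows because strict convexity is inherited by convex subsets. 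This identifies condition (2) of Theorem \ref{strictConvDiff} with condition (2) of Theorem \ref{strictConvDiff2}, completing the deduction. There is no substantive obstacle here: the whole argument is bookkeeping that extracts the advertised simplification from the three already-cited results.
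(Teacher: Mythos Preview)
Your proposal is correct and follows essentially the same approach as the paper: the paper's justification is the sentence immediately preceding the theorem, which notes that openness of $\dom f^*$ gives $\dom \partial f^* = \dom f^*$ via Theorem \ref{subdiffNonempty}, and that openness of $\dom f$ then collapses Theorem \ref{strictConvDiff} to the stated form. You have simply spelled out in more detail the bookkeeping that the paper leaves implicit.
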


\subsection{Sequences of convex functions}

Pointwise convergence of convex functions entails a kind of convergence
of their subgradients.
\begin{thm}
\label{subgradConvThm}
Let $f$ be a convex function on $\mathbb{R}^{n}$, and let $C$ be
an open convex set on which $f$ is finite. Let $f_{1},f_{2},\ldots$
be a sequence of convex functions finite on $C$ and converging pointwise
to $f$ on $C$. Let $x\in C$, and let $x_{1},x_{2},\ldots$ be a
sequence of points in $C$ converging to $x$. Then for any $\varepsilon>0$,
there exists $N$ such that 
\[
\partial f_{i}(x_{i})\subset\partial f(x)+B_{\varepsilon}(0)
\]
 for all $i\geq N$.\end{thm}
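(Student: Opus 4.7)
The plan is to prove this by a contradiction/compactness argument: assume the conclusion fails, extract a subsequence of `bad' subgradients, use uniform boundedness to pass to a convergent subsubsequence, and then show the limit lies in $\partial f(x)$, which contradicts the hypothesis that the subgradients stay $\varepsilon$-far from $\partial f(x)$.

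First I would establish a uniform boundedness statement: there exists a ball $B_r(x) \subset C$ and an index $N_0$ such that $\bigcup_{i \geq N_0} \partial f_i(x_i)$ is a bounded subset of $\R^n$. The key tool is the classical fact (Rockafellar \cite{Rock}, Theorem 10.8) that pointwise convergence of convex functions that are all finite on an open convex set $C$ upgrades automatically to uniform convergence on compact subsets of $C$. Pick $r>0$ so that $\overline{B_{2r}(x)} \subset C$; then $f_i \to f$ uniformly on $\overline{B_{2r}(x)}$, so for $i$ large the $f_i$ are uniformly bounded on this compact set. For such $i$, also $x_i \in B_r(x)$. For any unit vector $u$ and any $y_i \in \partial f_i(x_i)$, the subgradient inequality gives $\langle y_i, u\rangle \leq r^{-1}[f_i(x_i + r u) - f_i(x_i)]$, and the right-hand side is uniformly bounded by the uniform bound on $f_i$ over $\overline{B_{2r}(x)}$. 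Taking the supremum over $u$ yields a uniform bound on $\|y_i\|$.

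Next, suppose for contradiction that the conclusion fails. Then there is an $\varepsilon_0 > 0$, a subsequence $i_k \to \infty$, and subgradients $y_{i_k} \in \partial f_{i_k}(x_{i_k})$ with $\mathrm{dist}(y_{i_k}, \partial f(x)) \geq \varepsilon_0$ for all $k$. By Step 1 the $y_{i_k}$ are bounded, so by passing to a further subsequence I may assume $y_{i_k} \to y$ for some $y \in \R^n$. The plan is to show $y \in \partial f(x)$, contradicting $\mathrm{dist}(y_{i_k}, \partial f(x)) \geq \varepsilon_0$.

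To verify $y \in \partial f(x)$, fix any $z \in C$. The subgradient inequality for $f_{i_k}$ at $x_{i_k}$ reads $f_{i_k}(z) \geq f_{i_k}(x_{i_k}) + \langle y_{i_k}, z - x_{i_k}\rangle$. Here $f_{i_k}(z) \to f(z)$ by hypothesis, and $f_{i_k}(x_{i_k}) \to f(x)$ because the $f_{i_k}$ converge to $f$ uniformly on a compact neighborhood of $x$ and $f$ is continuous there (Theorem \ref{convLipThm}); meanwhile $\langle y_{i_k}, z - x_{i_k}\rangle \to \langle y, z - x\rangle$. Hence $f(z) \geq f(x) + \langle y, z - x\rangle$ for every $z \in C$. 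To extend this subgradient inequality from $C$ to all of $\R^n$, let $z \in \R^n$ with $f(z) < \infty$ (the case $f(z) = +\infty$ is trivial), and for $t \in (0,1)$ set $z_t = x + t(z-x)$, which lies in $C$ for small $t$. The inequality on $C$ gives $f(z_t) \geq f(x) + t\langle y, z - x\rangle$, while convexity of $f$ gives $f(z_t) \leq (1-t)f(x) + tf(z)$; subtracting $f(x)$ and dividing by $t$ yields $f(z) - f(x) \geq \langle y, z - x\rangle$, as required. Thus $y \in \partial f(x)$, which is the desired contradiction and completes the proof.

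I expect the main obstacle to be Step 1, the uniform boundedness of the $\partial f_i(x_i)$; all that really makes it work is the non-obvious fact that pointwise convergence of convex functions on an open set is automatically locally uniform, which is the only nontrivial input from the general theory. Once that is in hand, the contradiction argument and the extension of the subgradient inequality from $C$ to $\R^n$ are routine.
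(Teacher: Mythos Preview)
Your proof is correct. The paper itself does not give a proof of this theorem: it simply cites Theorem 24.5 of Rockafellar \cite{Rock}. Your argument is essentially the standard one found there---uniform convergence on compacta (Rockafellar, Theorem 10.8) to get boundedness of the subgradients, then a compactness/contradiction argument passing to the limit in the subgradient inequality---so there is nothing to compare beyond noting that you have written out in full what the paper leaves as a reference.
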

\begin{proof}
See Theorem 24.5 of Rockafellar \cite{Rock}.
\end{proof}

Besides pointwise convergence, there is in fact another nature of convergence for 
convex functions. This is the notion of \emph{epi-convergence}, which is defined 
(even for non-convex functions) as follows: 
\begin{defn}
\label{def:epiConv}
Let $f_i, f$ be extended-real-valued functions on $\R ^n$. Then 
we say that the sequence $\{f_i\}$ \emph{epi-converges} to $f$, written as 
$f = \mr{e}\lim_{i\ra\infty} f_i$ or $f_i \overset{\mr{e}}{\ra} f$ as $i\ra \infty$, 
if for all $x \in \R^n$, the following two conditions are satisfied:
\begin{equation*}
\begin{split}
\liminf_i f_i (x_i) \geq f(x) \quad \mbox{for every sequence \ $x_i \ra x$}
\\ 
\limsup_i f_i( x_i) \leq f(x)  \quad \mbox{for some sequence \ $x_i \ra x$}.
\end{split}
\end{equation*}
We say that the sequence  $\{f_i\}$ \emph{hypo-converges} to $f$, written as 
$f = \mr{h}\lim_{i\ra\infty} f_i$ or $f_i \overset{\mr{h}}{\ra} f$ as $i\ra \infty$, 
if $\{-f_i\}$ epi-converges to $-f$.
\end{defn} 

The notion of epi-convergence is particularly natural in the theory of convex 
functions; accordingly hypo-convergence is more relevant to concave functions. 
Note also that epi-convergence is neither stronger nor weaker than pointwise 
convergence. However, there is a useful theorem that relates the pointwise 
convergence and epi-convergence
of convex functions.

\begin{thm}
\label{epiConvThm}
Let $f_{i}$ be a sequence of convex functions on $\mathbb{R}^{n}$,
and let $f$ be a  lower semi-continuous convex function on $\mathbb{R}^{n}$
such that $\mathrm{dom}\, f$ has non-empty interior. Then
$f=\mathrm{e}\lim_{i\to \infty}f_{i}$
if and only if the $f_{i}$ converge uniformly to $f$ on every compact set
$C$ that does not contain a boundary point of $\mathrm{dom}\, f$.\end{thm}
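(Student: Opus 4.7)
The plan is to prove the two implications separately, with both directions resting on two structural facts about convex functions: local Lipschitz continuity on $\intdom f$ (Theorem~\ref{convLipThm}) and continuity along line segments reaching from $\partial\dom f$ into $\intdom f$.

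For the reverse implication, assume uniform convergence on every compact set avoiding $\partial\dom f$, and verify both conditions of Definition~\ref{def:epiConv} at each $x \in \R^n$ by casework. If $x \in \intdom f$, a small closed ball around $x$ is compact and sits in $\intdom f$, so the two conditions follow immediately from uniform convergence on that ball together with the continuity of $f$. If $x \notin \overline{\dom f}$, a small closed ball around $x$ is disjoint from $\dom f$, so $f \equiv +\infty$ there; uniform convergence then forces $f_i \to +\infty$ uniformly on that ball, and both conditions are trivial. The genuinely delicate case is $x \in \partial\dom f$, which is the main obstacle of this direction. For the liminf condition, given $x_i \to x$, I would fix $y \in \intdom f$ and $t \in (0,1)$, observe that $(1-t)y + tx_i \in \intdom f$ for large $i$ so uniform convergence on a small compact neighborhood of $(1-t)y+tx$ applies, apply convexity to obtain $f((1-t)y+tx) \leq (1-t)f(y) + t\liminf_i f_i(x_i)$, and then let $t \to 1^-$ using the lower semi-continuity of $f$ and the continuity of $f$ along the segment $[y,x]$. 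For the limsup condition, I would construct the recovery sequence by first choosing $y_k = (1-1/k)x + (1/k)y_0$ with $y_0 \in \intdom f$, so that $y_k \in \intdom f$, $y_k \to x$, and $f(y_k) \to f(x)$ (continuity of convex functions along such segments), and then diagonalize against uniform convergence of $f_i$ to $f$ on small compact neighborhoods of each $y_k$.

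For the forward implication, assume epi-convergence and let $K$ be compact with $K \cap \partial\dom f = \emptyset$. Decompose $K = K_1 \cup K_2$ with $K_1 = K \cap \intdom f$ and $K_2 = K \cap (\R^n \setminus \overline{\dom f})$. On $K_2$, the liminf condition of epi-convergence at each $x \in K_2$ (where $f(x) = +\infty$) together with a compactness/contradiction argument yields $f_i \to +\infty$ uniformly on $K_2$. On $K_1$, the idea is to reduce to the classical principle that pointwise convergence of convex functions on an open set implies uniform convergence on its compact subsets. First, I would extract pointwise convergence $f_i(x) \to f(x)$ on $\intdom f$: the liminf condition with the constant sequence gives $\liminf_i f_i(x) \geq f(x)$, and the limsup condition provides a recovery sequence $x_i \to x$ with $\limsup_i f_i(x_i) \leq f(x)$; a uniform local Lipschitz bound on the $f_i$ then transfers this bound from $f_i(x_i)$ to $f_i(x)$. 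The uniform Lipschitz bound itself is obtained by first establishing a uniform upper bound on the $f_i$ on a slightly enlarged compact neighborhood of $K_1$ inside $\intdom f$ (using the limsup condition at finitely many vertices of a simplex whose interior contains this neighborhood, then propagating via convexity) and a uniform lower bound (from lower semi-continuity of $f$ and the liminf condition), and invoking the standard fact that a convex function bounded on a neighborhood is Lipschitz on any interior subset with constant controlled by the bound and the distance to the boundary. Once pointwise convergence and uniform Lipschitz bounds are in hand, an Arzel\`a--Ascoli-style equicontinuity argument upgrades pointwise to uniform convergence on $K_1$.

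The main obstacle is extracting the uniform local boundedness of $\{f_i\}$ on compact subsets of $\intdom f$ from epi-convergence alone, since epi-convergence a priori only controls values along special sequences: the argument requires carefully exploiting convexity (via simplex covering arguments and the limsup condition at vertices) together with lower semi-continuity of the limit $f$. Once that bound is in place, the rest of the forward direction follows the standard template for convex functions, and the reverse direction reduces to the boundary analysis sketched above.
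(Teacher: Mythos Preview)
Your plan is essentially a correct, self-contained argument along standard lines in convex analysis. The paper, however, does not actually prove this theorem: its ``proof'' consists solely of the citation ``See Theorem 7.17 of Rockafellar and Wets \cite{RockWets}.'' This is a background result quoted from the literature, not something the paper establishes, so there is no approach of the paper's own to compare against.

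Since you have gone further and sketched a genuine proof, a brief assessment: both directions are handled soundly. In the reverse direction your treatment of the boundary case is the right idea; the key fact you invoke---that for $y\in\intdom f$ and $x\in\overline{\dom f}$ one has $(1-t)y+tx\in\intdom f$ for $t\in[0,1)$, together with left-continuity of the convex function $t\mapsto f((1-t)y+tx)$ at $t=1$ by lower semi-continuity---is exactly what makes the liminf and limsup arguments go through. In the forward direction your route (uniform two-sided bounds on a slightly enlarged compact via the simplex/convexity argument for the upper bound and a compactness/liminf contradiction for the lower bound, then uniform Lipschitz, then pointwise-to-uniform) is the standard one and is not circular, since the bounds are obtained before and independently of the Lipschitz transfer step. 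One small caveat worth noting if you write this out in full: the statement as given does not assume $f$ is proper, so you should either observe that the improper cases ($f\equiv+\infty$ or $f$ taking the value $-\infty$, which for lsc convex $f$ forces $f\equiv-\infty$ on $\intdom f$) are trivial, or restrict to proper $f$ explicitly before invoking finiteness of $f$ on $\intdom f$ in the lower-bound step.
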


\begin{proof}
See Theorem 7.17 of Rockafellar and Wets \cite{RockWets}.
\end{proof}

Under certain mild conditions, the epi-convergence of a sequence of
convex functions is equivalent to the epi-convergence of the corresponding
sequence of conjugate functions. Indeed, the following theorem is a natural 
motivation for considering epi-convergence as opposed to pointwise convergence.
\begin{thm}
\label{conjConvThm}
Let $f_{i}$ and $f$ be lower semi-continuous, proper convex functions
on $\mathbb{R}^{n}$. Then the $f_{i}$ epi-converge to $f$ if and
only if the $f_{i}^{*}$ epi-converge to $f^{*}$.\end{thm}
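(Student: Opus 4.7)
The plan is to verify the definition of epi-convergence directly for the conjugate sequence, checking the $\liminf$ and $\limsup$ conditions of Definition \ref{def:epiConv} separately, and to exploit the symmetry afforded by Theorem \ref{doubleDual} (which tells us $f^{**}=f$ and $f_i^{**}=f_i$) so that one direction of the equivalence suffices to obtain the other. Specifically, if we establish the implication ``$f_i \overset{\mathrm{e}}{\ra} f \Rightarrow f_i^* \overset{\mathrm{e}}{\ra} f^*$'' for arbitrary l.s.c.\ proper convex $f_i, f$, then applying it to the conjugate sequence and invoking the involutive property yields the converse.

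For the $\liminf$ inequality, I would argue as follows. Fix $y\in\R^n$ and an arbitrary sequence $y_i\to y$. I want to show $\liminf_i f_i^*(y_i) \geq f^*(y)$. For each $x\in\R^n$, the epi-convergence hypothesis supplies (via the $\limsup$ condition in Definition \ref{def:epiConv}) a sequence $x_i \to x$ with $\limsup_i f_i(x_i) \le f(x)$. From the definition $f_i^*(y_i) \geq \langle x_i,y_i\rangle - f_i(x_i)$, passing to the $\liminf$ yields
\[
\liminf_i f_i^*(y_i) \;\geq\; \langle x,y\rangle - \limsup_i f_i(x_i) \;\geq\; \langle x,y\rangle - f(x).
\]
Taking the supremum over $x$ gives exactly $\liminf_i f_i^*(y_i) \geq f^*(y)$.

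The more delicate direction is the $\limsup$ inequality: given $y$, I must construct a sequence $y_i\to y$ with $\limsup_i f_i^*(y_i)\le f^*(y)$. If $f^*(y)=+\infty$ the constant sequence $y_i\equiv y$ works trivially, so assume $f^*(y)<\infty$. The strategy is to use the Moreau--Yosida regularization $e_\lambda f(x) := \inf_z\{ f(z) + \tfrac{1}{2\lambda}\norm{z-x}^2\}$, which satisfies the conjugacy identity $(e_\lambda f)^* = f^* + \tfrac{\lambda}{2}\norm{\cdot}^2$ and which converts epi-convergence of $f_i$ into locally uniform convergence of $e_\lambda f_i$ (a quantitative counterpart to Theorem \ref{epiConvThm}). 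Combined with an inf-convolution/quadratic-perturbation argument one can produce a minimizing pair for the conjugate of $f$ at $y$ and, via the epi-convergence of $f_i$, perturb it to obtain approximate minimizers for $f_i^*$ at suitably chosen $y_i \to y$. Alternatively, if $y\in\mathrm{int}\,\dom f^*$, one uses Theorem \ref{invSubThm} to pick $x$ with $y\in\partial f(x)$ and then appeals to the graphical convergence of the subdifferentials $\partial f_i \to \partial f$ (a consequence of epi-convergence in the convex setting) to obtain pairs $(x_i,y_i)$ with $y_i\in\partial f_i(x_i)$, $x_i\to x$, $y_i\to y$; then $f_i^*(y_i) = \langle x_i,y_i\rangle - f_i(x_i)$ by the Young--Fenchel equality, and the right-hand side passes to the desired limit.

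The hard part is precisely this $\limsup$ construction: naive attempts to take $y_i=y$ fail because $y$ may lie outside $\dom f_i^*$ for all $i$, and one cannot directly extract the attaining $x$ from a supremum. Moreau regularization smooths out this obstruction by making both sides finite and well-behaved, at the cost of introducing a quadratic parameter that must be sent to zero after the $i$-limit. Carrying out this diagonal argument carefully, and handling separately the cases $y\in\mathrm{int}\,\dom f^*$, $y\in\partial(\dom f^*)$, and $y\notin\dom f^*$, completes the proof. Since all of this is developed in detail in Chapters 7 and 11 of Rockafellar and Wets \cite{RockWets}, the cleanest presentation in this paper is simply to cite that reference after outlining the $\liminf$ half, which is elementary.
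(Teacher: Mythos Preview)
Your proposal is correct and ultimately lands on the same approach as the paper: the paper's entire proof is the one-line citation ``See Theorem 11.34 of Rockafellar and Wets \cite{RockWets},'' and you arrive at the same recommendation after a sound sketch of the $\liminf$/$\limsup$ argument. Your additional outline (the elementary $\liminf$ half and the Moreau--Yosida or subdifferential-graph route for the $\limsup$ half, plus the duality symmetry via Theorem \ref{doubleDual}) is accurate and adds useful context, but goes beyond what the paper itself provides.
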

\begin{proof}
See Theorem 11.34 of Rockafellar and Wets \cite{RockWets}.
\end{proof}

Finally, under certain circumstances one can upgrade mere pointwise 
convergence of convex functions to uniform convergence on compact subsets:
\begin{thm}
\label{thm:pointwiseUniform}
Let $f_{i}$ and $f$ be finite convex functions on an open convex set $O \subset \R^n$, 
and suppose that $f_i \ra f$ pointwise on $O$. Then $f_i$ converges uniformly 
to $f$ on every compact subset of $O$.
\end{thm}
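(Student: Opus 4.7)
The plan is to derive a uniform local Lipschitz bound on $\{f_i\}$ from the pointwise convergence, and then to combine this equi-Lipschitz property with pointwise convergence on a finite cover to obtain uniform convergence on any compact subset $K \subset O$. A subsidiary benefit is that $f$ itself inherits the same local Lipschitz constant and is therefore continuous, which is implicit in the uniform convergence conclusion.

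First I would show that for any compact $K \subset O$ there is an open neighborhood $U$ with $\overline{U} \subset O$ on which $\{f_i\}$ is uniformly bounded. The upper bound is easy: cover $K$ by finitely many closed simplices contained in $O$, and on each simplex $\Delta = \mathrm{conv}(v_0, \ldots, v_n)$ use convexity to get $f_i(x) \leq \sum_j \lambda_j f_i(v_j)$ whenever $x = \sum_j \lambda_j v_j$; pointwise convergence at the finitely many vertices gives $\sup_{i,j} f_i(v_j) < \infty$. The lower bound requires using convexity in the opposite direction: fix an interior point $x_0$ of an enclosing simplex, and for each $x$ in the simplex write $x_0 = \lambda x + (1-\lambda) y$ with $y$ on the boundary and $\lambda$ bounded below uniformly in $x$; then convexity yields $f_i(x) \geq \lambda^{-1}\bigl[f_i(x_0) - (1-\lambda) f_i(y)\bigr]$, where the right-hand side is uniformly bounded below thanks to pointwise convergence at $x_0$ and the uniform upper bound on $f_i(y)$. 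Next, invoking the classical estimate that a convex function bounded in absolute value by $M$ on $B(x_0, 2r)$ is $(2M/r)$-Lipschitz on $B(x_0, r)$ (proved by writing chords across the ball and applying the convexity inequality), one obtains a common Lipschitz constant $L$ for all $f_i$ on a possibly smaller neighborhood of $K$; taking pointwise limits propagates the same bound to $f$.

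Finally, given $\varepsilon > 0$, I would cover $K$ by finitely many balls $B(x_k, \varepsilon/(3L))$ with $x_1,\ldots,x_m \in K$. Pointwise convergence at the finite set $\{x_1,\ldots,x_m\}$ produces an $N$ such that $|f_i(x_k) - f(x_k)| < \varepsilon/3$ for all $k$ and all $i \geq N$. For arbitrary $x \in K$ one picks $k$ with $|x-x_k| < \varepsilon/(3L)$ and applies the triangle inequality together with the equi-Lipschitz bound to conclude $|f_i(x) - f(x)| < \varepsilon$, completing the proof. The main technical obstacle is the uniform lower bound in the first step, since Jensen's inequality only naturally provides upper bounds on a convex function; the trick of expressing a fixed interior point as a convex combination involving the variable $x$ is what converts the upper bound into a lower bound. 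Everything else reduces to the standard observation that pointwise convergence plus equicontinuity on a compact set implies uniform convergence.
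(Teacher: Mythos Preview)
Your argument is correct and follows the standard route: uniform local boundedness of the family $\{f_i\}$ (upper bound via Jensen on a simplex cover, lower bound via the ``reflection through an interior point'' trick), then a uniform Lipschitz constant from the bound-on-a-ball estimate, then a finite-cover plus triangle-inequality argument. This is essentially the proof one finds in Rockafellar--Wets.

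The paper itself does not give a proof of this statement at all; it simply cites Corollary~7.18 of Rockafellar and Wets~\cite{RockWets}. So there is no approach to compare against---your proposal supplies exactly the self-contained argument that the paper omits. One minor remark: in your lower-bound step it is worth being explicit that $\lambda$ is bounded away from zero \emph{because} $x_0$ is at positive distance from the boundary of the simplex while the simplex has bounded diameter; and that the resulting lower bound $\lambda^{-1}[f_i(x_0) - (1-\lambda) f_i(y)]$ is finite because $\lambda^{-1}$ is bounded above and the bracket is bounded below (using boundedness of $f_i(x_0)$ and the already-established upper bound on $f_i(y)$). You clearly have this in mind, but spelling it out would make the write-up airtight.
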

\begin{proof}
See Corollary 7.18 of Rockafellar and Wets \cite{RockWets}.
\end{proof}

\section{Classical results on weak convergence of probability measures}\label{sec:weakConv}
For completeness we recall here several classical results on the weak convergence 
of measures. For reference, see, e.g., Billingsley \cite{Billingsley}.

Let $S$ be a metric space, and let $\mc{P}(S)$ denote the set of 
probability measures on $S$ (equipped with the Borel $\sigma$-algebra). 
We say that a sequence $\mu_k \in \mc{P}(S)$ converges 
weakly to $\mu \in \mc{P}(S)$, denoted $\mu_k \Rightarrow \mu$, if 
$\int f \, \ud \mu_k \ra \int f\,\ud \mu$ as $k\ra \infty$ for all bounded, continuous 
functions $f : S \ra \R$. A number of equivalent characterizations of 
weak convergence are given in the following result, often known as 
the Portmanteau theorem:

\begin{thm}[Portmanteau]
\label{thm:portmanteau}
Let $S$ be a metric space, and let $\mu_k, \mu \in \mc{P}(S)$.
The following are all equivalent conditions for the weak convergence 
$\mu_k \Rightarrow \mu$: 
\begin{enumerate}
\item $\lim_{k\ra \infty} \int f \, \ud \mu_k = \int f\,\ud \mu$ for all bounded, continuous 
functions $f : S \ra \R$.
\item $\liminf_{k\ra \infty} \int f \, \ud \mu_k \geq \int f\,\ud \mu$ for all lower semi-continuous
functions $f : S \ra \R$ bounded from below.
\item $\liminf_{k\ra \infty} \mu_k(U) \geq \mu( U)$ for all open sets $U \subset S$.
\end{enumerate}
\end{thm}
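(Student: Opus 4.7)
The plan is to establish the equivalence by showing $(2) \Rightarrow (1) \Rightarrow (3) \Rightarrow (2)$, a standard cycle whose main steps are a symmetry argument, a lower approximation of open-set indicators by continuous functions, and a layer-cake reduction.

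First, $(2) \Rightarrow (1)$ is essentially a symmetry trick: if $f:S\to\R$ is continuous and bounded, then both $f$ and $-f$ are lower semi-continuous and bounded below, so applying (2) twice yields
\[
\int f\,\ud\mu \le \liminf_{k\to\infty}\int f\,\ud\mu_k \le \limsup_{k\to\infty}\int f\,\ud\mu_k \le \int f\,\ud\mu,
\]
which forces equality throughout. No real obstacle here.

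Next, for $(1) \Rightarrow (3)$, fix an open set $U\subset S$. The idea is to approximate $\mathbf{1}_U$ from below by a monotone sequence of bounded continuous functions. Concretely, I would set
\[
f_n(x) := \min\bigl(n\cdot d(x, S\setminus U),\,1\bigr),
\]
with the convention $d(x,\emptyset) = +\infty$ when $U = S$. Then $0 \le f_n \le \mathbf{1}_U$, each $f_n$ is continuous (as $d(\,\cdot\,,S\setminus U)$ is $1$-Lipschitz), and $f_n \uparrow \mathbf{1}_U$ pointwise because $U$ is open (so every $x\in U$ has positive distance to $S\setminus U$). From $f_n \le \mathbf{1}_U$ and (1),
\[
\int f_n\,\ud\mu \;=\; \lim_{k\to\infty} \int f_n\,\ud\mu_k \;\le\; \liminf_{k\to\infty} \mu_k(U),
\]
and sending $n\to\infty$ via monotone convergence on the left gives $\mu(U) \le \liminf_k \mu_k(U)$.

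Finally, $(3) \Rightarrow (2)$ is obtained through the layer-cake representation. Given $f$ lower semi-continuous and bounded below by some $-M$, replace $f$ by $g := f + M \ge 0$ (noting the shift by the constant $M$ affects both sides of the desired inequality identically since the $\mu_k$ and $\mu$ are probability measures). For each $t\ge 0$ the superlevel set $\{g > t\}$ is open by lower semi-continuity, so (3) applies:
\[
\mu(\{g > t\}) \;\le\; \liminf_{k\to\infty} \mu_k(\{g > t\}).
\]
By the layer-cake formula for nonnegative measurable functions and Fatou's lemma,
\[
\int g\,\ud\mu \;=\; \int_0^\infty \mu(\{g>t\})\,\ud t \;\le\; \int_0^\infty \liminf_{k\to\infty} \mu_k(\{g>t\})\,\ud t \;\le\; \liminf_{k\to\infty}\int g\,\ud\mu_k,
\]
which is the desired inequality (also covering the case $\int f\,\ud\mu = +\infty$, which forces $\liminf_k \int f\,\ud\mu_k = +\infty$ as well). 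The step I expect to require the most care is the approximation in $(1) \Rightarrow (3)$, where one must verify that the distance-based $f_n$ actually increase pointwise to $\mathbf{1}_U$ on all of $S$ (including the edge cases $U = \emptyset$ and $U = S$) and that the application of monotone convergence is legitimate; once this is set up cleanly, the rest of the cycle is routine.
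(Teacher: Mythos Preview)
Your proof is correct and follows the standard cycle $(2)\Rightarrow(1)\Rightarrow(3)\Rightarrow(2)$ found in most textbooks. Note, however, that the paper does not actually prove this theorem: it is stated in the appendix as a classical background result with a reference to Billingsley, so there is no ``paper's own proof'' to compare against. Your argument would serve as a perfectly adequate self-contained proof if one were desired.
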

\begin{rem}
There are several other equivalent conditions often included in the 
statement of this result.
\end{rem}

A condition for extracting a weakly convergent subsequence, as guaranteed 
by Prokhorov's theorem below, is given by the following notion of tightness:
\begin{defn}
\label{def:tightness}
Let $S$ be a metric space equipped with the Borel $\sigma$-algebra. A set 
$\mc{C}$ of measures on $S$ is called $\emph{tight}$ if for any $\ve > 0$, there exists a 
compact subset $K \subset S$ such that $\mu(K) > 1-\ve$ for all $\mu \in \mc{C}$. 
A sequence of measures is called tight if the set of terms in the sequence is tight.
\end{defn}

\begin{thm}[Prokhorov]
\label{thm:prokhorov}
Let $S$ be a metric space equipped 
with the Borel $\sigma$-algebra. Then 
any tight sequence in $\mc{P}(S)$ admits a weakly convergent subsequence.
\end{thm}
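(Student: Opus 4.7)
The plan is to adapt the classical proof of Prokhorov's theorem via a diagonal extraction combined with the Riesz representation theorem, exploiting that every compact metric space is separable. Fix a tight sequence $\{\mu_n\} \subset \mathcal{P}(S)$. By tightness (Definition \ref{def:tightness}), for each $m \in \mathbb{N}$ there exists a compact $K_m \subset S$ with $\mu_n(K_m) \geq 1 - 1/m$ for all $n$; by replacing $K_m$ with $\bigcup_{j \leq m} K_j$ we may arrange $K_1 \subset K_2 \subset \cdots$. Set $K_\infty := \overline{\bigcup_m K_m}$, which is a separable (and $\sigma$-compact) closed subspace of $S$ containing essentially all mass of each $\mu_n$.

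First, on each compact metric space $K_m$, the Banach space $C(K_m)$ is separable; pick a countable dense set $\{f_j^{(m)}\} \subset C(K_m)$ and extend each $f_j^{(m)}$ to a bounded continuous function on $S$ (e.g.\ via Tietze or simple truncation after composition with the distance to $K_m$). Enumerate the resulting countable collection as $\{g_k\}_{k\geq 1}$. Since $\bigl|\int g_k\,d\mu_n\bigr|$ is uniformly bounded in $n$ for each fixed $k$, a standard Cantor diagonal extraction produces a subsequence $\mu_{n_\ell}$ along which $\int g_k\,d\mu_{n_\ell}$ converges for every $k$.

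Next, I would assemble the limit measure. Fix $m$ and let $\Lambda_m : C(K_m) \to \mathbb{R}$ be defined on the dense set by $\Lambda_m(f_j^{(m)}) := \lim_\ell \int \tilde f_j^{(m)}\,d\mu_{n_\ell}$, where $\tilde f_j^{(m)}$ is the chosen extension; a uniform estimate shows $\Lambda_m$ is a bounded positive linear functional that extends to all of $C(K_m)$, so by the Riesz--Markov--Kakutani representation theorem there exists a finite Borel measure $\nu_m$ on $K_m$ representing $\Lambda_m$. The nesting $K_m \subset K_{m+1}$ and the uniqueness in Riesz's theorem imply the $\nu_m$ are consistent in the sense that $\nu_{m+1}|_{K_m}$ agrees with $\nu_m$ on a dense subset of $C_c(\mathrm{int}\,K_m)$; a monotone convergence argument promotes this to a single Borel measure $\mu$ on $S$ supported on $K_\infty$, with $\mu(K_m) = \lim_m \nu_m(K_m)$. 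Tightness then forces $\mu(S) = 1$: from $\mu_{n_\ell}(K_m) \geq 1 - 1/m$ and the convergence on continuous bump approximations one deduces $\mu(K_m) \geq 1 - 1/m$, and $\mu(S) \leq 1$ holds automatically.

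Finally, I would verify weak convergence $\mu_{n_\ell} \Rightarrow \mu$ using the Portmanteau criterion (Theorem \ref{thm:portmanteau}). For a bounded continuous $f : S \to \mathbb{R}$ with $\|f\|_\infty \leq M$, splitting $\int f\,d\mu_{n_\ell} = \int_{K_m} f\,d\mu_{n_\ell} + \int_{S\setminus K_m} f\,d\mu_{n_\ell}$ bounds the tail by $M/m$ uniformly in $\ell$, while on $K_m$ one approximates $f|_{K_m}$ uniformly by elements of the dense set $\{f_j^{(m)}\}$ and uses the convergence along the diagonal subsequence. Sending $m \to \infty$ and the approximation error to $0$ concludes that $\int f\,d\mu_{n_\ell} \to \int f\,d\mu$. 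The main obstacle is the bookkeeping required to patch together the Riesz representations on the increasing family $\{K_m\}$ into a single Borel probability measure on $S$ and to control the tail contributions uniformly in the extracted subsequence; everything else is standard separability and approximation. Since the result is a textbook theorem, one may of course simply cite Billingsley \cite{Billingsley} in lieu of writing out these steps.
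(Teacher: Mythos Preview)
The paper does not actually prove Prokhorov's theorem; Appendix~\ref{sec:weakConv} merely states it as a classical result and refers the reader to Billingsley for details. Your closing sentence correctly anticipates this, and for the purposes of this paper a citation is all that is required.

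Since you have nonetheless sketched a proof: the overall strategy (diagonal extraction over countable dense subsets of the separable spaces $C(K_m)$, Riesz representation, patching the resulting measures) is the standard one. The one genuine gap in your sketch is the definition of $\Lambda_m$: as written, $\Lambda_m(f_j^{(m)}) := \lim_\ell \int \tilde f_j^{(m)}\,d\mu_{n_\ell}$ depends on the choice of extension $\tilde f_j^{(m)}$, because the $\mu_{n_\ell}$ carry mass up to $1/m$ outside $K_m$, and two extensions agreeing on $K_m$ can differ there. The cleaner route is to work directly with the restricted sub-probability measures $\mu_{n_\ell}|_{K_m}$ as elements of the unit ball of $C(K_m)^*$ and extract via Banach--Alaoglu (weak-$*$ sequentially compact here since $C(K_m)$ is separable); the compatibility of the limits $\nu_m$ across $m$ then follows from Portmanteau applied on the compact $K_{m'}$. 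Your own remark about ``bookkeeping'' is apt: this is exactly where the bookkeeping lives.
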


\section{Proof of Lemmas}\label{sec:appendix}

\subsection{Lemma~\ref{lem:entropyMoment}}

\begin{proof}
Suppose $\mu \ll \lambda$ is in $\mathcal{M}_2$ and write $\ud\mu =
\rho\,\ud x$ where $\rho$ is the probability density. Since $\mu \ll
\lambda$, $\mathrm{Cov}(\mu)$ must be positive definite. Let $\mu_G$ be
the Gaussian measure with the same mean and covariance as $\mu$, and let
$\rho_G$ be the corresponding probability density.
Then one can compute that 
\[
\int \rho \log \rho_G \,\ud x = -\frac{1}{2} \log\left( (2\pi e)^N \det \mathrm{Cov}(\mu) \right)
\]
(and in particular this integral is absolutely convergent). Now 
\[
\rho \log \rho = \rho \log \rho_G  + \rho \log \frac{\rho}{\rho_G}.
\]
The first term on the right-hand side of this equation is absolutely
integrable, and the integral of the second term exists (in particular,
the integral of the negative part of the second term is finite, and the value of the full
integral is in fact $-H_{\mu_G}(\mu)$). Therefore the integral $\int
\rho \log \rho \,\ud x \in (-\infty,\infty]$ exists.  Moreover
\[
H(\mu) = -\int \rho \log \rho \,\ud x = \frac{1}{2} \log\left( (2\pi
e)^N \det \mathrm{Cov}(\mu) \right) + H_{\mu_G}(\mu) \leq \frac12 \log\left( (2\pi e)^N \det \mathrm{Cov}(\mu) \right),
\]
with equality if and only if $\mu_G = \mu$.

To prove the second inequality in the statement of the lemma, define
$\overline{\mu}:=\int x\, \ud \mu$ to be the mean of $\mu$. Then $\mathrm{Cov}(\mu) = \mathcal{G}(\mu) - \overline{\mu}\,\overline{\mu}^T$, so in particular $\det \mathrm{Cov}(\mu) \leq \det \mathcal{G}(\mu)$, with equality if and only if $\overline{\mu} = 0$.
\end{proof}

\subsection{Lemma~\ref{lem:entropyUSC}}

\begin{proof}
Without loss of generality we can assume that $\mu_j = \rho_j \,\ud x$ for all $j$.

First, by the Portmanteau theorem for weak convergence of measures 
(Theorem \ref{thm:portmanteau}) we have, for  any $z\in\Rn$, that 
\begin{eqnarray*}
  z^{T}\mathcal{G}(\mu) z = \int(z^T x)^2 \,\ud\mu & \leq & \liminf_{j\ra \infty} \int(z^T x)^2 \,\ud\mu^{(j)} \\
& = & \liminf_{j\ra \infty} \int z^T xx^T z \,\ud\mu^{(j)}
 =  \liminf_{j\ra \infty} z^T \mathcal{G}(\mu_j) z \leq C \Vert z\Vert^2.
\end{eqnarray*}
It follows that $\mu \in \mathcal{M}_2$ (and moreover $\mathcal{G}(\mu) \preceq C\cdot I_n$).

Our goal is to put ourselves in a position to use the upper
semi-continuity (note our sign convention) of the \emph{relative}
entropy with respect to the topology of weak convergence (see Fact \ref{fact:relEnt}). 
Let $\beta > 0$, and let $Z_\beta = \int e^{-\beta \Vert x\Vert^2}\,\ud
x$. Let $\gamma_\beta$ be the Gaussian measure with density proportional to $e^{-\beta \Vert x\Vert^2}$. Then
\begin{eqnarray*}
H(\mu_j) & = & -\int \rho_j \log \rho_j \,\ud x \\
& = &  \log(Z_\beta) - \int \rho_j(x) \log \frac{\rho_j(x)}{\frac{1}{Z_\beta} e^{-\beta \Vert x\Vert^2 }} \,\ud x + \beta \int \rho_j(x) \Vert x\Vert ^2 \,\ud x \\
& = & \log(Z_\beta) + H_{\gamma_\beta} (\mu_j) + \beta \Tr[\mathcal{G}(\mu_j)].
\end{eqnarray*}

Then by the upper semi-continuity of the \emph{relative} entropy with respect to the topology of weak convergence, we have
\[
\limsup_{j\ra \infty} H(\mu_j) \leq \log(Z_\beta) + H_{\gamma_\beta} (\mu) + \beta C N = H(\mu) + \beta \left(C N - \Tr[\mathcal{G}(\mu)] \right).
\]
Since this inequality holds for any $\beta>0$, the lemma follows.
\end{proof}

\subsection{Fact \ref{fact:productEntropy}}
\begin{proof}
We can assume that $\mu$ is absolutely continuous with respect to the Lebesgue measure, i.e., has a density $\rho$ (otherwise $H(\mu) = -\infty$ and the inequality is trivial). It follows that $\mu_i := \pi_i \# \mu$ are absolutely continuous with respect to the Lebesgue measure, i.e., have densities $\rho_i$, for $i=1,2$. Let $x=(x_1,x_2)$ denote the splitting of 
$x\in \R^N$ according to the product structure $\Rn = \R^p \times \R^{N-p}$. Then using the fact that $\mu_1 \times \mu_2$ has density $\rho_1(x_1) \rho_2(x_2)$, one directly computes that
\begin{equation*}
\begin{split}
& H(\mu_1) + H(\mu_2) + H_{\mu_1 \times \mu_2} (\mu) \\
& \ \  =
\int \rho_1(x_1) \log \rho_1(x_1) \ud x_1 + 
\int \rho_2(x_2) \log \rho_2(x_2) \ud x_2 + \int \rho(x) \log \frac{\rho(x)}{\rho_1(x_1) \rho_2(x_2)} \ud x \\ 
& \ \  =
\int \rho(x) \log \rho_1(x_1) \ud x + 
\int \rho(x) \log \rho_2(x_2) \ud x + \int \rho(x) \log \frac{\rho(x)}{\rho_1(x_1) \rho_2(x_2)} \ud x \\ 
& \ \ = \int \rho(x) \log \rho(x) \ud x\\
& \ \ = H(\mu).
\end{split}
\end{equation*}
But by Fact \ref{fact:relEnt}, the relative entropy term 
is non-negative.
\end{proof}

\subsection{Lemma~\ref{lem:OmegaConcave}}

\begin{proof}
  Upper semi-continuity follows directly from Fatou's
  lemma. $\Omega$ is proper because its domain is nonempty and evidently $\Omega$ does not attain the value $+\infty$.

Now let $\theta\in[0,1]$ and $A_{1},A_{2}\in \dom \Omega$. Then

\begin{eqnarray*}
\Omega[\theta A_{1}+(1-\theta)A_{2}] & = & -\log\int_{\Rn}\left(e^{-\frac{1}{2}x^{T}A_{1}x-U(x)}\right)^{\theta}\left(e^{-\frac{1}{2}x^{T}A_{2}x-U(x)}\right)^{1-\theta}\,\ud x\\
 & \geq & -\log\left[\left(\int_{\Rn}e^{-\frac{1}{2}x^{T}A_{1}x-U(x)}\,\ud x\right)^{\theta}\left(\int_{\Rn}e^{-\frac{1}{2}x^{T}A_{2}x-U(x)}\,\ud x\right)^{1-\theta}\right]\\
 & = & \theta\Omega[A_{1}]+(1-\theta)\Omega[A_{2}],
\end{eqnarray*}
 where we have used H\"older's  inequality in the second step. This establishes concavity. Strict
concavity on $\dom\Omega$ follows from the following fact: H\"older's
inequality holds with equality in this scenario if and only
if $e^{-\frac{1}{2}x^{T}A_{1}x-U(x)}=e^{-\frac{1}{2}x^{T}A_{2}x-U(x)}$
for all $x$, i.e., if and only if $A_{1}=A_{2}$.

Lastly observe that since $\dom\Omega$ is an open set, for any $A\in \dom \Omega$, 
\[
\int_\Rn e^{\delta x^2} e^{-\frac{1}{2}x^{T}Ax-U(x)}\,\ud x < +\infty 
\]
for some $\delta > 0$. Now, for any polynomial $P$, there exists a constant $C$ such that for all $A'$ in a sufficiently small neighborhood of $A$, 
\[
P(x) e^{-\frac{1}{2}x^{T}A'x-U(x)} \leq C e^{\delta x^2} e^{-\frac{1}{2}x^{T}Ax-U(x)}.
\]
Since derivatives of all orders of the integrand in \eqref{eq:OmegaDef} are of the form \[
P(x) e^{-\frac{1}{2}x^{T}Ax-U(x)},
\]
differentiation under the integral is justified, and the smoothness result follows.
\end{proof}

\subsection{Lemma~\ref{lem:OmegaDual1}}

\begin{proof}
First assume $A \in \dom\Omega$, so $Z[A]<+\infty$. Let
$\mu\in\mathcal{M}_2$ and define $f(x):=\frac{1}{2}x^{T}Ax+U(x)$.  For
any $f$ such that $e^{-f}$ is integrable, define $\nu_f$ to be the
probability measure with density proportional to $e^{-f}$.
Then provided $\mu \ll \lambda$, 
\begin{equation}
  \label{eq:entropy}
  \begin{split}
\int f \,\ud\mu -H(\mu) & =  \Omega[A] -\int \log\left(\frac{1}{Z[A]} e^{-f} \right) \,\ud\mu -H(\mu) \\
& =  \Omega[A] + \int
\log\left(\frac{\ud\mu}{\ud\lambda}\right)-\log\left(\frac{\ud\nu_{f}}{\ud\lambda} \right) \,\ud\mu \\
& =  \Omega[A] + \int \log \frac{\ud\mu}{\ud\nu_f} \,\ud\mu \\
& =  \Omega[A] - H_{\nu_f}(\mu) \geq \Omega[A].
  \end{split}
\end{equation}
Since $\mu \in \mathcal{M}_2$, we have $H(\mu) < +\infty$ as discussed
in Remark \ref{rem:infSet}.
Careful observation reveals that manipulations are valid in the sense of
the extended real numbers even when $\int f\,\ud\mu = +\infty$. Moreover, $\mu \not\ll \lambda$ if
and only if $\mu \not\ll \nu_f$, in which case both sides of
\eqref{eq:entropy} are $+\infty$. Therefore \eqref{eq:entropy} holds for
all $\mu \in \mathcal{M}_2$.

For $A\in\dom\Omega$, \eqref{eq:entropy} establishes the `$\leq$' direction of \eqref{eq:omegaInf1}. For $A\notin \dom\Omega$, $\Omega[A]=-\infty$, so this direction is immediate.

Next suppose that $A \in \dom\Omega$. Since $\dom\Omega$ is open, it follows that $\nu_f \in \mathcal{M}_2$. From
\eqref{eq:entropy} and the inequality $-H_{\nu_f}(\mu)\geq 0$ (which
holds with equality if and only if $\mu = \nu_f$), it follows that
\eqref{eq:omegaInf1} holds. Moreover, that the infimum in
\eqref{eq:omegaInf1} is uniquely attained at $\mu = \nu_f$, i.e., at
$d\mu(x)=\frac{1}{Z[A]}e^{-\frac{1}{2}x^{T}Ax-U(x)}\,\ud x$.
\end{proof}

\subsection{Lemma~\ref{lem:Ffinite}}

\begin{proof}
  By definition $\mathcal{F}[G]=-\infty$ whenever $G\in\mathcal{S}^{N}\backslash\mathcal{S}_{+}^{N}$. Now we show that also 
$\mathcal{F}[G]=-\infty$ for $G$ on the boundary $\partial
\mathcal{S}_{+}^{N}$. This follows from the fact that for such $G$, any
$\mu\in \mathcal{G}^{-1}(G)$ is supported on a subspace of $\Rn$ of
positive codimension, i.e., not absolutely continuous with respect to
the Lebesgue measure, and therefore $H(\mu)=-\infty$. Moreover, since
such $\mu$ is in $\mathcal{M}_2$, we have (via the weak growth
condition) that $\int U\,\ud\mu \in (-\infty,\infty]$, so the expression
within the supremum of \eqref{eq:Fdef} is $-\infty$ for all $\mu\in
\mathcal{G}^{-1}(G)$.

Meanwhile, for $G\in S_{++}^{N}$, one can see that $\mathcal{F}[G]>-\infty$ by considering $\mu$ to be mean-zero with a compactly supported smooth density, linearly transformed to have the appropriate covariance $G$. For such $\mu$, both terms in the supremum are finite.

Moreover, for $G\in S_{++}^{N}$ we also have that $\mathcal{F}[G] < +\infty$. Indeed, for $\mu \in \mathcal{G}^{-1}(G)$, by Lemma \ref{lem:entropyMoment} we have $H(\mu) \leq \frac{1}{2}\log\left[(2\pi e)^n \det G \right]$. Since $\int U\,\ud\mu \geq -C_U (1+\mathrm{Tr}\,G)$, we have a finite upper bound on the expression in the supremum in \eqref{eq:Fdef}, which finishes the proof.
\end{proof}

\subsection{Lemma~\ref{lem:Fconcave}}

\begin{proof}
Let $G_{1},G_{2}\in \pd$, $\theta\in[0,1]$, and $\ve>0$.
Furthermore let $\mu_{1},\mu_{2}\in\mathcal{M}_2$ such that $\mu_{i}\in \mathcal{G}^{-1}(G_{i})$
and $\Psi[\mu_{i}]\geq\mathcal{F}[G_{i}]-\ve/2$. Then, noting that
$\theta\mu_{1}+(1-\theta)\mu_{2}\in \mathcal{G}^{-1}\left(\theta G_{1}+(1-\theta)G_{2}\right)$,
we observe 
\begin{eqnarray*}
\mathcal{F}[\theta G_{1}+(1-\theta)G_{2}] & = & \sup_{\mu\in \mathcal{G}^{-1}\left(\theta G_{1}+(1-\theta)G_{2}\right)}\Psi[\mu]\\
 & \geq & \Psi\left[\theta\mu_{1}+(1-\theta)\mu_{2}\right]\\
 & \geq & \theta\Psi[\mu_{1}]+(1-\theta)\Psi[\mu_{2}]\\
 & \geq & \theta\mathcal{F}[G_{1}]+(1-\theta)\mathcal{F}[G_{2}] - \ve,
\end{eqnarray*}
 where the penultimate step employs convexity of $\Psi$. Since $\ve$
was arbitrary, we have established concavity.

The fact that $\mathcal{F}$ is proper follows from Lemma
\ref{lem:Ffinite}.
Since $\mathcal{F}$ is concave, by Theorem \ref{convLipThm}
it is continuous on $\intdom \mathcal{F}$, which is in fact all of  $\dom \mathcal{F}$ 
by the weak growth assumption.
Thus we only need to check upper semi-continuity at points $G$ outside of $\dom \mathcal{F}$. At $G \notin \overline{\dom\mathcal{F}} = \psd$, upper semi-continuity is trivial because $\mathcal{F} \equiv -\infty$ on a neighborhood of $G$.
Therefore let $G\in \partial \pd$ and suppose that $G_k \in \pd$ such that 
$G_k \ra G$ as $k\ra
\infty$. We need to show that $\limsup_{k\ra\infty} \mathcal{F}[G_k] = -\infty$. Throwing out all $G_k \notin \pd$ from the sequence cannot increase the limit superior, so we can just assume that $G_k \in \pd$ for all $k$. Since $G \in \partial \pd$, we have $\det G = 0$, and therefore $\det G_k \ra 0$. By Lemma \ref{lem:Ffinite} we have
\[
\mathcal{F}[G_k] \leq \frac{1}{2}\log\left[(2\pi e)^n \det G_k \right] + C_U (1+\mathrm{Tr}\,G_k).
\]
Since the right-hand side of this inequality goes to $-\infty$ as $k\ra \infty$, the proof is complete.
\end{proof}

\subsection{Lemma~\ref{lem:Fdiff}}

\begin{proof}
Observe that
(1) $\Omega$ and $\mc{F}$ are upper semi-continuous, proper concave functions (by Lemmas \ref{lem:OmegaConcave} and \ref{lem:Fconcave}), (2) $\mathcal{F} = \Omega^*$ and 
$\Omega = \mc{F}^*$,
and (3) both $\dom \Omega$ and $\dom \mc{F} = \pd$ are open. Then 
the strict concavity and differentiability of $\mc{F}$ on $\dom \mc{F} = \pd$ 
follow directly from Theorem \ref{strictConvDiff2}.

Now we turn to proving $\smooth$-smoothness. Though infinite-order
differentiability is not typically discussed in convex
analysis, it can be obtained from infinite-order differentiability and strict convexity of the convex conjugate via the implicit function theorem. Indeed, define the smooth function $h:\mathcal{S}^n_{++}  \times \dom\Omega \ra \mathcal{S}^n$ by
\[
h(G,A)= \nabla\Omega[A] - G.
\]
Then
$Dh=\left(\ -I_{\mathcal{S}^n}\  \big\vert\  \nabla^2 \Omega \ \right)$, and since $\Omega$ is smooth and strictly concave, the right block is invertible for all $A,G$. Fix some $G'\in \mathcal{S}_{++}^n$, and let $A' = \nabla\mathcal{F}[G']\in \dom\Omega$, so $h(G',A') = 0$.
Then the implicit function theorem gives the existence of a smooth function $\phi $ on a neighborhood $\mathcal{V}\subset \mathcal{S}^n_{++}$ of $G'$ such that $h(G,\phi(G))=0$ for all $G\in\mathcal{V}$. But this means precisely that $\phi = \nabla\mathcal{F}$, hence in particular $\nabla\mathcal{F}$ is smooth at $G'$.
\end{proof}

\subsection{Lemma~\ref{lem:Aeps}}

\begin{proof}

Write
\[
Z[A,\ve U] = \int e^{-\frac12 x^T A x - \ve U(x)} \ud x.
\]
We want to show that as $\ve \ra 0^+$, 
$Z[\,\cdot\,,\ve U]$ epi-converges (see Definition \ref{def:epiConv}) to $Z[\,\cdot\,,\ve U]$. 
If so, then $-\Omega[\,\cdot\,,\ve U]$ epi-converges $-\Omega[\,\cdot\,,0]$, and 
Theorems \ref{conjConvThm} and \ref{epiConvThm}
yield in particular that 
$\mc{F}[\,\cdot\,,\ve U] \ra \mc{F}[\,\cdot\,,0]$ pointwise on $\pd$ as $\ve \ra 0^+$. 
Then by Theorem \ref{subgradConvThm} we have the pointwise convergence of the 
gradients on $\pd$, i.e., 
$A[G,\ve U] \ra A[G,0] =  G^{-1}$ as $\ve \ra 0^+$ 
for $G\in \pd$.

Thus it remains to show that $Z[\,\cdot\,,\ve U]$ epi-converges to $Z[\,\cdot\,,\ve U]$. 
The first of the conditions in Definition \ref{def:epiConv} follows immediately from Fatou's 
lemma, so we need only show that for any $A \in \symm$, there exists a sequence 
$A_\ve \ra A$ such that  
\[
\limsup_{\ve\ra 0^+} Z [A_\ve, \ve U ] \leq  Z_\ve [A, 0 ]
\]
In particular, it suffices to show that 
\begin{equation}
\label{eqn:weakGrowthZIneq}
\limsup_{\ve\ra 0^+} Z [A, \ve U ] \leq  Z_\ve [A, 0].
\end{equation}
For $A \notin \pd$, the righthand side is $+\infty$, so the inequality holds trivially. 

Thus assume $A \in \pd$. 
By the weak growth condition, we can write $U(x) = \widetilde{U}(x) - \lambda - \lambda\Vert x\Vert^2$, 
where $C>0$ and $\widetilde{U} \geq 0$. Then 
\[
Z[A,\ve U] =  \int e^{\ve\lambda} e^{-\frac12 x^T (A - \ve \lambda) x - \ve \widetilde{U}(x)} \ud x 
\leq \int e^{\ve\lambda} e^{-\frac12 x^T (A - \ve \lambda) x} \ud x, 
\]
and evidently the righthand side converges to $Z[A,0]$ by dominated convergence.
\end{proof}

\subsection{Lemma~\ref{lem:epsCts}}

\begin{proof}
Let $G\in\pd$. 
Recall Eq.~\eqref{eqn:weakGrowthZIneq} 
from the proof of Lemma \ref{lem:Aeps}. From this inequality, it follows 
that there exists $\tau > 0$ and an open neighborhood  
$\mathcal{N}$ of $G^{-1}$ in $\pd$
such that $A \in \dom \Omega[\,\cdot\,,\ve U]$ for all 
$(\ve , A) \in (0,\tau) \times \mathcal{N}$.

Now consider $\hat{\ve} > 0$ sufficiently small so that $\hat{\ve} < \tau$ and  
$\hat{A} := A_G(\hat{\ve}) \in \mc{N}$ (possible by Lemma \ref{lem:Aeps}). Define the smooth function $h:(0,\tau) \times \mc{N} \ra \symm$ by
\[
h(\ve,A)= \nabla_A \Omega[A,\ve U] - G.
\]
Then
$Dh(\ve,A) =\left(\ * \  \big\vert\  \nabla^2_A \Omega[A,\ve U] \ \right)$, and since $\Omega[\,\cdot\,,\ve U]$ is smooth and strictly concave, the right block is invertible for all $\ve, A$. Moreover, we have $h(\hat{\ve},\hat{A}) = 0$ by construction.
Then the implicit function theorem gives the existence of a smooth
function $\phi $ on a neighborhood $I$ of $\hat{\ve}$ such that
$h(\ve,\phi(\ve))=0$ for all $\ve \in I$. But this means precisely that
$\phi = A_G$. The implicit function theorem then also says that 
\begin{equation}
\label{eq:iftRecursive}
A_G '(\ve) = -(\nabla^2_A \Omega[A_G(\ve),\ve U])^{-1} \frac{\partial h}{\partial \ve}(\ve, A_G(\ve)) 
\end{equation}
for all $\ve \in I$,
where $A_G'$ denotes the ordinary derivative of the function $A_G$ of a single variable.
In particular Eq.~\eqref{eq:iftRecursive} holds at $\ve = \hat{\ve}$. But since $\hat{\ve}$ was arbitrary 
(beyond being taken sufficiently small), it follows that Eq.~\eqref{eq:iftRecursive} simply holds 
for all $\ve > 0$ sufficiently small.

We want to show that all derivatives of $A_G:(0,\infty) \ra \symm$ extend continuously to $[0,\infty)$. 
Starting with $A_G'$, we can examine these functions by taking further derivatives on the righthand side
of Eq.~\eqref{eq:iftRecursive}. The result will be an expression involving integrals of the form 
\[
\int P(x,U(x))\, e^{-\frac12 x^T A_G(\ve) x - \ve U(x)} \, \ud x,
\]
where $P$ is some polynomial, and it suffices to show that such integrals converge to their 
desired limits 
\[
\int P(x,U(x))\, e^{-\frac12 x^T G^{-1} x} \, \ud x.
\]
The argument is by dominated convergence. First observe that from the 
at-most-exponential growth assumption (Assumption \ref{assumption:atMostExp}), 
it follows that there exist $a,b>0$ such that $ \vert P(x,U(x)) \vert \leq a e^{b \Vert x\Vert}$ 
for all $x$.
 As in the proof of Lemma \ref{lem:Aeps}, 
write 
$U(x) = \widetilde{U}(x) - \lambda - \lambda\Vert x\Vert^2$, 
where $C>0$ and $\widetilde{U} \geq 0$. Then 
\begin{eqnarray*}
\vert P(x,U(x))\, e^{-\frac12 x^T A_G(\ve) x - \ve U(x)} \vert 
& \leq &  \vert P(x,U(x)) \vert \, e^{\ve\lambda} e^{-\frac12 x^T (A_G(\ve) - \ve \lambda) x - \ve \widetilde{U}(x)} \\ 
& \leq &  a e^{b \Vert x \Vert} e^{\ve \lambda} e^{-\frac12 x^T (A_G(\ve) - \ve \lambda) x}.
\end{eqnarray*}
Then for all $\ve > 0$ small enough such that $\ve < 1$ and $A_G(\ve) - \ve \lambda \succ \frac12 G^{-1}$, 
we see that the absolute value of the integrand is bounded uniformly by 
\[
a e^{b \Vert x \Vert} e^{ \lambda} e^{-\frac14 x^T G^{-1} x},
\]
which is integrable. This completes the dominated convergence argument, and 
we conclude that all derivatives of $A_G$ extend continuously to $[0,\infty)$.

Next we aim to use the preceding to show that all derivatives of $\Phi_G$ and $\Sigma_G$ 
also extend continuously to $[0,\infty)$.

To this end, recall the Dyson equation
\[
\Sigma_G = A_G - G^{-1},
\]
which requires that the desired extension property of $\Sigma_G$ is equivalent to that of $A_G$, 
which we have already proved.

Now for any $\ve >0$, we have 
\begin{eqnarray*}
\Phi_G (\ve) & = & 2\mc{F}[G,\ve U] - \Tr \log G - N\log(2\pi e) \\
& = &  \Tr[A_G (\ve) G] -2\Omega[A_G (\ve),\ve U] - \Tr \log G - N\log(2\pi e)
\end{eqnarray*}
by Legendre duality, 
from which it follows from our extension property for $A_G$, together 
with the arguments used to establish it, that all derivatives of $\Phi_G$ extend continuously to $[0,\infty)$.
\end{proof}

\subsection{Lemma~\ref{lem:Gsim}}\label{app:Gsim}

\begin{proof}
Based on Eqs.~\eqref{eqn:boldProp} and 
\eqref{eqn:boldProp2}, we want to show that $G[A^{(M)}(\ve), U_\ve^{(M)}] \sim G[A^{(M)}(\ve) , \ve U]$. 
As a first step, we aim to show that $Z[A^{(M)}(\ve), U_\ve^{(M)}] \sim Z[A^{(M)}(\ve) , \ve U]$. 
Indeed, we can write
\begin{eqnarray}
& &Z[A^{(M)}(\ve) , \ve U] - Z[A^{(M)}(\ve), U_\ve^{(M)}] \nonumber
\\
& &  \quad \quad \quad =\  \  \int e^{-\frac12 x^T A^{(M)}(\ve) x - \ve U(x)} \left(1 - e^{- \frac{1}{2} x^T \left[\Sigma_G(\ve) - \Sigma_G^{(\leq M)}(\ve) \right] x}  \right) \ud x. \label{eqn:ZSigmaInt}
\end{eqnarray}
We can choose $C$ such that 
\[
-C \ve^{M+1} \preceq \Sigma_G(\ve) - \Sigma_G^{(\leq M)}(\ve) \preceq C \ve^{M+1}
\]
for all $\ve>0$ sufficiently small.

Now let $R(\ve) = \ve^{-p/2}$ for $p\in (0,1)$. We split the integral in 
\eqref{eqn:ZSigmaInt} into a part over $B_{R(\ve)}(0)$ and another 
part over the complement. The integrand is dominated by $e^{- \delta x^T x}$ 
for some $\delta$ uniform in $\ve$, the integral of which over the complement 
of $B_{R(\ve)}(0)$ decays super-algebraically as $\ve \ra 0$, so we can neglect this 
contribution.

Meanwhile, for $x\in B_{R(\ve)}(0)$, we have 
\[
\left\vert x^T \left[\Sigma_G(\ve) - \Sigma_G^{(\leq M)}(\ve) \right] x \right\vert 
\leq C \ve^{M+1-p}, 
\]
hence there exists $C'$ such that 
\[
\left\vert 1 - e^{- \frac{1}{2} x^T \left[\Sigma_G(\ve) - \Sigma_G^{(\leq M)}(\ve) \right] x}  \right\vert \leq C' \ve^{M+1-p}
\]
for all $x\in B_{R(\ve)}(0)$. Combining with \eqref{eqn:ZSigmaInt} and 
dominated convergence, we have established $Z[A^{(M)}(\ve), U_\ve^{(M)}] \sim Z[A^{(M)}(\ve) , \ve U]$.

This result, together, together with analogous arguments applied to integrals of the form 
\[
\int x_i x_j \,e^{-\frac12 x^T A^{(M)}(\ve) x - \ve U(x)} \left(1 - e^{- \frac{1}{2} x^T \left[\Sigma_G(\ve) - \Sigma_G^{(\leq M)}(\ve) \right] x}  \right) \ud x,
\]
yields $G[A^{(M)}(\ve), U_\ve^{(M)}] \sim G[A^{(M)}(\ve) , \ve U]$.
\end{proof}

\subsection{Lemma~\ref{lem:Ftech}}

\begin{proof}
For convenience, we define
\[
\mathcal{F}_{c} [G] := \sup_{\mu \in \mathcal{G}^{-1}(G)\cap \mathcal{M}_c} \left[H(\mu) - \int U\,\ud\mu  \right].
\]
Evidently $\mathcal{F}_c \leq \mathcal{F}$ and $\mathcal{F}_{c}[G] =
-\infty$ if $G \notin \pd$, so we can restrict attention to $G \in \pd$.

Fix $\ve > 0$. Let $G\in \pd$, so $\mathcal{F}[G]$ is finite, and let $\mu \in \mathcal{M}_2$ such that
\[
H(\mu) - \int U \,\ud\mu \geq \mathcal{F}[G] - \ve/2.
\]
In particular, $H(\mu) \neq -\infty$, so $d\mu = \rho\,\ud x$ for some density $\rho$. Then consider the measure $\mu_R \in \mathcal{M}_c (R)$ given by density $\rho_R := Z_R ^{-1}\cdot \rho\cdot \chi_R$, where $\chi_R$ is the indicator function for $B_R (0)$ and $Z_R = \int_{B_R (0)} \rho \,\ud x$. By monotone convergence, $Z_R \ra 1$. 

Unfortunately we cannot expect $\mc{G}(\mu_R) = G$, but we do have $\mc{G}(\mu_R) \ra G$ (following from 
dominated convergence, together with the finite second moments of $\mu$). We then want to modify $\mu_R$ (keeping its support compact) to construct a nearby measure with the 
correct second moments. 

To this end let $G_R = \tau_R [G - \mc{G}(\mu_R)] + \mc{G}(\mu_R)$, where $\tau_R > 1$ is chosen so that 
$\tau_R \ra +\infty$ and the eigenvalues of $G_R$ remain uniformly bounded away from zero and infinity (possible since $\mc{G}(\mu_R) \ra G$). Note that we have
$G = \tau_R^{-1} G_R + (1 - \tau_R^{-1}) \mc{G}(\mu_R)$.

Now let $\pi \in \mc{M}_2$ be any compactly supported measure with a density and finite entropy, and let $\pi_R = T_R \# \pi$, where 
$T_R$ is a linear transformation chosen so that $\mc{G}(\pi_R) = G_R$. Since the eigenvalues of $G_R$ are uniformly bounded away from zero and infinity, 
the $T_R$ can be chosen to have determinants uniformly bounded away from zero and infinity (which
guarantees that that the $\vert H(\pi_R) \vert$ are uniformly bounded), and $\pi_R$ can be taken to have uniformly bounded support.
Then finally we can define a measure $\nu_R := \tau_R^{-1} \pi_R + (1 - \tau_R^{-1}) \mu_R$, so $\mc{G}(\nu_R) = G$ and $\nu_R$ is compactly supported.

For the proof it suffices to show that 
\begin{equation}
\label{eq:compactLimit}
H(\nu_R) - \int U \,\ud\nu_R \ra H(\mu) - \int U \,\ud\mu
\end{equation}
as $R \ra \infty$.

By the weak growth condition (Definition \ref{def:Ugrowth}), we can choose a constant $C$ such that $\widetilde{U}$ defined by $\widetilde{U}(x) := C(1+\Vert x\Vert ^2) + U(x)$ satisfies $\widetilde{U}(x) \geq \Vert x\Vert^2$. Now
\[
\int (1+\Vert x\Vert ^2) \,\ud\mu_R \ra \int (1+\Vert x\Vert ^2) \,\ud\mu < +\infty
\]
by monotone convergence together with the fact that $Z_R \ra 1$. Furthermore  
\[
\tau_R^{-1} \int (1+\Vert x\Vert ^2) \,\ud\pi_R \ra 0,
\]
so in fact 
\[
\int (1+\Vert x\Vert ^2) \,\ud\nu_R \ra \int (1+\Vert x\Vert ^2) \,\ud\mu < +\infty
\]

Therefore, without loss of generality, we can prove \ref{eq:compactLimit} under the assumption that
$U(x) \geq \Vert x\Vert^2$. But then $\int U \,\ud\mu_R \ra \int U\,\ud\mu$ by monotone convergence, and $\tau_R^{-1} \int U \,\ud\pi_R \ra 0$ since 
the $\pi_R$ have uniformly bounded support, so in fact $\int U\,\ud\nu_R \ra \int U\,\ud\mu$.

Then we need only show that $H(\nu_R) \ra H(\mu)$. Here one verifies from the construction that $\nu_R$
converges weakly to $\mu$, 
and moreover the second moments of $\nu_R,\mu$ are uniformly bounded, 
so by Lemma \ref{lem:entropyUSC}, we have 
$\limsup_R H(\nu_R) \leq H(\mu)$.

But by the concavity of the entropy, we have $H(\nu_R) \geq \tau_R^{-1} H(\pi_R) + (1-\tau_R^{-1}) H(\mu_R)$. 
Now recall that the $\vert H(\pi_R) \vert$ are uniformly bounded in $R$, so $ \tau_R^{-1} H(\pi_R) \ra 0$. 
Thus the statement $\liminf_R H(\nu_R) \geq H(\mu)$ (and hence also  $H(\nu_R) \ra H(\mu)$) 
will follow if we can establish $H(\mu_R) \ra H(\mu)$.

Now
\[
H(\mu_R) = \log(Z_R) - Z_R^{-1}\int_{B_R (0)} \rho \,\log\rho \,\ud x.
\]
But we know $Z_R \ra 1$, so we need only show that 
\[
\int_{B_R (0)} \rho \log \rho \,\ud x \ra \int \rho \log \rho \,\ud x.
\]
From Lemma \ref{lem:entropyMoment}, the negative part of $\rho \log \rho$ is integrable. But then the fact that $H(\mu) > -\infty$ precisely means that the positive part of $\rho \log \rho$ is integrable, i.e., $\rho\log\rho$ is absolutely integrable. Then the desired fact follows from dominated convergence.
\end{proof}

%%%%%%%%%%%%%%%%%%%%%%%%%%%%%%%%%%%%%%%%%%%%%%%%%%%%%%%%%%%%%%%%%%%%%%%%%%%%%%%%
\bibliographystyle{siam}
\bibliography{lwref}

\end{document}